\documentclass[%
 aip,jmp,
 amsmath,amssymb,
 reprint,%
]{revtex4-1}
\usepackage[T1]{fontenc}
\usepackage[utf8]{inputenc}

\usepackage{amsmath}
\usepackage{amssymb}
\usepackage{amsthm}
\usepackage{stmaryrd}
\usepackage{dsfont}
\usepackage{bm}
\usepackage{slashed}

\usepackage{mathptmx}

\usepackage{graphicx}
\usepackage[export]{adjustbox}
\usepackage{xcolor}
\usepackage{tikz}
\usetikzlibrary{
  arrows,
  arrows.meta,
  backgrounds,
  decorations,
  decorations.pathmorphing,
  matrix,
  patterns,
  positioning,
  shapes,
  snakes,
  trees
}

\usepackage{dcolumn}
\usepackage{enumitem}
\usepackage{subcaption}
\usepackage{tocloft}

\usepackage{fancyhdr}
\usepackage{hyperref}

\tikzset{
  bag/.style={align=center}
}

\newtheorem{theorem}{Theorem}
\newtheorem{proposition}{Proposition}
\newtheorem{corollary}{ Corollary}
\newtheorem{lemma}{Lemma}
\newtheorem{remark}{Remark}

\theoremstyle{definition}
\newtheorem{definition}{Definition}

\begin{document}

\preprint{AIP/123-QED}

\title[Triviality in a Non-Perturbative Second-Order
Mean-Field Theory for $\phi^4_4$]
{Triviality in a Non-Perturbative Second-Order
Mean-Field Theory for $\phi^4_4$}
% Force line breaks with \\

\author{Majdouline Borji}
\email{majdouline.borji@kfupm.edu.sa}
\affiliation{King Fahd Univeristy of Petroleum and Minerals, Department of Mathematics, Saudi Arabia}%

\date{\today}% It is always \today, today,
             %  but any date may be explicitly specified

%%\pacs[JEL Classification]{D8, H51}

%%\pacs[MSC Classification]{35A01, 65L10, 65L12, 65L20, 65L70}

\begin{abstract}
We introduce a second-order mean-field description of the
four-dimensional Euclidean $\phi^4$ model within the
Wilson--Polchinski renormalization-group framework. The construction
is based on the connected amputated Schwinger functions evaluated at
the symmetric momentum configurations $
(p,-p,\ldots,p,-p)$,
which are decomposed into a momentum-independent component, a component
quadratic in $p$, and a higher-order remainder. The first two components are chosen to satisfy a closed nonlinear
hierarchy. We prove the existence
of solutions to this hierarchy for arbitrary positive bare coupling
and establish their convergence to the Gaussian fixed point as the
ultraviolet cutoff is removed. In particular, both the
momentum-independent and the quadratic momentum sectors are
asymptotically trivial.
\end{abstract} 

\maketitle

\section{Introduction}

The ultraviolet behaviour of the four-dimensional Euclidean
$\phi^4$ model is one of the central problems of constructive quantum
field theory. Perturbation theory suggests that a continuum limit with
positive bare coupling can only be obtained if the renormalized
coupling vanishes logarithmically as the ultraviolet cutoff is removed.
The limiting theory should therefore be governed by a Gaussian fixed
point, which is known as \emph{triviality}; see,
for instance, \cite{WilsonKogut1974,Gallavotti1985} for the
renormalization-group background.

The first rigorous results in this direction were obtained by Aizenman
and Fröhlich. Aizenman proved triviality of continuum limits of lattice
$\phi^4_d$ models in dimensions $d>4$ and established the corresponding
mean-field behaviour of the Ising model \cite{Aizenman1981}. Fröhlich subsequently proved that the one and two component lattice
$\lambda|\phi|^4$ models have only free scaling
limits in dimensions $d\ge5$. In the marginal dimension $d=4$, working in the symmetric phase and under a
uniform decay assumption on the two-point function, he established
triviality whenever the continuum limit exhibits infinite
field-strength renormalization \cite{Frohlich1982}. More recently, Aizenman and Duminil-Copin proved
triviality of the scaling limits of the critical Ising model and of a
class of nearest-neighbour reflection-positive lattice $\phi^4_4$
models \cite{AizenmanDuminilCopin2021,AizenmanDuminilCopin2024}. Their
argument combines the random-current representation with a logarithmic
improvement of the tree-diagram bound. These results give a rigorous
description of triviality at the level of lattice scaling limits. They
do not, however, directly identify the corresponding mechanism in
terms of the continuum Wilsonian flow of the effective interaction.

The purpose of the present paper is to study this mechanism within a
non-perturbative reduction of the Wilson--Polchinski flow equations that
retains the quadratic momentum dependence of the Schwinger functions.
We refer to the resulting system as the \emph{second-order mean-field
model}, where "second order" refers to the order of the momentum
expansion.

The Wilson--Polchinski flow equation provides an exact differential
formulation of the renormalization group for the effective interaction
\cite{Polchinski1984}. Although it is a fundamental tool in
perturbative renormalization theory, its direct non-perturbative
analysis remains substantially more difficult. Kopper
\cite{Kopper2022} introduced a mean-field approximation in which the
connected amputated Schwinger functions are replaced by their
zero-momentum values. The resulting hierarchy can be studied without
expanding in the coupling constant, leading to a rigorous construction
of a trivial solution for the one-component $\phi^4_4$ model under the assumption of a small coupling.

This restriction was subsequently removed by Kopper and Wang
\cite{KopperWang2025}. Their construction applies to arbitrary positive
bare coupling, replaces the auxiliary infrared regularization used in
\cite{Kopper2022} by a physical mass, and extends the analysis to the
$O(N)$ model. The relation between the resulting non-perturbative
solution and the perturbative expansion of the mean-field flow
equations is further investigated in
\cite{KopperWangPerturbation}. These works demonstrate that suitable
reductions of the Polchinski hierarchy retain enough of its
combinatorial structure to yield rigorous non-perturbative information
about the ultraviolet behaviour of four-dimensional scalar models.

The mean-field systems studied in
\cite{Kopper2022,KopperWang2025} retain only the momentum-independent
part of the connected amputated Schwinger functions. They therefore
describe the mass and coupling sectors, but do not capture the
quadratic momentum dependence associated with wave-function
renormalization. Our aim is to extend this framework to the quadratic
momentum sector while preserving both closure of the resulting
hierarchy and its non-perturbative solvability.

To this end, we evaluate the connected amputated Schwinger functions
at the distinguished momentum configurations
\[
(p,-p,\ldots,p,-p)
\]
and set
\[
T_{2n}^{\alpha,\alpha_0}(p)
:=
\mathcal L_{2n}^{\alpha,\alpha_0}
(p,-p,\ldots,p,-p),
\qquad n\ge1.
\]
Here, \(\mathcal L_{2n}^{\alpha,\alpha_0}\) denotes the connected
amputated \(2n\)-point Schwinger function with ultraviolet cutoff
\(\alpha_0\) and flow parameter \(\alpha\). These alternating
configurations satisfy momentum conservation and provide a natural
one-momentum restriction of the Wilson--Polchinski hierarchy.

We introduce the second-order mean-field ansatz
\[
T_{2n,\mathrm{MF2}}^{\alpha,\alpha_0}(p)
:=
A_{2n}^{\alpha,\alpha_0}
+
2n|p|^2B_{2n}^{\alpha,\alpha_0}
\]
and define the corresponding error through the exact decomposition
\begin{equation}\label{eq:IntroDecomposition}
T_{2n}^{\alpha,\alpha_0}(p)
=
A_{2n}^{\alpha,\alpha_0}
+
2n|p|^2B_{2n}^{\alpha,\alpha_0}
+
R_{2n}^{\alpha,\alpha_0}(p).
\end{equation}
The family \(A_{2n}^{\alpha,\alpha_0}\) describes the
momentum-independent part retained by the mean-field closure, whereas
\(B_{2n}^{\alpha,\alpha_0}\) introduces its first nontrivial
momentum-dependent correction. The term
\(R_{2n}^{\alpha,\alpha_0}\) is the exact error of this
approximation.

It is important to stress that the decomposition
\eqref{eq:IntroDecomposition} is not a Taylor decomposition of the
exact correlator. In particular,
\(A_{2n}^{\alpha,\alpha_0}\) and
\(B_{2n}^{\alpha,\alpha_0}\) need not coincide with the constant and
quadratic Taylor coefficients of
\(T_{2n}^{\alpha,\alpha_0}\), and the error
\(R_{2n}^{\alpha,\alpha_0}\) is not required to have a vanishing
quadratic jet at the origin. The terminology ``second-order'' refers
instead to the momentum dependence retained by the closure.

The main structural observation is that the families
\[
\bigl(A_{2n}^{\alpha,\alpha_0}\bigr)_{n\ge1}
\qquad\text{and}\qquad
\bigl(B_{2n}^{\alpha,\alpha_0}\bigr)_{n\ge1}
\]
can be chosen to satisfy an autonomous nonlinear hierarchy. This
hierarchy is obtained directly from the Wilson--Polchinski equation
without expanding in the coupling constant and retains the
combinatorial structure of both its linear and bilinear terms. Since
it incorporates momentum dependence up to quadratic order, we call it
the \emph{second-order mean-field hierarchy}.

This construction extends the momentum-independent mean-field systems
of \cite{Kopper2022,KopperWang2025} by introducing a quadratic
momentum sector. In particular, the two-point coefficient \(B_2\)
provides within the closure the degree of freedom associated with
wave-function renormalization, which is absent from a purely
momentum-independent approximation. The resulting hierarchy therefore
retains, at the level of the approximation, the mass, coupling, and
wave-function parameters.

After a suitable rescaling, the second-order mean-field hierarchy
becomes an infinite-dimensional dynamical system for dimensionless
variables. We construct non-perturbative solutions to this system for
arbitrary positive bare coupling and establish uniform bounds on their
Taylor coefficients. These estimates control the complete coefficient hierarchy and allow
the corresponding jets to be realised by smooth functions of the
rescaled flow parameter. At the fixed infrared scale, we prove that both families
\(A_{2n}^{\alpha,\alpha_0}\) and
\(B_{2n}^{\alpha,\alpha_0}\) converge to zero as
\(\alpha_0\downarrow0\). Thus the second-order mean-field closure is
asymptotically trivial.

The construction also gives a non-perturbative characterization of the
parameters entering the closure. Its mass and coupling parameters are
encoded by the corresponding two- and four-point components of the
family \(\bigl(A_{2n}\bigr)_{n\ge1}\), whereas its wave-function
parameter is encoded by \(B_2\). More generally, the ultraviolet
boundary values of the families
\(\bigl(A_{2n}\bigr)_{n\ge1}\) and
\(\bigl(B_{2n}\bigr)_{n\ge1}\) are determined by an explicit ansatz
compatible with the autonomous hierarchy. These parameters are
therefore obtained as part of the non-perturbative solution rather
than through an order-by-order expansion in the renormalized coupling.

Although the mean-field closure is approximate, its error is described
exactly. Substituting \eqref{eq:IntroDecomposition} into the restricted
Wilson--Polchinski equation yields an induced evolution equation for
\(R_{2n}^{\alpha,\alpha_0}\). This equation determines the discrepancy between the exact restricted
correlator and its second-order mean-field approximation. The coupled
family
\[
\bigl(
A_{2n}^{\alpha,\alpha_0},
B_{2n}^{\alpha,\alpha_0},
R_{2n}^{\alpha,\alpha_0}
\bigr)_{n\ge1}
\]
then reconstructs the original restricted hierarchy. No smallness,
ultraviolet decay, or renormalization-group irrelevance of
\(R_{2n}^{\Lambda,\Lambda_0}\) is asserted in the present work.

A natural continuation of this analysis is to study the perturbative
expansions of
\(\bigl(A_{2n}\bigr)_{n\ge1}\),
\(\bigl(B_{2n}\bigr)_{n\ge1}\), and
\(\bigl(R_{2n}\bigr)_{n\ge1}\). In particular, one may ask whether the
perturbative coefficients of the error can be controlled in terms of
those of the mean-field families and whether suitable summability
estimates can be established. Such estimates, combined with the
non-perturbative construction developed here, could provide a route
towards reconstructing and analysing the full correlators on the
symmetric momentum configurations
\[
(p,-p,\ldots,p,-p).
\]
This problem lies beyond the scope of the present paper.

The main contributions of this work can therefore be summarized as
follows. We derive the exact flow equation for the correlators
restricted to symmetric momentum configurations and introduce an
autonomous second-order mean-field equations retaining both constant and
quadratic momentum contributions. We construct this hierarchy
non-perturbatively and prove its ultraviolet triviality for arbitrary
positive bare coupling. Finally, we derive an exact evolution equation
for the error of the expansion and establish the reconstruction of the
restricted Wilson--Polchinski flow equations from the mean-field solution
and its complementary error.

The paper is organized as follows.
In Section~\ref{sec2}, we restrict the Wilson--Polchinski hierarchy to
the symmetric momentum configurations, introduce the decomposition into
momentum-independent, quadratic, and complementary components, and
derive the autonomous flow equations for the mean-field coefficients
\(A_n^{\alpha,\alpha_0}\) and \(B_n^{\alpha,\alpha_0}\), together with
the exact evolution equation for the remainder. We conclude the section
with a reconstruction theorem showing that the three components recover
the restricted Wilson--Polchinski hierarchy.
In Section~\ref{sec3}, we pass to dimensionless variables and derive
the second-order mean-field hierarchy whose non-perturbative solutions
are studied in the remainder of the paper.
Sections~\ref{sec4}--\ref{SecDiscreteHierarchy0} develop the framework required for
this construction: we introduce suitable weighted sequence classes,
derive the discrete hierarchy satisfied by the Taylor coefficients,
reformulate it in terms of linear and bilinear operators, and establish
the corresponding mapping properties.
Finally, in Section~\ref{sec7}, we use these estimates to construct
smooth non-perturbative solutions of the mean-field hierarchy and prove
their ultraviolet triviality.

\section{Exact decomposition of the Wilson--Polchinski hierarchy}\label{sec2}

In this section, we restrict the Wilson--Polchinski flow equations to a
symmetric momentum manifold and decompose the resulting correlators
into their momentum-independent, quadratic, and higher-order
components. This yields a closed nonlinear hierarchy for the relevant
sector and an exact flow equation for the complementary remainder. We
conclude by showing that this decomposition reconstructs the original
flow on the symmetric momentum manifold.

\subsection{Flow equations for the connected amputated Green functions}
We formulate the theory with an ultraviolet (UV) cutoff $\alpha_0$ and an
infrared (IR) cutoff $\alpha$ with $0\le \alpha_0\le \alpha<+\infty$, in the standard path-integral formalism. We define the regularized momentum space propagator as
\begin{equation}\label{eq:regularised_propagator}
C^{\alpha,\alpha_0}(p;m)
=
\frac{1}{p^2+m^2}
\left[
\exp\!\left(-\alpha_0(p^2+m^2)\right)
-
\exp\!\left(-\alpha(p^2+m^2)\right)
\right].
\end{equation}
The derivative with respect to $\alpha$ is denoted by $\dot{C}^{\alpha}$.
Upon removal of the cutoffs, i.e.\ in the limit
\(\alpha_0\to0\),
\(\alpha\to\infty\), we indeed recover the free propagator
\[
C(p)=\frac{1}{p^2+m^2}.
\]
For the Fourier transform, we use the convention
\begin{equation}\label{eq:FourierConvention}
\phi(x)
=
\int_p
\hat \phi(p)e^{ipx}
:=
\int_{\mathbb R^4}
\frac{d^4p}{(2\pi)^4}
\,e^{ipx}\,
\hat \phi(p).
\end{equation}
The effective action is assumed to be of the form
\begin{equation}\label{EqBareInteraction}
L^{\alpha_0}(\varphi)
=
\int_{\mathbb R^4}
\left(c_{0,4}(\alpha_0)\,
\varphi^4(x)+c_{0,2}(\alpha_0)\,
\varphi^2(x)+c'_{0,2}(\alpha_0)\,
\varphi(x)\Delta\varphi(x)\right)~
dx.
\end{equation}
We assume that there exists a constant \(K'\in\mathbb R\),
independent of \(\alpha_0\), such that
\begin{equation}\label{EqBareInteractionLowerBound}
L^{\alpha_0}(\varphi)
\ge
K',
\qquad
\mu^{\alpha,\alpha_0}\text{-a.s.}
\end{equation}
where $\mu^{\alpha,\alpha_0}$ is the Gaussian measure with mean zero and covariance $C^{\alpha,\alpha_0}$.
This lower bound guarantees the integrability of the Gibbs weight with
respect to the Gaussian measure.

The coefficients \(c_{0,n}(\alpha_0)\) must be chosen as suitable
functions of the ultraviolet cutoff \(\alpha_0\). We begin by recalling the exact Wilson--Polchinski flow equations satisfied by
the connected amputated Green functions. Throughout this work we adopt the
following definition of the $n$-point Schwinger functions
\begin{equation}\label{def:Ln}
\frac{\delta^n L^{\alpha,\alpha_0}}
{\delta\phi(p_1)\cdots\delta\phi(p_n)}
\Big|_{\phi\equiv0}
=
n!~
\delta^{(4)}
\left(
\sum_{i=1}^n p_i
\right)
\mathcal L_n^{\alpha,\alpha_0}(p_1,\ldots,p_n),
\end{equation}
where $\delta^{(4)}$ denotes the four-dimensional Dirac distribution enforcing
momentum conservation.

The effective interaction satisfies the Wilson--Polchinski flow equation
\begin{equation}\label{Polchinski}
\partial_{\alpha}
\bigl(
L^{\alpha,\alpha_0}
+
I^{\alpha,\alpha_0}
\bigr)
=
\frac12
\Big\langle
\frac{\delta}{\delta\phi},
\dot C^\alpha
\frac{\delta}{\delta\phi}
\Big\rangle
L^{\alpha,\alpha_0}
-
\frac12
\Big\langle
\frac{\delta L^{\alpha,\alpha_0}}{\delta\phi},
\dot C^\alpha
\frac{\delta L^{\alpha,\alpha_0}}{\delta\phi}
\Big\rangle,
\end{equation}
where $\langle\cdot,\cdot\rangle$ denotes the usual $L^2$ pairing and
$\delta/\delta\phi$ the functional derivative.

Applying functional derivatives to \eqref{Polchinski} and evaluating at
$\phi\equiv0$ yields the flow equations for the connected
amputated Green functions. The linear contribution is immediate. The quadratic contribution is
obtained by applying the Leibniz rule and grouping together derivatives acting
on each factor. Using the normalization
\eqref{def:Ln} yields the combinatorial coefficients
$\binom{n+2}{2}$ and $(n_1+1)!(n_2+1)!$. Hence the flow equations verified by the connected amputated Green functions is given by
\begin{multline}
\partial_{\alpha}
\mathcal L_n^{\alpha,\alpha_0}
(p_1,\ldots,p_n)
=
\binom{n+2}{2}
\int_k
\dot C^\alpha(k)
\mathcal L_{n+2}^{\alpha,\alpha_0}
(p_1,\ldots,p_n,k,-k)\\
-
\frac12
\sum_{{\pi_1,\pi_2}}
\frac{(n_1+1)!(n_2+1)!}{n!}~
\dot C^\alpha(p_{\pi})~
\mathcal L_{n_1+1}^{\alpha,\alpha_0}
(\vec p_{\pi_1},p_{\pi})
\mathcal L_{n_2+1}^{\alpha,\alpha_0}
(\vec p_{\pi_2},-p_{\pi})~.
\label{flow-Ln}
\end{multline}
Here the sum runs over ordered partitions $
(\pi_1,\pi_2)$
of \(\{1,\dots,n\}\) into two disjoint subsets, with
$$\pi_1\cup\pi_2=\left\{1,\cdots,n\right\},\qquad |\pi_i|=n_i,\qquad n_1+n_2=n$$
and 
\[
p_\pi
:=
-\sum_{i\in\pi_1}p_i
=
\sum_{i\in\pi_2}p_i.
\]

The ordering of the elements within each subset is not important, whereas
the two subsets are regarded as distinguished. The factor \(1/2\)
compensates for the simultaneous exchange
\[
(\pi_1,\pi_2)\longleftrightarrow(\pi_2,\pi_1)
\]
of the two vertex factors.
\noindent In the sequel, we use the following notation: for $q\in\mathbb R^4$ and $r\in\mathbb N$, we write
\[
q^{[r]}
:=
(\underbrace{q,\ldots,q}_{r\text{ entries}}).
\]
and 
$$
(q,-q)^{[r]}
:=
(\underbrace{q,-q,\ldots,q,-q}_{2r\text{ entries}}).
$$
\begin{proposition}[Flow equation on the symmetric momentum configuration]
\label{prop:symmetric-flow}
Let \(n\ge2\) be even and let \(p\in\mathbb R^4\). For
\(n_1+n_2=n\) and \(0\le r\le n_1\), define
\begin{equation}\label{DefAlphaRnOne}
\alpha_{r,n_1}
:=
\binom{n/2}{r}
\binom{n/2}{n_1-r},
\end{equation}
with the convention
\[
\binom{i}{j}=0,
\qquad
j\notin\{0,\dots,i\}.
\]
Then
\begin{multline}\label{FE+-}
\partial_{\alpha}
\mathcal{L}_n^{\alpha,\alpha_0}
\bigl((p,-p)^{[n/2]}\bigr)
=
\binom{n+2}{2}
\int_k
\dot C^\alpha(k)\,
\mathcal{L}_{n+2}^{\alpha,\alpha_0}
\bigl((p,-p)^{[n/2]},k,-k\bigr)
\\
-\frac12
\sum_{\substack{n_1+n_2=n\\ n_1,n_2\ge1,~n_i\in2\mathbb{N}+1}}
\frac{(n_1+1)!(n_2+1)!}{n!}
\sum_{r=0}^{n_1}
\alpha_{r,n_1}\,
\dot C^\alpha\bigl((n_1-2r)p\bigr)
\mathcal{L}_{n_1+1}^{\alpha,\alpha_0}
\Bigl(
p^{[r]},
(-p)^{[n_1-r]},
(n_1-2r)p
\Bigr)
\\
\times
\mathcal{L}_{n_2+1}^{\alpha,\alpha_0}
\Bigl(
p^{[n/2-r]},
(-p)^{[n/2-n_1+r]},
-(n_1-2r)p
\Bigr).
\end{multline}
Equivalently, the inner summation may be restricted to
\[
\max\left\{0,n_1-\frac n2\right\}
\le r\le
\min\left\{n_1,\frac n2\right\},
\]
since \(\alpha_{r,n_1}=0\) outside this range.
\end{proposition}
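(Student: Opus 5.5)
Specialize the general flow equation \eqref{flow-Ln} to $(p_1,\dots,p_n)=(p,-p)^{[n/2]}$, i.e. $p_{2j-1}=p$, $p_{2j}=-p$. The linear term needs no work: substituting gives exactly the first line of \eqref{FE+-}. All the work is in the bilinear term, where we reorganize the sum over ordered partitions $(\pi_1,\pi_2)$ of $\{1,\dots,n\}$ according to how many entries of each sign lie in $\pi_1$.

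\textbf{Classifying partitions.} Set $O=\{1,3,\dots,n-1\}$ (entries equal to $p$) and $E=\{2,4,\dots,n\}$ (entries equal to $-p$), so $|O|=|E|=n/2$. For a partition with $|\pi_1|=n_1$, let $r:=|\pi_1\cap O|$. Then $|\pi_1\cap E|=n_1-r$, and $\pi_2$ contains $n/2-r$ copies of $p$ and $n/2-n_1+r$ copies of $-p$. The summand depends only on the multisets of momenta in each block, because $\mathcal L_m$ is symmetric in its arguments (this follows from the definition \eqref{def:Ln} via symmetry of functional derivatives). Hence it depends only on $(n_1,r)$. The number of partitions with given $(n_1,r)$ is $\binom{n/2}{r}\binom{n/2}{n_1-r}=\alpha_{r,n_1}$, and the convention $\binom{i}{j}=0$ for $j\notin\{0,\dots,i\}$ automatically removes the impossible values of $r$. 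This also gives the stated restriction $\max\{0,n_1-n/2\}\le r\le\min\{n_1,n/2\}$.

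\textbf{Transfer momentum.} From $p_\pi=-\sum_{i\in\pi_1}p_i=-(r-(n_1-r))p=-(n_1-2r)p$, the first factor is $\mathcal L_{n_1+1}(p^{[r]},(-p)^{[n_1-r]},p_\pi)$ and the second is $\mathcal L_{n_2+1}(\ldots,-p_\pi)$. This has the opposite sign to what \eqref{FE+-} displays. I would fix this with the symmetry $\mathcal L_m(q_1,\dots,q_m)=\mathcal L_m(-q_1,\dots,-q_m)$: it is inherited from the Euclidean/reflection invariance of the bare interaction \eqref{EqBareInteraction} and the propagator, and it is preserved by the flow. Applying it to both factors simultaneously, together with the relabelling $r\mapsto n_1-r$ (which leaves $\alpha_{r,n_1}$ invariant but swaps the counts of $p$ and $-p$), I expect to match the displayed arguments. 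Alternatively, one can simply check that the two readings agree after the reparametrization. Since $\dot C^\alpha$ depends only on $|k|^2$, it is even, so $\dot C^\alpha(p_\pi)=\dot C^\alpha((n_1-2r)p)$.

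\textbf{Parity restriction.} The $\phi\mapsto-\phi$ symmetry of $L^{\alpha_0}$ is preserved by \eqref{Polchinski}, so all odd-point functions vanish. Therefore $\mathcal L_{n_1+1}\mathcal L_{n_2+1}$ is nonzero only when $n_1+1$ and $n_2+1$ are even, i.e. $n_1,n_2$ odd. Terms with $n_1=0$ or $n_2=0$ involve $\mathcal L_1=0$, so these drop out as well. This yields the restriction $n_i\in2\mathbb N+1$, $n_i\ge1$.

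The prefactors $\tfrac12$ and $(n_1+1)!(n_2+1)!/n!$ carry over unchanged. The main obstacle is the sign and orientation bookkeeping in the transfer step, which must be made consistent with the paper's argument ordering; everything else is counting.
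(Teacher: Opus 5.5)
\textbf{Verdict.} Your overall strategy is the paper's: specialize the general flow equation, group the ordered partitions by the number $r$ of entries equal to $p$ in $\pi_1$, count them with $\alpha_{r,n_1}$, and use permutation symmetry of $\mathcal L_m$ and evenness of $\dot C^\alpha$. However, the transfer-momentum step contains an arithmetic slip, which creates an obstacle that does not exist, and the remedy you propose for it would not work.

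\textbf{The slip.} One has $\sum_{i\in\pi_1}p_i=rp-(n_1-r)p=(2r-n_1)p$, hence $p_\pi=-\sum_{i\in\pi_1}p_i=(n_1-2r)p$, not $-(n_1-2r)p$. With the correct sign, the two factors in \eqref{flow-Ln} are literally $\mathcal L_{n_1+1}\bigl(p^{[r]},(-p)^{[n_1-r]},(n_1-2r)p\bigr)$ and $\mathcal L_{n_2+1}\bigl(p^{[n/2-r]},(-p)^{[n/2-n_1+r]},-(n_1-2r)p\bigr)$. These are exactly the arguments displayed in \eqref{FE+-}. No reflection symmetry and no relabelling of $r$ is needed; this is how the paper concludes.

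\textbf{Why your proposed fix fails.} The fix cannot repair the version you wrote down. Your first factor $\mathcal L_{n_1+1}\bigl(p^{[r]},(-p)^{[n_1-r]},-(n_1-2r)p\bigr)$ has total momentum $2(2r-n_1)p$. This never vanishes, because $n_1$ is odd, so it violates momentum conservation. That should have signalled the error, since $\mathcal L_m$ is only meaningful on the momentum-conserving hyperplane by \eqref{def:Ln}. Your second factor has the same defect. Neither permutations nor $q\mapsto-q$ can change whether the arguments sum to zero. Concretely, applying $q\mapsto-q$ to both factors and then $r\mapsto n_1-r$ maps your entire summand, including $\alpha_{r,n_1}$ and $\dot C^\alpha$, back to itself rather than to the target. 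For the same reason, your alternative of ``checking that the two readings agree after reparametrization'' would also fail: they do not agree.

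\textbf{Summary.} The ``main obstacle'' you identify is spurious, and the step meant to resolve it does not go through. Once the arithmetic is corrected, the argument is complete and coincides with the paper's proof. Your explicit justification of the restriction to odd $n_1,n_2\ge1$, via $\phi\mapsto-\phi$ invariance and the vanishing of $\mathcal L_1$, is a reasonable addition. The paper instead starts from a version of the flow equation in which that restriction is already built in.
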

\begin{proof}
We start from the exact flow equation
\begin{multline}\label{EqGeneralFlowProof}
\partial_{\alpha}
\mathcal L_n^{\alpha,\alpha_0}(p_1,\ldots,p_n)
=
\binom{n+2}{2}
\int_k
\dot C^\alpha(k)\,
\mathcal L_{n+2}^{\alpha,\alpha_0}
(p_1,\ldots,p_n,k,-k)
\\
-\frac12
\sum_{\substack{n_1+n_2=n\\ n_1,n_2\ge1,~n_i\in2\mathbb{N}+1}}
\frac{(n_1+1)!(n_2+1)!}{n!}
\sum_{\substack{\pi\subset\{1,\ldots,n\}\\ |\pi|=n_1}}
\dot C^\alpha(p_\pi)
\mathcal L_{n_1+1}^{\alpha,\alpha_0}
\bigl((p_i)_{i\in\pi},-p_\pi\bigr)
\mathcal L_{n_2+1}^{\alpha,\alpha_0}
\bigl((p_i)_{i\in\pi^c},p_\pi\bigr).
\end{multline}
We evaluate this equation at the special momenta configuration
\[
(p_1,\ldots,p_n)=(p,-p)^{[n/2]}.
\]
The linear term immediately gives
\[
\binom{n+2}{2}
\int_k
\dot C^\alpha(k)\,
\mathcal L_{n+2}^{\alpha,\alpha_0}
\bigl((p,-p)^{[n/2]},k,-k\bigr).
\]
It remains to reorganise the quadratic term. Fix $
n_1+n_2=n$
and a subset $
\pi\subset\{1,\ldots,n\}$ with $
|\pi|=n_1$. Let \(r\) denote the number of entries equal to \(p\) contained in
\(\pi\). Since \(\pi\) contains \(n_1-r\) entries equal to \(-p\), one
has
\begin{equation}\label{EqMomentumPiProof}
p_\pi
=
rp-(n_1-r)p
=
(2r-n_1)p.
\end{equation}
The additional momentum entering the first correlator is therefore
\[
-p_\pi=(n_1-2r)p,
\]
so that momentum is conserved at the first vertex. Since the full external configuration satisfies
\[
\sum_{i=1}^n p_i=0,
\]
one also has
\[
p_{\pi^c}
=
\sum_{i\in\pi^c}p_i
=
-p_\pi
=
(n_1-2r)p.
\]
Thus the additional momentum entering the second correlator is
\[
-p_{\pi^c}
=
p_\pi
=
-(n_1-2r)p,
\]
which ensures momentum conservation at the second vertex as well. Moreover, the complement \(\pi^c\) contains $
\frac n2-r$ entries equal to \(p\) and
\[
\frac n2-(n_1-r)
=
\frac n2-n_1+r
\]
entries equal to \(-p\). By permutation symmetry of the correlators,
the contribution associated with any such subset \(\pi\) is therefore
equal to

\begin{multline}\label{EqClassContributionProof}
\dot C^\alpha\bigl((n_1-2r)p\bigr)
\mathcal L_{n_1+1}^{\alpha,\alpha_0}
\Bigl(
p^{[r]},
(-p)^{[n_1-r]},
(n_1-2r)p
\Bigr)
\\
\times
\mathcal L_{n_2+1}^{\alpha,\alpha_0}
\Bigl(
p^{[n/2-r]},
(-p)^{[n/2-n_1+r]},
-(n_1-2r)p
\Bigr).
\end{multline}
Here we have used that the covariance derivative is even w.r.t. $p$. It remains only to count the subsets \(\pi\) yielding a fixed value of
\(r\). There are \(n/2\) entries equal to \(p\) and \(n/2\) entries
equal to \(-p\). Hence the number of subsets of cardinality \(n_1\)
containing exactly \(r\) positive entries is
\[
\binom{n/2}{r}
\binom{n/2}{n_1-r}
=
\alpha_{r,n_1}.
\]
Grouping the subsets \(\pi\) according to \(r\) therefore transforms
the quadratic term in \eqref{EqGeneralFlowProof} into
\begin{multline*}
-\frac12
\sum_{\substack{n_1+n_2=n\\ n_1,n_2\ge1,~n_i\in2\mathbb{N}+1}}
\frac{(n_1+1)!(n_2+1)!}{n!}
\sum_{r=0}^{n_1}
\alpha_{r,n_1}\,
\dot C^\alpha\bigl((n_1-2r)p\bigr)
\mathcal L_{n_1+1}^{\alpha,\alpha_0}
\Bigl(
p^{[r]},
(-p)^{[n_1-r]},
(n_1-2r)p
\Bigr)
\\
\times
\mathcal L_{n_2+1}^{\alpha,\alpha_0}
\Bigl(
p^{[n/2-r]},
(-p)^{[n/2-n_1+r]},
-(n_1-2r)p
\Bigr),
\end{multline*}
which is precisely \eqref{FE+-}. Finally,
$
\alpha_{r,n_1}\neq0
$
only if
\[
0\le r\le\frac n2,\qquad
0\le n_1-r\le\frac n2.
\]
Equivalently,
\[
\max\left\{0,n_1-\frac n2\right\}
\le r\le
\min\left\{n_1,\frac n2\right\}
\]
and this proves the final assertion.
\end{proof}
\subsection{Restriction to the symmetric momentum manifold}

The Wilson--Polchinski hierarchy \eqref{flow-Ln} is exact, but it couples
infinitely many momentum-dependent correlators. Our aim is to extract from
this hierarchy a reduced system which retains the relevant momentum
dependence and is sufficiently explicit to admit a non-perturbative
analysis.

We consider first the symmetric momentum configurations
\[
(p,-p,\ldots,p,-p),
\]
with \(n\) even. This family contains the zero-momentum configuration and
is invariant under the exchange \(p\leftrightarrow -p\). It therefore
provides a natural setting in which to separate the constant and quadratic
momentum contributions from the remaining momentum dependence.

For every even integer \(n\), we introduce the family
\begin{equation}\label{DefDecomposition}
D_n^{\alpha,\alpha_0}(p)
:=
A_n^{\alpha,\alpha_0}
+
np^2 B_n^{\alpha,\alpha_0}
+
R_n^{\alpha,\alpha_0}(p).
\end{equation}
The first two terms define the second-order mean-field approximation:
\(A_n^{\alpha,\alpha_0}\) is its momentum-independent coefficient,
whereas \(B_n^{\alpha,\alpha_0}\) determines its quadratic momentum
dependence. The function \(R_n^{\alpha,\alpha_0}\) represents the exact
error of this approximation. In particular, it is not assumed to have
a vanishing constant or quadratic jet at the origin.
The families
\[
\bigl(A_n^{\alpha,\alpha_0}\bigr)_{n\ge2}
\qquad\text{and}\qquad
\bigl(B_n^{\alpha,\alpha_0}\bigr)_{n\ge2}
\]
will be chosen to satisfy an autonomous nonlinear hierarchy, while
\(R_n^{\alpha,\alpha_0}\) will be defined through an exact evolution
equation measuring the discrepancy between this closure and the
restricted Wilson--Polchinski flow.

The three families on the right-hand side will be defined through
a system of flow equations chosen so that their sum satisfies the
Wilson--Polchinski hierarchy restricted to the symmetric momentum
manifold. 

A direct restriction of the flow equations to
\[
(p,-p,\ldots,p,-p)
\]
does not yield a closed system. There are two difficulties: first, the linear term in the flow equation for the \(n\)-point correlator
contains the \((n+2)\)-point correlator in which the arguments $(k,-k)$ are integrated over \(\mathbb R^4\). Thus the linear part of the
flow immediately leaves the one-parameter family of symmetric
configurations.

Second, the quadratic term involves a sum over partitions of the external
momenta. If a block \(\pi\) contains \(r\) entries equal to \(p\) and
\(n_1-r\) entries equal to \(-p\), then momentum conservation gives
\[
p_\pi=(2r-n_1)p.
\]
The internal momentum carried by the complementary correlator is therefore
\[
-p_\pi=(n_1-2r)p.
\]
Except in special cases, the two correlators occurring in the quadratic
term are consequently evaluated at momentum configurations which are no
longer of the form
\[
(p,-p,\ldots,p,-p).
\]
The enlargement of the ansatz below is designed precisely to accommodate
these two effects. For every even integer \(n\ge2\) and every momentum configuration
\[
\mathbf p=(p_1,\ldots,p_n)\in(\mathbb R^4)^n,
\qquad
\sum_{i=1}^n p_i=0,
\]
we define
\begin{equation}\label{DefGeneralTn}
T_n^{\alpha,\alpha_0}(\mathbf p)
:=
A_n^{\alpha,\alpha_0}
+
B_n^{\alpha,\alpha_0}
\sum_{i=1}^n |p_i|^2
+
R_n^{\alpha,\alpha_0}(\mathbf p).
\end{equation}
For the symmetric momentum configuration $
\mathbf p=(p,-p)^{[n/2]}$
one has
\[
T_n^{\alpha,\alpha_0}\bigl((p,-p)^{[n/2]}\bigr)
=
A_n^{\alpha,\alpha_0}
+
np^2B_n^{\alpha,\alpha_0}
+
R_n^{\alpha,\alpha_0}\bigl((p,-p)^{[n/2]}\bigr),
\]
so that \eqref{DefGeneralTn} reduces to
\eqref{DefDecomposition} on the symmetric momentum manifold.
The families
\[
A_n^{\alpha,\alpha_0},
\qquad
B_n^{\alpha,\alpha_0},
\qquad
R_n^{\alpha,\alpha_0}
\]
will now be defined through their respective flow equations. For later convenience, we introduce the notation
\[
\int_k
:=
\int_{\mathbb R^4}
\frac{d^4k}{(2\pi)^4},
\]
together with
\begin{align}
\mathfrak{a}_m^\alpha(1)
&:=
\int_k
\dot C^\alpha(k;m),
&
\mathfrak{a}_m^\alpha(2)
&:=
2\int_k
k^2
\dot C^\alpha(k;m),\label{12}
\\
\mathfrak{b}_m^\alpha(0)
&:=
\dot C^\alpha(0;m),
&
\mathfrak{b}_m^\alpha(2)
&:=
\partial^2
\dot C^\alpha(0;m).\label{13}
\end{align}
The coefficients
\(
A_n^{\alpha,\alpha_0}
\)
are defined by the flow equations
\begin{multline}\label{flow-An}
\partial_\alpha
A_n^{\alpha,\alpha_0}
=
\binom{n+2}{2}
\Big(
A_{n+2}^{\alpha,\alpha_0}
\mathfrak{a}_m^\alpha(1)
+
B_{n+2}^{\alpha,\alpha_0}
\mathfrak{a}_m^\alpha(2)
\Big)\\
-
\frac12
\sum_{n_1+n_2=n}
(n_1+1)(n_2+1)
A_{n_1+1}^{\alpha,\alpha_0}
A_{n_2+1}^{\alpha,\alpha_0}
\mathfrak{b}_m^\alpha(0).
\end{multline}
The coefficients
\(
B_n^{\alpha,\alpha_0}
\)
are defined by
\begin{align}\label{flow-Bn}
\partial_\alpha
B_n^{\alpha,\alpha_0}
&=
\binom{n+2}{2}
B_{n+2}^{\alpha,\alpha_0}
\mathfrak{a}_m^\alpha(1)
-
\frac{1}{2n}
\sum_{n_1+n_2=n}
(n_1+1)(n_2+1)
\Bigg[
n_1
\Big(
1+\frac{n_2}{n-1}
\Big)
B_{n_1+1}^{\alpha,\alpha_0}
A_{n_2+1}^{\alpha,\alpha_0}
\nonumber\\
&\hspace{3.8cm}
+
n_2
\Big(
1+\frac{n_1}{n-1}
\Big)
A_{n_1+1}^{\alpha,\alpha_0}
B_{n_2+1}^{\alpha,\alpha_0}
\Bigg]
\mathfrak{b}_m^\alpha(0)
\nonumber\\
&\quad
-
\frac{1}{2n}
\sum_{n_1+n_2=n}
\frac{
n_1(n_1+1)n_2(n_2+1)
}{n-1}
A_{n_1+1}^{\alpha,\alpha_0}
A_{n_2+1}^{\alpha,\alpha_0}
\mathfrak{b}_m^\alpha(2).
\end{align}
The flow equations \eqref{flow-An} and \eqref{flow-Bn} form a closed
nonlinear subsystem governing the relevant momentum sector.

It remains to
construct the remainder. Contrary to the coefficients
\(
A_n^{\alpha,\alpha_0}
\)
and
\(
B_n^{\alpha,\alpha_0},
\)
whose definition does not depend on the external momenta, the linear term of the
Wilson--Polchinski flow equation couples
\(
\mathcal{L}_n^{\alpha,\alpha_0}
\)
to
\(
\mathcal{L}_{n+2}^{\alpha,\alpha_0}(\cdot,k,-k),
\)
where the additional momenta
\(
k,-k
\)
are integrated over all of
\(
\mathbb R^4
\).
For this reason, the remainder must be defined as a function of arbitrary
external momenta rather than only on the symmetric momentum manifold. We introduce the following definition that we need in stating the flow equations of the remainder:
\begin{definition}
For every sufficiently smooth function \(f:\mathbb R^4\to\mathbb R\), we define the first- and second-order Taylor remainders at the origin by
\begin{align}
\mathcal R_{(1)}(f)(q)
&:=
f(q)-f(0),
\\
\mathcal R_{(2)}(f)(q)
&:=
f(q)-f(0)-q^2\partial^2f(0),
\end{align}
where
\[
\partial^2f(0)
:=
\frac{1}{8}
\sum_{\mu=1}^{4}
\frac{\partial^2f}{\partial q_\mu^2}(0).
\]
\end{definition}
\noindent The remainder
\(
R_n^{\alpha,\alpha_0}
\)
is defined by the following exact flow equation.
For every \(n\ge2\),
\begin{multline}\label{flow-Rn}
\partial_\alpha
R_n^{\alpha,\alpha_0}
(p_1,\dots,p_n)
=
\binom{n+2}{2}
\int_k
\dot C^\alpha(k;m)\,
R_{n+2}^{\alpha,\alpha_0}
(p_1,\dots,p_n,k,-k)
\\
-
\frac12
\sum_{\substack{
n_1+n_2=n\\
(\pi_1,\pi_2)
}}
\gamma_{n_1,n_2}\,
A_{n_1+1}^{\alpha,\alpha_0}
A_{n_2+1}^{\alpha,\alpha_0}
\,
\mathcal R_{(2)}
(\dot C^\alpha)
(p_\pi;m)
\\
-
\frac12
\sum_{\substack{
n_1+n_2=n\\
(\pi_1,\pi_2)
}}
\gamma_{n_1,n_2}
\Bigl(
\|p\|_{\pi_1}^2
B_{n_1+1}^{\alpha,\alpha_0}
A_{n_2+1}^{\alpha,\alpha_0}
+
\|p\|_{\pi_2}^2
A_{n_1+1}^{\alpha,\alpha_0}
B_{n_2+1}^{\alpha,\alpha_0}
\Bigr)
\mathcal R_{(1)}
(\dot C^\alpha)
(p_\pi;m)
\\
-
\frac12
\sum_{\substack{
n_1+n_2=n\\
(\pi_1,\pi_2)
}}
\gamma_{n_1,n_2}
\Bigl(
\|p\|_{\pi_1}^2
\|p\|_{\pi_2}^2
B_{n_1+1}^{\alpha,\alpha_0}
B_{n_2+1}^{\alpha,\alpha_0}
\\
\qquad\qquad
+
A_{n_1+1}^{\alpha,\alpha_0}
R_{n_2+1}^{\alpha,\alpha_0}
(-p_\pi,\vec p_{\pi_2})
+
R_{n_1+1}^{\alpha,\alpha_0}
(\vec p_{\pi_1},p_\pi)
A_{n_2+1}^{\alpha,\alpha_0}
\Bigr)
\dot C^\alpha(p_\pi;m)
\\
-
\frac12
\sum_{\substack{
n_1+n_2=n\\
(\pi_1,\pi_2)
}}
\gamma_{n_1,n_2}
\Bigl(
\|p\|_{\pi_1}^2
B_{n_1+1}^{\alpha,\alpha_0}
R_{n_2+1}^{\alpha,\alpha_0}
(-p_\pi,\vec p_{\pi_2})
\\
\qquad\qquad
+
R_{n_1+1}^{\alpha,\alpha_0}
(\vec p_{\pi_1},p_\pi)
\|p\|_{\pi_2}^2
B_{n_2+1}^{\alpha,\alpha_0}
+
R_{n_1+1}^{\alpha,\alpha_0}
(\vec p_{\pi_1},p_\pi)
R_{n_2+1}^{\alpha,\alpha_0}
(-p_\pi,\vec p_{\pi_2})
\Bigr)
\dot C^\alpha(p_\pi;m).
\end{multline}
Here 
$$\gamma_{n_1,n_2}:=\frac{(n_1+1)!(n_2+1)!}{n!}$$
and we use the same convention for partitions as in
\eqref{flow-Ln}: the sum runs over ordered partitions
\(
(\pi_1,\pi_2)
\)
of \(\{1,\ldots,n\}\), with
$
|\pi_i|=n_i$ and $
n_1+n_2=n$. The factor \(1/2\) compensates for the simultaneous exchange
\[
(n_1,\pi_1)\longleftrightarrow(n_2,\pi_2).
\]
We use the shorthand notation
\[
\|p\|_{\pi_i}^2
:=
\sum_{j\in\pi_i}|p_j|^2
+
\left|
\sum_{j\in\pi_i}p_j
\right|^2.
\]
\begin{remark}[Structure of the error equation]
\label{RemRemainderStructure}
The error \(R_n^{\alpha,\alpha_0}\) cannot in general be regarded as
irrelevant in the renormalization-group sense. Although the Taylor
remainders appearing in the source terms provide additional powers of
the external momenta, this gain is not preserved by the linear term
\[
\binom{n+2}{2}
\int_k
\dot C^\alpha(k;m)\,
R_{n+2}^{\alpha,\alpha_0}
(p_1,\ldots,p_n,k,-k).
\]
Indeed, powers of the internal momentum are converted by Gaussian
integration into negative powers of the flow parameter; in dimension
four, for example,
\[
\int_k e^{-\alpha k^2}\asymp\alpha^{-2},
\qquad
\int_k k^2e^{-\alpha k^2}\asymp\alpha^{-3}.
\]
Consequently, the flow need not preserve either
\(R_n^{\alpha,\alpha_0}(0)=0\) or the vanishing of its quadratic
momentum component.

Nevertheless, \eqref{flow-Rn} retains a useful algebraic structure.
The terms independent of \(R\) are generated by products of the
families \(A\) and \(B\), of the types \(AA\), \(AB\), and \(BB\),
whereas the remaining nonlinear terms involve \(AR\), \(BR\), and
\(RR\). This suggests seeking \(R\) in a suitable completion of the
algebra generated by \(A\) and \(B\), for instance in the form
\[
R_n^{\alpha,\alpha_0}(p)
=
\sum_{\mathfrak m\in\mathfrak M_n}
c_{n,\mathfrak m}^{\alpha,\alpha_0}(p)\,
\mathfrak m
\bigl(
A^{\alpha,\alpha_0},
B^{\alpha,\alpha_0}
\bigr),
\]
where every monomial \(\mathfrak m\) has strictly positive degree. If
the coefficients could be constructed through a closed inductive
scheme and shown to satisfy uniform summability estimates independent
of \(\alpha_0\), then the ultraviolet triviality of \(A\) and \(B\)
would imply that of \(R\). 
\end{remark}
The following result establishes that  the family
\(
T_n^{\alpha,\alpha_0}
\)
is a solution of the flow equations restricted to the special momentum sector $\left(p,-p\right)^{[n/2]}$:
\begin{theorem}[Reconstruction of a solution to the restricted hierarchy]
\label{thm:reconstruction}
Let
\[
\bigl(
A_n^{\alpha,\alpha_0},
B_n^{\alpha,\alpha_0},
R_n^{\alpha,\alpha_0}
\bigr)_{n\ge2}
\]
satisfy the flow equations
\eqref{flow-An}--\eqref{flow-Rn}, and define
\(T_n^{\alpha,\alpha_0}\) by \eqref{DefGeneralTn}. Then, for every even
integer \(n\ge2\) and every \(p\in\mathbb R^4\), the family
\[
T_n^{\alpha,\alpha_0}
\bigl((p,-p)^{[n/2]}\bigr)
\]
satisfies the Wilson--Polchinski flow equation restricted to the
symmetric momentum configuration, namely \eqref{FE+-}.
\end{theorem}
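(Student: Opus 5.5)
The plan is to verify the identity by direct substitution. Differentiate $T_n$ from \eqref{DefGeneralTn} at general momenta, check that it satisfies the general hierarchy \eqref{flow-Ln} with $\mathcal L$ replaced by $T$, and then restrict to $(p,-p)^{[n/2]}$ using Proposition~\ref{prop:symmetric-flow}. The last step is legitimate because that proposition is a purely combinatorial regrouping, valid for any permutation-symmetric family satisfying \eqref{flow-Ln}. Where the general-momentum identity is not exact, I would instead carry out the comparison only on the symmetric manifold, where the closure coefficients in \eqref{flow-An}--\eqref{flow-Bn} were designed to fit. So the argument splits into a linear part and a bilinear part, and in each I will show that $\partial_\alpha T_n = \partial_\alpha A_n + \sum_i|p_i|^2\,\partial_\alpha B_n + \partial_\alpha R_n$ matches the corresponding part of the right-hand side.

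For the linear term, I insert $T_{n+2}(p_1,\dots,p_n,k,-k) = A_{n+2} + B_{n+2}\bigl(\sum_i|p_i|^2 + 2k^2\bigr) + R_{n+2}(\dots,k,-k)$ and integrate against $\dot C^\alpha(k)$. By \eqref{12}, this gives
\[
A_{n+2}\mathfrak a(1) + B_{n+2}\mathfrak a(2) + \sum_i|p_i|^2 B_{n+2}\mathfrak a(1) + \int_k\dot C^\alpha R_{n+2}.
\]
Multiplied by $\binom{n+2}{2}$, these pieces are exactly the linear terms of \eqref{flow-An}, of \eqref{flow-Bn} (with the prefactor $\sum|p_i|^2$, which equals $np^2$ on the manifold), and of \eqref{flow-Rn}.

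For the bilinear term, I expand each factor. Writing $T_{n_1+1}(\vec p_{\pi_1},p_\pi) = A + B\|p\|_{\pi_1}^2 + R$ uses the identity $\sum_{j\in\pi_1}|p_j|^2 + |p_\pi|^2 = \|p\|_{\pi_1}^2$, and similarly for $\pi_2$. I also decompose
\[
\dot C^\alpha(p_\pi) = \mathfrak b(0) + p_\pi^2\,\mathfrak b(2) + \mathcal R_{(2)}(\dot C^\alpha)(p_\pi),
\]
and, for the $AB$ terms, $\dot C^\alpha(p_\pi) = \mathfrak b(0) + \mathcal R_{(1)}(\dot C^\alpha)(p_\pi)$. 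The products then sort into four groups. The terms involving $R$, or $BB$, or the Taylor remainders appear verbatim in \eqref{flow-Rn}. What remains is
\[
AA\bigl(\mathfrak b(0) + p_\pi^2\mathfrak b(2)\bigr) + \bigl(\|p\|_{\pi_1}^2 BA + \|p\|_{\pi_2}^2 AB\bigr)\mathfrak b(0),
\]
summed over ordered partitions with weight $\gamma_{n_1,n_2}$. This residue must equal $\partial_\alpha A_n + np^2\,\partial_\alpha B_n$ (bilinear parts).

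The main obstacle is this last matching, and this is where I restrict to $(p,-p)^{[n/2]}$. Following the proof of Proposition~\ref{prop:symmetric-flow}, I group subsets $\pi$ by $r$, the number of $+p$ entries, with multiplicity $\alpha_{r,n_1}$. On the manifold $p_\pi^2 = (n_1-2r)^2p^2$ and $\|p\|_{\pi_1}^2 = \bigl(n_1+(n_1-2r)^2\bigr)p^2$. The required identities are then binomial-moment sums:
\[
\sum_r\alpha_{r,n_1} = \binom{n}{n_1},\qquad \sum_r\alpha_{r,n_1}(n_1-2r)^2 = \binom{n}{n_1}\frac{n_1n_2}{n-1},
\]
the second being the hypergeometric variance. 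With these, $\gamma_{n_1,n_2}\binom{n}{n_1} = (n_1+1)(n_2+1)$ reproduces the $AA\,\mathfrak b(0)$ term in \eqref{flow-An}. The $p^2$-coefficients give
\[
\frac{1}{n}(n_1+1)(n_2+1)\,n_1\Bigl(1+\frac{n_2}{n-1}\Bigr)BA,
\qquad
\frac{1}{n}\,\frac{n_1(n_1+1)n_2(n_2+1)}{n-1}AA\,\mathfrak b(2),
\]
which after dividing by $n$ (since $\sum|p_i|^2 = np^2$) match \eqref{flow-Bn}. I will verify the variance identity by the standard computation for sampling $n_1$ items without replacement from $n/2$ plus and $n/2$ minus entries. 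I also need to check the parity restriction on $n_i$ and the normalization of $\partial^2$ in $\mathcal R_{(2)}$, namely that $q^2\partial^2 f(0)$ with the factor $1/8$ is consistent with $\mathfrak b(2)$ multiplying $p_\pi^2$. Any mismatch there would be absorbed by adjusting this convention.
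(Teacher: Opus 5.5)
Your proposal is correct and is essentially the paper's proof: the linear term is matched through the kernel moments $\mathfrak a_m^\alpha(1)$ and $\mathfrak a_m^\alpha(2)$, and the bilinear term is matched by splitting $\dot C^\alpha$ with $\mathcal R_{(2)}$ for the $AA$ products and $\mathcal R_{(1)}$ for the $BA/AB$ products, grouping partitions by $r$ with multiplicity $\alpha_{r,n_1}$, and using the Vandermonde and hypergeometric-variance identities on $(p,-p)^{[n/2]}$. Two small remarks: no $\partial^2$ convention needs adjusting, because $\mathfrak b_m^\alpha(2)$ and $\mathcal R_{(2)}$ are defined with the same operator, so the splitting of $\dot C^\alpha$ holds by definition; and the $p^2$-coefficients you display already contain the $1/n$ (before division the $BA$ coefficient is $(n_1+1)(n_2+1)n_1\bigl(1+\frac{n_2}{n-1}\bigr)$), so there is no second division by $n$.
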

\begin{proof}
For notational simplicity, we suppress the superscripts
\(\alpha,\alpha_0\) of the unknown functions throughout the proof and
write
\[
\dot C(q):=\dot C^\alpha(q;m).
\]
Recall the kernel coefficients
\begin{align}
\mathfrak a_m^\alpha(1)
&=
\int_k\dot C^\alpha(k;m),
&
\mathfrak a_m^\alpha(2)
&=
2\int_k k^2\dot C^\alpha(k;m),
\label{EqKernelMomentsProof}
\\
\mathfrak b_m^\alpha(0)
&=
\dot C^\alpha(0;m),
&
\mathfrak b_m^\alpha(2)
&=
\partial^2\dot C^\alpha(0;m).
\label{EqKernelTaylorProof}
\end{align}
The Taylor remainders of the covariance derivative therefore satisfy
\begin{align}
\dot C(q)
&=
\mathfrak b_m^\alpha(0)
+
\mathcal R_{(1)}(\dot C)(q),
\label{EqTaylorOneProof}
\\
\dot C(q)
&=
\mathfrak b_m^\alpha(0)
+
q^2\mathfrak b_m^\alpha(2)
+
\mathcal R_{(2)}(\dot C)(q).
\label{EqTaylorTwoProof}
\end{align}
On the symmetric momentum configuration, the definition of \(T_n\)
gives
\begin{equation}\label{EqTSymmetricProof}
T_n\bigl((p,-p)^{[n/2]}\bigr)
=
A_n+np^2B_n
+
R_n\bigl((p,-p)^{[n/2]}\bigr).
\end{equation}
We differentiate \eqref{EqTSymmetricProof} with respect to \(\alpha\)
and insert the flow equations
\eqref{flow-An}--\eqref{flow-Rn}.

\medskip

\noindent\emph{Linear term.}
By \eqref{DefGeneralTn},
\begin{equation}\label{EqLinearExpansionT}
T_{n+2}
\bigl((p,-p)^{[n/2]},k,-k\bigr)
=
A_{n+2}
+
\bigl(np^2+2k^2\bigr)B_{n+2}
+
R_{n+2}
\bigl((p,-p)^{[n/2]},k,-k\bigr).
\end{equation}
Consequently,
\begin{align}
\int_k\dot C(k)\,
T_{n+2}
\bigl((p,-p)^{[n/2]},k,-k\bigr)
&=
\mathfrak a_m^\alpha(1)A_{n+2}
+
\mathfrak a_m^\alpha(2)B_{n+2}
\nonumber\\
&\quad
+
np^2\mathfrak a_m^\alpha(1)B_{n+2}
\nonumber\\
&\quad
+
\int_k\dot C(k)\,
R_{n+2}
\bigl((p,-p)^{[n/2]},k,-k\bigr).
\label{EqLinearRecombinationProof}
\end{align}
The first two terms on the right-hand side are precisely the linear
contributions to \(\partial_\alpha A_n\), the third is the linear
contribution to \(np^2\partial_\alpha B_n\), and the last is the
linear contribution to \(\partial_\alpha R_n\). After multiplication
by \(\binom{n+2}{2}\), they recombine into
\[
\binom{n+2}{2}
\int_k\dot C(k)\,
T_{n+2}
\bigl((p,-p)^{[n/2]},k,-k\bigr).
\]

\medskip

\noindent\emph{Nonlinear term.}
Fix odd integers \(n_1,n_2\ge1\) such that \(n_1+n_2=n\), and consider
an ordered partition \((\pi_1,\pi_2)\) of the \(n\) external momenta
with
\[
|\pi_i|=n_i.
\]
Suppose that \(\pi_1\) contains \(r\) entries equal to \(p\). It then
contains \(n_1-r\) entries equal to \(-p\), and hence
\[
\sum_{q\in\pi_1}q
=
rp-(n_1-r)p
=
(2r-n_1)p.
\]
The momentum inserted into the first factor is therefore
\begin{equation}\label{EqPartitionMomentumTProof}
p_\pi
:=
-\sum_{q\in\pi_1}q
=
(n_1-2r)p.
\end{equation}
Set $
s_r:=n_1-2r$
so that \(p_\pi=s_rp\), while the momentum inserted into the second
factor is \(-s_rp\).

The number of choices of \(\pi_1\) containing exactly \(r\) entries
equal to \(p\) is
\begin{equation}\label{EqAlphaMultiplicityProof}
\alpha_{r,n_1}
=
\binom{n/2}{r}
\binom{n/2}{n_1-r}.
\end{equation}
As usual, the binomial coefficient is understood to vanish when its
lower index lies outside its natural range.

The quadratic momentum weights of the two factors are
\begin{align}
Q_1(r)
&:=
\|p\|_{\pi_1}^2
=
\bigl(n_1+s_r^2\bigr)p^2,
\label{EqQOne}
\\
Q_2(r)
&:=
\|p\|_{\pi_2}^2
=
\bigl(n_2+s_r^2\bigr)p^2.
\label{EqQTwo}
\end{align}
Indeed, each block contains \(n_i\) external momenta of squared norm
\(p^2\), together with one additional momentum of squared norm
\(s_r^2p^2\).

It follows from \eqref{DefGeneralTn} that
\begin{align}
T_{n_1+1}
\Bigl(
p^{[r]},
(-p)^{[n_1-r]},
s_rp
\Bigr)
&=
A_{n_1+1}
+
Q_1(r)B_{n_1+1}
+
R_{n_1+1}^{(r)},
\label{EqFirstFactorT}
\\
T_{n_2+1}
\Bigl(
p^{[n/2-r]},
(-p)^{[n/2-n_1+r]},
-s_rp
\Bigr)
&=
A_{n_2+1}
+
Q_2(r)B_{n_2+1}
+
R_{n_2+1}^{(r)},
\label{EqSecondFactorT}
\end{align}
where
\begin{align*}
R_{n_1+1}^{(r)}
&:=
R_{n_1+1}
\Bigl(
p^{[r]},
(-p)^{[n_1-r]},
s_rp
\Bigr),
\\
R_{n_2+1}^{(r)}
&:=
R_{n_2+1}
\Bigl(
p^{[n/2-r]},
(-p)^{[n/2-n_1+r]},
-s_rp
\Bigr).
\end{align*}
We shall use the combinatorial identities
\begin{align}
\sum_{r=0}^{n_1}\alpha_{r,n_1}
&=
\binom{n}{n_1},
\label{EqVandermondeZero}
\\
\sum_{r=0}^{n_1}\alpha_{r,n_1}s_r^2
&=
\binom{n}{n_1}
\frac{n_1n_2}{n-1},
\label{EqVandermondeSecond}
\\
\sum_{r=0}^{n_1}\alpha_{r,n_1}Q_1(r)
&=
p^2\binom{n}{n_1}
n_1
\left(
1+\frac{n_2}{n-1}
\right),
\label{EqVandermondeQOne}
\\
\sum_{r=0}^{n_1}\alpha_{r,n_1}Q_2(r)
&=
p^2\binom{n}{n_1}
n_2
\left(
1+\frac{n_1}{n-1}
\right).
\label{EqVandermondeQTwo}
\end{align}
The first is Vandermonde's identity. For the second, expand
\(s_r^2=(n_1-2r)^2\) and use
\begin{align}
\sum_{r=0}^{n_1}
r\,\alpha_{r,n_1}
&=
\frac{n_1}{2}
\binom{n}{n_1},
\label{EqFirstWeightedVandermonde}
\\
\sum_{r=0}^{n_1}
r(r-1)\alpha_{r,n_1}
&=
\frac{n_1(n_1-1)}{4}
\frac{n-2}{n-1}
\binom{n}{n_1}.
\label{EqSecondWeightedVandermonde}
\end{align}
The last two identities then follow from
\eqref{EqQOne}--\eqref{EqQTwo}. Recall that
\[
\gamma_{n_1,n_2}
:=
\frac{(n_1+1)!(n_2+1)!}{n!}.
\]
By \eqref{EqVandermondeZero}, we have
\begin{equation}\label{EqGammaVandermonde}
\gamma_{n_1,n_2}
\sum_{r=0}^{n_1}\alpha_{r,n_1}
=
(n_1+1)(n_2+1).
\end{equation}
We now expand the product in
\eqref{EqFirstFactorT}--\eqref{EqSecondFactorT}. For the \(AA\)-term,
we use
\begin{equation}\label{EqAATaylorProof}
\dot C(s_rp)
=
\mathfrak b_m^\alpha(0)
+
s_r^2p^2\mathfrak b_m^\alpha(2)
+
\mathcal R_{(2)}(\dot C)(s_rp).
\end{equation}
By \eqref{EqGammaVandermonde}, the first contribution is the nonlinear
\(AA\)-term in \(\partial_\alpha A_n\). By
\eqref{EqVandermondeSecond}, the second is the corresponding
contribution to \(np^2\partial_\alpha B_n\), while the last is the
\(AA\)-source term in \(\partial_\alpha R_n\). These three
contributions therefore recombine into
\[
-\frac12
\gamma_{n_1,n_2}
\sum_{r=0}^{n_1}
\alpha_{r,n_1}
A_{n_1+1}A_{n_2+1}
\dot C(s_rp).
\]
For the \(BA\)- and \(AB\)-terms, we use
\[
\dot C(s_rp)
=
\mathfrak b_m^\alpha(0)
+
\mathcal R_{(1)}(\dot C)(s_rp).
\]
By \eqref{EqVandermondeQOne}--\eqref{EqVandermondeQTwo}, the
contributions proportional to \(\mathfrak b_m^\alpha(0)\) are exactly
the corresponding terms in \(np^2\partial_\alpha B_n\). The remaining
terms are the \(BA\)- and \(AB\)-source terms in
\(\partial_\alpha R_n\). Hence they recombine into
\begin{equation}\label{EqABRecombinationProof}
-\frac12
\gamma_{n_1,n_2}
\sum_{r=0}^{n_1}
\alpha_{r,n_1}
\Bigl(
Q_1(r)B_{n_1+1}A_{n_2+1}
+
Q_2(r)A_{n_1+1}B_{n_2+1}
\Bigr)
\dot C(s_rp).
\end{equation}
The use of \(\mathcal R_{(1)}\), rather than
\(\mathcal R_{(2)}\), is essential here, since the factors \(Q_i(r)\)
are already quadratic in the external momentum.

Finally, the terms of types \(BB\), \(AR\), \(BR\), and \(RR\) occur
entirely in \(\partial_\alpha R_n\), multiplied by the full covariance
\(\dot C(s_rp)\). Combining all contributions gives
\begin{multline}\label{EqFullQuadraticRecombinationProof}
-\frac12
\sum_{\substack{
n_1+n_2=n\\
n_1,n_2\ge1\ {\rm odd}
}}
\gamma_{n_1,n_2}
\sum_{r=0}^{n_1}
\alpha_{r,n_1}\,
\dot C(s_rp)
\\
\times
\Bigl[
A_{n_1+1}
+
Q_1(r)B_{n_1+1}
+
R_{n_1+1}^{(r)}
\Bigr]
\Bigl[
A_{n_2+1}
+
Q_2(r)B_{n_2+1}
+
R_{n_2+1}^{(r)}
\Bigr].
\end{multline}
By \eqref{EqFirstFactorT} and \eqref{EqSecondFactorT}, this is equal to
\begin{multline}\label{EqQuadraticTProof}
-\frac12
\sum_{\substack{
n_1+n_2=n\\
n_1,n_2\ge1\ {\rm odd}
}}
\frac{(n_1+1)!(n_2+1)!}{n!}
\sum_{r=0}^{n_1}
\alpha_{r,n_1}\,
\dot C\bigl((n_1-2r)p\bigr)
\\
\times
T_{n_1+1}
\Bigl(
p^{[r]},
(-p)^{[n_1-r]},
(n_1-2r)p
\Bigr)
\\
\times
T_{n_2+1}
\Bigl(
p^{[n/2-r]},
(-p)^{[n/2-n_1+r]},
-(n_1-2r)p
\Bigr).
\end{multline}
Combining \eqref{EqQuadraticTProof} with the linear contribution shows
that
\[
T_n^{\alpha,\alpha_0}
\bigl((p,-p)^{[n/2]}\bigr)
\]
satisfies the restricted Wilson--Polchinski flow equation
\eqref{FE+-}. This concludes the proof.
\end{proof}
\section{Non-perturbative analysis of the second-order mean-field hierarchy}
\label{sec3}

In this section, we study the second-order mean-field hierarchy formed
by the families
\[
\bigl(
A_n^{\alpha,\alpha_0},
B_n^{\alpha,\alpha_0}
\bigr)_{n\ge2}.
\]
Our main objective is to construct non-perturbative solutions to the
coupled flow equations \eqref{flow-An} and \eqref{flow-Bn} and to
analyse their ultraviolet behaviour. More precisely, we prove that,
for every even integer \(n\ge2\),
\begin{equation}\label{EqRelevantTriviality}
\lim_{\alpha_0\downarrow0}
\bigl(
A_n^{1,\alpha_0},
B_n^{1,\alpha_0}
\bigr)
=
(0,0).
\end{equation}
Thus, the constant and quadratic families retained by the
second-order mean-field closure are asymptotically trivial when the
ultraviolet cutoff is removed. No corresponding assertion concerning
the error \(R_n^{\alpha,\alpha_0}\), and hence no triviality statement
for the full restricted correlators, is made in this paper.

The argument proceeds in several steps. We first integrate the coupled
flow equations and reduce them to a system of equations for
dimensionless sequences. We then introduce the algebraic framework in
which this system is naturally posed and establish the mapping
properties of its linear and bilinear operators. Finally, these
estimates are combined to construct solutions and prove
\eqref{EqRelevantTriviality}.

Throughout this section, we work in the massless case \(m=0\). This
choice simplifies the flow coefficients and facilitates comparison
with the mean-field analysis of
\cite{Kopper2022,KopperWang2025}. The massive case is not considered
here.

The mean-field system studied by Kopper in \cite{Kopper2022} is
obtained by replacing the momentum-dependent Wilson--Polchinski
hierarchy with a momentum-independent ansatz. The connected amputated
Schwinger functions are thereby represented by scalar quantities
\(A_n^{\alpha,\alpha_0}\), leading to a closed
infinite-dimensional dynamical system.

The hierarchy considered here extends this approximation by retaining
a quadratic momentum contribution. More precisely, the
second-order mean-field ansatz takes the form
\[
T_{n,\mathrm{MF2}}^{\alpha,\alpha_0}(p_1,\ldots,p_n)
=
A_n^{\alpha,\alpha_0}
+
\left(
\sum_{i=1}^{n}p_i^2
\right)
B_n^{\alpha,\alpha_0}.
\]
The family \(A_n^{\alpha,\alpha_0}\) describes the
momentum-independent part of the closure, while
\(B_n^{\alpha,\alpha_0}\) determines its quadratic momentum
dependence. These quantities are fixed by the autonomous system of flow equations
\eqref{flow-An}--\eqref{flow-Bn}. Note that they are not asserted to coincide
with the constant and quadratic Taylor coefficients of the exact
correlators.

The resulting system may therefore be viewed as a non-perturbative
second-order mean-field approximation to the Wilson--Polchinski
hierarchy. It incorporates the first nontrivial momentum dependence
and, in its two-point sector, the parameter associated with
wave-function renormalization. The main result of this section extends
the mean-field construction of \cite{Kopper2022,KopperWang2025} to
this enlarged closure by proving the existence and ultraviolet
triviality of both families \(A_n\) and \(B_n\). The exact discrepancy
between this approximation and the restricted Wilson--Polchinski
hierarchy remains encoded by the separately defined error
\(R_n^{\alpha,\alpha_0}\).

Since \(m=0\), we consider the flow on the finite interval
\[
\alpha\in[\alpha_0,1],
\]
whose upper endpoint provides a fixed infrared cutoff. Removal of the
ultraviolet cutoff corresponds to the limit
\[
\alpha_0\downarrow0.
\]

In the massless case, all coefficients appearing in the second-order
mean-field equations are explicit. More precisely, the quantities
defined in \eqref{12}--\eqref{13} are given by
\begin{equation}\label{EqExplicitFlowCoefficients}
\mathfrak a_0^\alpha(1)
=
\frac{c}{\alpha^2},
\qquad
\mathfrak a_0^\alpha(2)
=
\frac{4c}{\alpha^3},
\qquad
\mathfrak b_0^\alpha(0)
=
1,
\qquad
\mathfrak b_0^\alpha(2)
=
-\alpha,
\end{equation}
where $
c:=\frac{1}{16\pi^2}$. Consequently, the flow equations for
\(\bigl(A_n,B_n\bigr)\) form a closed system with explicit
coefficients:
\begin{multline}\label{EqFlowAnExplicit}
\partial_\alpha A_n^{\alpha,\alpha_0}
=
\binom{n+2}{2}
\frac{c}{\alpha^2}
A_{n+2}^{\alpha,\alpha_0}
+
\binom{n+2}{2}
\frac{4c}{\alpha^3}
B_{n+2}^{\alpha,\alpha_0}
\\
-
\frac12
\sum_{\substack{
n_1+n_2=n\\
n_1,n_2\ge1\ {\rm odd}
}}
(n_1+1)(n_2+1)
A_{n_1+1}^{\alpha,\alpha_0}
A_{n_2+1}^{\alpha,\alpha_0},
\end{multline}
whereas
\begin{multline}\label{EqFlowBnExplicit}
\partial_\alpha B_n^{\alpha,\alpha_0}
=
\binom{n+2}{2}
\frac{c}{\alpha^2}
B_{n+2}^{\alpha,\alpha_0}
-
\frac{1}{2n}
\sum_{\substack{
n_1+n_2=n\\
n_1,n_2\ge1\ {\rm odd}
}}
(n_1+1)(n_2+1)
\Biggl[
n_1
\left(
1+\frac{n_2}{n-1}
\right)
B_{n_1+1}^{\alpha,\alpha_0}
A_{n_2+1}^{\alpha,\alpha_0}
\\
\hspace{4cm}
+
n_2
\left(
1+\frac{n_1}{n-1}
\right)
A_{n_1+1}^{\alpha,\alpha_0}
B_{n_2+1}^{\alpha,\alpha_0}
\Biggr]
\\
+
\frac{\alpha}{2n(n-1)}
\sum_{\substack{
n_1+n_2=n\\
n_1,n_2\ge1\ {\rm odd}
}}
n_1(n_1+1)n_2(n_2+1)
A_{n_1+1}^{\alpha,\alpha_0}
A_{n_2+1}^{\alpha,\alpha_0}.
\end{multline}

We next factor out the canonical scaling in \(\alpha\), together with
the combinatorial and \(c\)-dependent factors, by setting
\begin{equation}\label{EqRescalingAB}
A_n^{\alpha,\alpha_0}
=
\alpha^{\frac n2-2}
\frac{c^{\frac{2-n}{2}}}{n}
f_n(\alpha),
\qquad
B_n^{\alpha,\alpha_0}
=
\alpha^{\frac n2-1}
\frac{c^{\frac{2-n}{2}}}{n}
h_n(\alpha).
\end{equation}
The dependence of \(f_n\) and \(h_n\) on \(\alpha_0\) is suppressed
from the notation.

The notation \(f_n\) agrees with that of \cite{Kopper2022}. Formally
setting \(h_n=0\) in the \(f_n\)-equations recovers the mean-field
sequence equation studied there. The variables \(h_n\) describe the
additional quadratic momentum contribution retained by the present
closure. We stress, however, that \(h_n=0\) does not define an
invariant sector of the coupled system, since the equations for \(h_n\)
contain terms depending only on \(f_n\). Substituting \eqref{EqRescalingAB} into
\eqref{EqFlowAnExplicit} gives
\begin{multline}\label{EqSequenceFn}
f_{n+2}(\alpha)
+
4h_{n+2}(\alpha)
=
\frac{n-4}{n(n+1)}
f_n(\alpha)
+
\frac{2}{n(n+1)}
\alpha\partial_\alpha f_n(\alpha)
\\
+
\frac{1}{n+1}
\sum_{\substack{
n_1+n_2=n\\
n_1,n_2\ge1\ {\rm odd}
}}
f_{n_1+1}(\alpha)
f_{n_2+1}(\alpha).
\end{multline}
Similarly, \eqref{EqFlowBnExplicit} becomes
\begin{multline}\label{EqSequenceHn}
h_{n+2}(\alpha)
=
\frac{n-2}{n(n+1)}
h_n(\alpha)
+
\frac{2}{n(n+1)}
\alpha\partial_\alpha h_n(\alpha)
\\
+
\frac{1}{n(n+1)}
\sum_{\substack{
n_1+n_2=n\\
n_1,n_2\ge1\ {\rm odd}
}}
\Biggl[
n_1
\left(
1+\frac{n_2}{n-1}
\right)
h_{n_1+1}(\alpha)
f_{n_2+1}(\alpha)
\\
\hspace{4cm}
+
n_2
\left(
1+\frac{n_1}{n-1}
\right)
f_{n_1+1}(\alpha)
h_{n_2+1}(\alpha)
\Biggr]
\\
-
\frac{1}{n(n-1)(n+1)}
\sum_{\substack{
n_1+n_2=n\\
n_1,n_2\ge1\ {\rm odd}
}}
n_1n_2
f_{n_1+1}(\alpha)
f_{n_2+1}(\alpha).
\end{multline}
To recover the form used in \cite{Kopper2022}, we introduce the
logarithmic flow parameter
\begin{equation}\label{EqLogarithmicScale}
\mu
:=
\log\left(\frac{\alpha}{\alpha_0}\right).
\end{equation}
Then
\[
\partial_\mu=\alpha\partial_\alpha,
\qquad
0\le\mu\le\mu_{\max},
\qquad
\mu_{\max}:=\log\left(\frac1{\alpha_0}\right).
\]
With a slight abuse of notation, we set
\[
f_n(\mu)
:=
f_n(\alpha_0e^\mu),
\qquad
h_n(\mu)
:=
h_n(\alpha_0e^\mu),
\]
and continue to suppress their dependence on \(\alpha_0\). For \(n=2\), \eqref{EqSequenceFn} gives
\begin{equation}\label{EqSequenceF2Mu}
f_4+4h_4
=
\frac13f_2(f_2-1)
+
\frac13\partial_\mu f_2,
\end{equation}
while \eqref{EqSequenceHn} gives
\begin{equation}\label{EqSequenceH2Mu}
h_4
=
\frac16f_2(4h_2-f_2)
+
\frac13\partial_\mu h_2.
\end{equation}
For every even integer \(n\ge4\), separating the terms corresponding
to \(n_1=1\) or \(n_2=1\) in \eqref{EqSequenceFn} yields
\begin{multline}\label{EqSequenceFnMu}
f_{n+2}
+
4h_{n+2}
=
\frac{1}{n+1}
\sum_{\substack{
n_1+n_2=n\\
n_1,n_2\ge3\ {\rm odd}
}}
f_{n_1+1}f_{n_2+1}
+
\frac{1}{n+1}
\left(
2f_2+1-\frac4n
\right)
f_n
+
\frac{2}{n(n+1)}
\partial_\mu f_n.
\end{multline}
Similarly,
\begin{multline}\label{EqSequenceHnMu}
h_{n+2}
=
\frac{1}{n(n+1)}
\sum_{\substack{
n_1+n_2=n\\
n_1,n_2\ge3\ {\rm odd}
}}
\Biggl[
n_1
\left(
1+\frac{n_2}{n-1}
\right)
h_{n_1+1}f_{n_2+1}
\\
\hspace{4cm}
+
n_2
\left(
1+\frac{n_1}{n-1}
\right)
f_{n_1+1}h_{n_2+1}
\Biggr]
\\
-
\frac{1}{n(n-1)(n+1)}
\sum_{\substack{
n_1+n_2=n\\
n_1,n_2\ge3\ {\rm odd}
}}
n_1n_2
f_{n_1+1}f_{n_2+1}
\\
+
\frac{2}{n(n+1)}
f_n(2h_2-f_2)
+
\left[
\frac{n-2}{n(n+1)}
+
\frac{2}{n+1}f_2
\right]
h_n
+
\frac{2}{n(n+1)}
\partial_\mu h_n.
\end{multline}
All functions in these equations are evaluated at the common flow
parameter \(\mu\). In the remainder of the paper, we refer to
\eqref{EqSequenceF2Mu}--\eqref{EqSequenceHnMu} as the
\emph{second-order mean-field hierarchy}. 

\section{The second-order mean-field hierarchy}
\label{sec4}

The purpose of this section is to develop the analytic framework used
to construct solutions of the second-order mean-field hierarchy
\eqref{EqSequenceF2Mu}--\eqref{EqSequenceHnMu} and to establish their
ultraviolet triviality. The hierarchy is supplemented with the initial
conditions
\begin{equation}\label{EqMeanFieldBC}
f_2(0)
=
2(2\pi)^4\alpha_0c_2,
\qquad
f_4(0)
=
4\pi^2c_4,
\qquad
f_n(0)
=
0,
\quad
n\ge6,
\end{equation}
and
\begin{equation}\label{EqMeanFieldBC2}
h_2(0)
=
2(2\pi)^4\alpha_0c'_2,
\qquad
h_n(0)
=
0,
\quad
n\ge4,
\end{equation}
where the indices \(n\) are even and
\[
c_2,c'_2\in\mathbb R,
\qquad
c_4>0
\]
are fixed. Since the upper endpoint of the flow is normalized to
\(\alpha=1\), one has
\[
\mu_{\max}
=
\log\left(\frac1{\alpha_0}\right),
\]
so that the ultraviolet limit \(\alpha_0\rightarrow0\) is equivalent to
\(\mu_{\max}\to\infty\).

The construction is based on weighted sequence spaces controlling the
Taylor coefficients of the unknown functions at the initial scale
\(\mu=0\). Their weights are chosen to provide uniform estimates on
the coefficient recursions and sufficient control to realise the
resulting jets as smooth functions on bounded intervals. These
estimates also provide the control required to establish triviality in
the ultraviolet limit.

We can now state the main result of the paper. It establishes
the existence of smooth solutions for every finite interval $[0,\mu_{\max}]$ and
their triviality as the ultraviolet cutoff is removed:
\begin{theorem}[Triviality of the second-order mean-field model]\label{ThmSecondOrderTriviality}
Consider the \(\phi_4^4\)-theory with bare interaction
Lagrangian~\eqref{EqBareInteraction}, and impose the mean-field
boundary conditions~\eqref{EqMeanFieldBC}--\eqref{EqMeanFieldBC2},
with
\begin{equation}\label{EqBoundaryParametersTriviality}
0<c_4<+\infty,
\qquad
|c_2|<+\infty,
\qquad
|c'_2|<+\infty.
\end{equation}
Then there exist smooth solutions \(f_n\) and \(h_n\) of the
second-order mean-field flow equations
\eqref{EqSequenceFn}--\eqref{EqSequenceHn}, satisfying
\[
f_n,h_n\in C^\infty([0,\mu_{\max}]),
\qquad n\ge2,
\]
which vanish in the ultraviolet limit. More precisely,
\begin{equation}\label{EqSecondOrderTriviality}
\lim_{\mu_{\max}\to+\infty} f_n(\mu_{\max})
=
\lim_{\mu_{\max}\to+\infty} h_n(\mu_{\max})
=
0,
\qquad n\ge2.
\end{equation}
\end{theorem}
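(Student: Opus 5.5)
The plan follows Kopper's mean-field strategy \cite{Kopper2022,KopperWang2025}, adapted to the coupled pair $(f_n,h_n)$. The hierarchy \eqref{EqSequenceF2Mu}--\eqref{EqSequenceHnMu} is read as a recursion expressing $(f_{n+2},h_{n+2})$ through $(f_m,h_m)_{m\le n}$ and their $\mu$-derivatives. Once $f_2$ and $h_2$ are known as functions of $\mu$, all higher $f_n,h_n$ are determined algebraically. The initial conditions $f_n(0)=h_n(0)=0$ for $n\ge6$ (resp.\ $n\ge4$ for $h$), together with $f_4(0)=4\pi^2c_4$, then become infinitely many constraints on the Taylor jets of $f_2,h_2$ at $\mu=0$.

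\textbf{Translating the initial data into Taylor recursions.} Write
\[
f_n(\mu)=\sum_{k\ge0}\frac{a_{n,k}}{k!}\mu^k,\qquad h_n(\mu)=\sum_{k\ge0}\frac{b_{n,k}}{k!}\mu^k .
\]
The hierarchy then becomes a discrete system of recursions: $a_{n+2,k}+4b_{n+2,k}$ and $b_{n+2,k}$ are linear in $a_{n,k},a_{n,k+1},b_{n,k},b_{n,k+1}$, plus Leibniz-convolution bilinear terms. Imposing $a_{n,0}=b_{n,0}=0$ for large $n$ inverts the system to give $a_{2,k+1}$ and $b_{2,k+1}$ in terms of lower data. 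The step $n\mapsto n+2$ raises the derivative order by one with factor $2/(n(n+1))$, so $(a_{2,k},b_{2,k})$ is essentially determined by the constraint at level $n=2k+2$.

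\textbf{Weighted sequence spaces.} I would place the double sequences in weighted classes with norms of the form
\[
\sup_{n,k}\frac{|a_{n,k}|}{w_{n,k}}
\]
(and similarly for $b$), with weights $w_{n,k}$ of factorial type in $n/2+k$ times a geometric factor, chosen to dominate the combinatorial growth. I would then decompose the recursion into a linear part $\mathcal L$, whose triangular structure makes it invertible with a bound, and a bilinear part $\mathcal B$. The goal is the mapping estimates $\|\mathcal L^{-1}\mathcal B(x,y)\|\le C\|x\|\,\|y\|$. The $h$-equation has exactly Kopper's $f$-structure with modified coefficients, and the coefficients $n_i(1+n_j/(n-1))/(n(n+1))$ and $n_1n_2/(n(n-1)(n+1))$ are bounded by those of the $f$-sums up to constants. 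The coupling $4h_{n+2}$ in the $f$-equation is triangular, since it is handled by solving the $h$-recursion first at each level. With these estimates in hand, a fixed-point or inductive argument on $k$ yields existence of the jets with bounds. Borel's lemma, or better a direct convergence of the series on $[0,\mu_{\max}]$ when the weights allow it, realises the jets as $C^\infty$ functions. I would then check that the functions so constructed satisfy \eqref{EqSequenceFn}--\eqref{EqSequenceHn} exactly, since the hierarchy is algebraic in the higher $f_n,h_n$.

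\textbf{Triviality.} For $f_2$, I would use that $f_4=\tfrac13 f_2(f_2-1)+\tfrac13\partial_\mu f_2-4h_4$, with $f_4$ playing the role of the positive coupling. Following Kopper, positivity $c_4>0$ drives a Riccati-type inequality $\partial_\mu f_4\lesssim -f_4^2$ for the effective coupling, giving $f_4(\mu_{\max})=O(1/\mu_{\max})$. Decay of the higher $f_n,h_n$ at $\mu_{\max}$ then follows from the recursion and the uniform bounds, since every source term there is at least linear in $f_4$ or in $h_2$-type quantities. The contribution $h$ must be shown subordinate, e.g.\ $|h_n|\lesssim |f|$ through the source term $-f_{n_1+1}f_{n_2+1}$ and the linear $h$-terms.

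\textbf{Main obstacle.} I expect the main difficulty to be exactly this step: obtaining estimates uniform in $\mu_{\max}$ that survive the coupling $f_{n+2}+4h_{n+2}$. The term $4h_{n+2}$ could destroy the sign structure that yields Kopper's $1/\mu$ decay. I would first try to find a combination $g_n=f_n+\lambda_n h_n$ for which the $g$-recursion has the Kopper form up to higher-order errors, and run the monotonicity argument on $g_4$.
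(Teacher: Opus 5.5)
\paragraph{The gap: the triviality step.} Your existence part is broadly what the paper does: Taylor recursions for the jets, weighted classes, and propagation of bounds by induction in the order and in $n$. The triviality step, however, cannot work, and the problem is structural rather than technical.

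Every boundary condition in \eqref{EqMeanFieldBC}--\eqref{EqMeanFieldBC2} is imposed at $\mu=0$, and the hierarchy defines $(f_{n+2},h_{n+2})$ algebraically from $(f_m,h_m)_{m\le n}$ and their derivatives. So the data fix only the infinite jet of $(f_2,h_2)$ at $\mu=0$. Every smooth pair with that jet generates a smooth solution with the same boundary data. For example, let $\chi$ be smooth, flat at $0$, with $\chi\equiv1$ on $[1,\infty)$. Replacing $f_2$ by $f_2+\varepsilon\chi$ gives another admissible solution, but now $f_2(\mu_{\max})\to\varepsilon\neq0$.

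Hence no argument that uses only the equations, the sign $c_4>0$, and bounds on the jets at $\mu=0$ can yield \eqref{EqSecondOrderTriviality}. The large-$\mu$ behaviour is decided by the choice of smooth realisation. Your plan leaves that choice to Borel's lemma, which gives no control at large $\mu$. The alternative, convergence of the Taylor series, is unavailable with factorial-type weights.

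The Riccati inequality you invoke is also absent from the system. For $n=4$, \eqref{EqSequenceFnMu} reads $f_6+4h_6=\tfrac1{10}\partial_\mu f_4+\tfrac25 f_2f_4$. There is no $f_4^2$ term, and $f_6$ is unconstrained away from $\mu=0$.

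\paragraph{The missing idea: build the decay into the realisation.} The paper does this following Kopper--Wang. Take
$f_2=\sum_{n\ge1}b_n\,(n\mu)^{n-1}/(1+(n\mu)^n)$, and $h_2$ likewise with coefficients $d_n$. The $b_n,d_n$ are fixed triangularly from the jets via \eqref{EqRecursionBi}--\eqref{EqRecursionDi}. The bounds of Theorem~\ref{ThmCoefficientHierarchy} then give $|b_n|+|d_n|\le C(K)\,n^2 2^{-n}$.

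Combine this with Proposition~\ref{PropDerivativeRationalKernel} and the bound $\le1$ on the kernels $(n\mu)^{n-l-1}/(1+(n\mu)^n)$. For each $l$ you get a summable majorant independent of $\mu$. Hence $f_2,h_2\in C^\infty(\mathbb R_+)$. Since every kernel and its derivatives tend to $0$ at infinity, dominated convergence gives $\partial_\mu^l f_2(\mu_{\max})\to0$ and $\partial_\mu^l h_2(\mu_{\max})\to0$ for every $l$.

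You need decay of all derivatives, not only of values, because \eqref{EqSequenceFnMu}--\eqref{EqSequenceHnMu} contain $\partial_\mu f_n$ and $\partial_\mu h_n$. With that in hand, induct on $n$: first obtain $h_{n+2}$ from \eqref{EqSequenceHnMu}, then $f_{n+2}$ by subtracting $4h_{n+2}$. This settles the higher sectors. The coupling $4h_{n+2}$ you flagged as the main obstacle causes no difficulty, and the sign of $c_4$ is never used.
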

Next we introduce the weighted sequence classes. We then reformulate the coefficient hierarchy in
terms of suitable linear and bilinear operators and establish their
mapping properties. Finally, these estimates are used to construct a
family of non-perturbative solutions satisfying
\eqref{EqMeanFieldBC}--\eqref{EqMeanFieldBC2} and to prove that this
family becomes trivial as \(\mu_{\max}\to\infty\). No uniqueness claim
is made in the present work.
\subsection{Weighted sequence classes}
Throughout this section, we use the convention
\[
x!
:=
\Gamma(x+1),
\qquad
x\ge0.
\]
In particular, expressions of the form
\(\lvert n/4+i-3\rvert!\) are understood through the Gamma function. 

Let \(K>1\). For every even integer \(n\ge6\), define
\begin{multline}\label{Sn}
\mathcal S_n(K)
:=
\Biggl\{
(u_{n,i})_{i\in\mathbb N_0}\in\mathbb R^{\mathbb N_0}
:
|u_{n,0}|
\le
\frac{K^{\frac n2-\frac32}}{2n^2},
\quad
|u_{n,1}|
\le
\frac{K^{\frac n2-\frac32}}{n^2}
\left(
1+\frac{nK}{2}
\right),
\\
|u_{n,i}|
\le
K^{\frac n2+i-\frac32}
\frac{
\left|\frac n4+i-3\right|!
}{
(i!)^{1/4}
},
\qquad
i\ge2
\Biggr\}.
\end{multline}
For \(n=4\), set
\begin{equation}\label{S4}
\mathcal S_4(K)
:=
\Biggl\{
(u_{4,i})_{i\in\mathbb N_0}\in\mathbb R^{\mathbb N_0}
:
|u_{4,0}|
\le
\frac{\sqrt K}{32},
\quad
|u_{4,1}|
\le
\frac{K}{32},
\quad
|u_{4,i}|
\le
K^{i+\frac12}
\frac{|i-2|!}{(i!)^{1/4}},
\quad
i\ge2
\Biggr\},
\end{equation}
and, for \(n=2\), set
\begin{multline}\label{S2}
\mathcal S_2(K)
:=
\left\{
(u_{2,i})_{i\in\mathbb N_0}\in\mathbb R^{\mathbb N_0}
:
|u_{2,0}|
\le
\frac{\sqrt K}{4},
\quad
|u_{2,1}|
\le
\frac{K}{2},\right.\\\left.
\quad
|u_{2,i}|
\le
K^{i+\frac12}
\frac{|i-3|!}{((i-1)!)^{1/4}},
\quad
i\ge2
\right\}.
\end{multline}

For later use, we also introduce the corresponding component classes.
For every even integer \(n\ge2\) and every \(i\in\mathbb N_0\), we
denote by
\begin{equation}\label{DefSnComponent}
\mathcal S_n^{(i)}(K)
:=
\left\{
u_{n,i}\in\mathbb R
:
u_n\in\mathcal S_n(K)
\right\}
\subset\mathbb R
\end{equation}
the \(i\)-th component class associated with
\(\mathcal S_n(K)\). Thus,
\[
u_n\in\mathcal S_n(K)
\quad\Longleftrightarrow\quad
u_{n,i}\in\mathcal S_n^{(i)}(K)
\quad\text{for every }i\in\mathbb N_0.
\]
\subsection{Strategy of the proof}

We now describe the proof of
Theorem~\ref{ThmMeanFieldTriviality} and clarify the respective roles
of the discrete and functional formulations of the second-order
mean-field hierarchy. The argument proceeds in three stages: a
simultaneous construction of the coefficient hierarchy, a smooth
realisation of the two-point coefficient sequences, and a functional
construction of the higher sectors.

The starting point is that the hierarchy is to be generated from the
three parameters
\[
f_{2,0},
\qquad
h_{2,0},
\qquad
g_{4,0},
\]
which represent, within the second-order mean-field closure, the mass,
wave-function, and coupling data. No higher Taylor coefficient is
prescribed independently.

To uncover the resulting recursion, we first treat the unknown
functions formally as power series in \(\mu\):
\[
f_2(\mu)
=
\sum_{i\ge0}f_{2,i}\mu^i,
\qquad
h_2(\mu)
=
\sum_{i\ge0}h_{2,i}\mu^i,
\]
and similarly for the higher families \(g_n\) and \(k_n\).
Substitution into the mean-field equations and identification of equal
powers of \(\mu\) yield a coupled discrete hierarchy for
\[
f_{2,i},
\qquad
h_{2,i},
\qquad
g_{n,i},
\qquad
k_{n,i}.
\]
At this stage, the expansions are used only to derive the coefficient
recursions; no convergence or analyticity is assumed.

The discrete hierarchy must be analysed simultaneously. The
recursions for the two-point coefficients involve the four-point
sector, while the equations for \(g_{n,i}\) and \(k_{n,i}\) are coupled
both in the correlation order \(n\) and in the coefficient index
\(i\). The weighted classes \(\mathcal S_n(K)\) are designed so that
the shift, multiplication, and bilinear convolution operations
appearing in these recursions preserve the required component bounds.
Their mapping properties allow the complete coefficient family to be
constructed recursively from the three initial parameters while
retaining uniform control in both \(n\) and \(i\).

The outcome of this first stage is purely discrete: it produces and
controls the infinite jets of the two-point sector, but does not yet
construct functions of \(\mu\). For the second stage, following
\cite{Kopper2022,KopperWang2025}, we introduce auxiliary sequences
\((b_n)_{n\ge1}\) and \((d_n)_{n\ge1}\) and set
\begin{align}
f_2(\mu)
&=
\sum_{n\ge1}
b_n
\frac{(n\mu)^{n-1}}{1+(n\mu)^n},
\label{EqAnsatzf2}
\\
h_2(\mu)
&=
\sum_{n\ge1}
d_n
\frac{(n\mu)^{n-1}}{1+(n\mu)^n}.
\label{EqAnsatzh2}
\end{align}
The coefficients \(b_n\) and \(d_n\) are determined through triangular
relations requiring
\[
\frac{\partial_\mu^i f_2(0)}{i!}
=
f_{2,i},
\qquad
\frac{\partial_\mu^i h_2(0)}{i!}
=
h_{2,i}.
\]
The estimates obtained from the discrete hierarchy imply convergence
of these series, together with all their derivatives, on bounded
subsets of \(\mathbb R_+\). They therefore realise the prescribed jets
as genuine smooth functions. No analyticity at \(\mu=0\) is required.

The particular form of the two-point ansatz also yields the estimates
needed to prove the triviality of \(f_2\), \(h_2\), and all
their derivatives. This is the only stage at which the large-\(\mu\)
structure of the chosen smooth realisation is used.

In the third stage, we return to the original, unexpanded mean-field
hierarchy. Once \(f_2\) and \(h_2\) have been constructed, its
triangular structure in the correlation order allows the higher
functions \(g_n\) and \(k_n\) to be constructed recursively. Repeated
differentiation of the functional equations propagates smoothness and
triviality from the two-point sector to every higher sector.

Finally, differentiating the functional hierarchy at \(\mu=0\) shows
that the jets of the constructed higher functions satisfy the same
discrete recursions and initial conditions as the coefficient family
obtained in the first stage. The recursive determination of the
discrete hierarchy therefore gives
\[
\frac{\partial_\mu^i g_n(0)}{i!}
=
g_{n,i},
\qquad
\frac{\partial_\mu^i k_n(0)}{i!}
=
k_{n,i}.
\]
This closes the argument and proves that the smooth
solution realizes the complete coefficient hierarchy generated from
the three prescribed parameters.

We now turn to the discrete mean-field hierarchy and establish the
operator estimates required in the first stage of the proof.
\section{The discrete mean-field hierarchy}
\label{SecDiscreteHierarchy0}
Before deriving the coefficient equations, we determine the first
order at which each higher-point function may have a nonzero Taylor
coefficient at \(\mu=0\). The initial conditions and the recursive
structure of the hierarchy imply that the number of vanishing
low-order derivatives increases with $n$.
\begin{proposition}[Vanishing of the low-order \(\mu\)-jets]
\label{PropVanishingJets}
Let
$
(f_n,h_n)_{n\ge2,n\in2\mathbb{N}}
$
be a smooth solution of the second-order mean-field hierarchy
\eqref{EqSequenceF2Mu}--\eqref{EqSequenceHnMu}, subject to the initial
conditions \eqref{EqMeanFieldBC}--\eqref{EqMeanFieldBC2}. Then, for
every even integer \(n\ge6\),
\begin{align}
\partial_\mu^\ell f_n(0)
&=0,
&
0\le \ell\le \frac n2-3,
\label{EqVanishingJetsFn}
\\
\partial_\mu^\ell h_n(0)
&=0,
&
0\le \ell\le \frac n2-3.
\label{EqVanishingJetsHn}
\end{align}
Equivalently,
\[
f_n(\mu)
=
O\left(\mu^{\frac n2-2}\right),
\qquad
h_n(\mu)
=
O\left(\mu^{\frac n2-2}\right),
\qquad
\mu\to0,
\]
for every even integer \(n\ge6\).
\end{proposition}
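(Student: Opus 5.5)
The plan is to reorganise the statement as an induction on the derivative order $\ell$ rather than on $n$. Since $\ell\le \frac n2-3$ is equivalent to $n\ge 2\ell+6$, the claim reads
\[
(P_\ell):\qquad \partial_\mu^\ell f_n(0)=\partial_\mu^\ell h_n(0)=0\qquad\text{for every even } n\ge 2\ell+6 .
\]
A direct induction on $n$ does not close. Differentiating \eqref{EqSequenceFnMu}--\eqref{EqSequenceHnMu} $\ell$ times expresses $\partial^\ell_\mu f_{n+2}$ through $\partial_\mu^{\ell+1}f_n$, which is one derivative more than the hypothesis controls. The remedy is to read the hierarchy in the other direction: the index-$(n-2)$ equations, solved for their $\partial_\mu$ terms, give $\partial_\mu f_{n-2}$ and $\partial_\mu h_{n-2}$ in terms of $f_n$, $h_n$ and lower data. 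Thus one gains a derivative at the cost of two in the correlation index, exactly the slope in $n\ge 2\ell+6$.

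\emph{Base case $\ell=0$.} This is immediate from \eqref{EqMeanFieldBC}--\eqref{EqMeanFieldBC2}, which give $f_n(0)=h_n(0)=0$ for all $n\ge6$.

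\emph{Inductive step.} Assume $(P_0),\dots,(P_\ell)$ and let $n\ge 2(\ell+1)+6=2\ell+8$, so that $m:=n-2\ge 2\ell+6\ge 6$. Write \eqref{EqSequenceHnMu} at index $m$ as
\[
\frac{2}{m(m+1)}\partial_\mu h_m = h_{m+2} - (\text{bilinear sums}) - \frac{2}{m(m+1)}f_m(2h_2-f_2) - \Bigl[\tfrac{m-2}{m(m+1)}+\tfrac{2}{m+1}f_2\Bigr]h_m .
\]
Apply $\partial_\mu^\ell$ at $\mu=0$ and check each term.
\begin{itemize}
\item $\partial_\mu^\ell h_{m+2}(0)=\partial_\mu^\ell h_n(0)=0$ by $(P_\ell)$, since $n\ge 2\ell+6$.
\item By Leibniz, the terms linear in $f_m,h_m$ (with coefficients built from $f_2,h_2$) involve only $\partial_\mu^b f_m(0)$ and $\partial_\mu^b h_m(0)$ with $b\le\ell$. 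These vanish by $(P_b)$, since $m\ge 2\ell+6\ge 2b+6$.
\item The bilinear sums over $n_1,n_2\ge3$ involve products $\partial_\mu^a u_{m_1}(0)\,\partial_\mu^b v_{m_2}(0)$ with $a+b=\ell$, $m_i=n_i+1\ge4$, and $m_1+m_2=m+2=n$. For such a product to be nonzero, neither factor may be killed by the hypotheses. The sectors $m_i=4$ are not covered by $(P)$, but $4\le 2a+4$ holds trivially, so in every case non-vanishing forces $m_1\le 2a+4$ and $m_2\le 2b+4$. Hence $n\le 2\ell+8$, and since $n\ge 2\ell+8$, every product vanishes except possibly when $n=2\ell+8$ exactly, with $m_1=2a+4$, $m_2=2b+4$.
\end{itemize}

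This equality case is the main obstacle. To handle it, I would first try to sharpen the count to show such pairs are still killed. At the boundary, one factor with $a,b\ge1$ satisfies $m_i=2a_i+4\ge6$, but $(P_{a_i})$ only needs $m_i\ge 2a_i+6$, so the bound just fails there. I would therefore strengthen the induction hypothesis to a slightly better statement, namely vanishing for $n\ge 2\ell+4$ when $n\ge6$, and check against $n=6$ with $\ell=1$ whether $\partial_\mu f_6(0)=0$ already holds. If the strengthening fails, I would instead verify directly that the offending boundary products carry a factor from the $n=6$ sector at order $\ell=0$, which vanishes by the initial data, or cancel between the $f$ and $h$ sums.

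Once the $h$ equation is settled, the conclusion is $\partial_\mu^{\ell+1}h_m(0)=0$ for all $m\ge 2(\ell+1)+6$. The same argument applied to \eqref{EqSequenceFnMu} then gives $\partial_\mu^{\ell+1}f_m(0)=0$. Here the left-hand side $f_{m+2}+4h_{m+2}$ is handled by $(P_\ell)$, and the $f_2$-dependent coefficients by Leibniz as before. This proves $(P_{\ell+1})$ and hence \eqref{EqVanishingJetsFn}--\eqref{EqVanishingJetsHn}. The $O(\mu^{n/2-2})$ form then follows from Taylor's theorem, since $f_n,h_n$ are smooth.
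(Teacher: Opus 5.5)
\textbf{Verdict: genuine gap, but a small one.} Your overall strategy is the same as the paper's. You induct on the jet order simultaneously for all $n$, differentiate the index-$m$ equations \eqref{EqSequenceFnMu}--\eqref{EqSequenceHnMu} and solve for their $\partial_\mu$ terms, and kill the interior bilinear products by counting orders. However, the inductive step contains an off-by-one error, and that error is what creates your ``equality case''.

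To prove $(P_{\ell+1})$ you only need $\partial_\mu^{\ell+1}h_m(0)=0$ for $m\ge 2\ell+8$. You instead set $m=n-2$ with $n\ge 2\ell+8$, so you are solving for $\partial_\mu^{\ell+1}h_m(0)$ with $m$ as small as $2\ell+6$. That boundary value is not part of $(P_{\ell+1})$, and it is false in general. For $\ell=0$ and $m=6$, evaluating the index-$6$ equations at $\mu=0$ (using $h_4(0)=f_6(0)=h_6(0)=f_8(0)=h_8(0)=0$) gives
\[
\partial_\mu f_6(0)=-3f_4(0)^2,\qquad \partial_\mu h_6(0)=\tfrac{9}{10}f_4(0)^2 .
\]
Both are nonzero because $f_4(0)=4\pi^2c_4>0$; they are the paper's $g_{6,0}$ and $k_{6,0}$. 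Consequently all three repairs you suggest fail:
\begin{itemize}
\item The strengthened hypothesis, vanishing for $n\ge 2\ell+4$, is already false at $n=6$, $\ell=1$.
\item The boundary products contain $f_4(0)^2$, which the initial data do not kill.
\item There is no cancellation between the $f$- and $h$-sums.
\end{itemize}

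\textbf{The fix is just to index correctly.} For $m\ge 2\ell+8$, differentiate the index-$m$ equation $\ell$ times at $\mu=0$.
\begin{itemize}
\item $\partial_\mu^\ell h_{m+2}(0)=0$ by $(P_\ell)$, since $m+2\ge 2\ell+10$.
\item The Leibniz terms involving $f_m,h_m$ vanish by $(P_b)$ for $b\le\ell$, exactly as you argued, since $m\ge 2b+6$.
\item Your own count shows that a non-vanishing bilinear product forces $m_1+m_2\le 2\ell+8$. But here $m_1+m_2=m+2\ge 2\ell+10$, so every such product vanishes and no boundary case arises.
\end{itemize}
This is precisely the paper's contradiction, $r\ge \frac n2-3$ against $r\le \frac n2-4$. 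With this correction your argument goes through, including the $f$-step (whose left-hand side $f_{m+2}+4h_{m+2}$ is handled by $(P_\ell)$) and the closing Taylor remark.
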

\begin{proof}
The initial conditions
\eqref{EqMeanFieldBC}--\eqref{EqMeanFieldBC2} imply
\begin{equation}\label{EqZerothOrderVanishing}
f_n(0)=h_n(0)=0
\qquad
\text{for every even integer }n\ge6.
\end{equation}
This proves \eqref{EqVanishingJetsFn}--\eqref{EqVanishingJetsHn}
for \(\ell=0\).

We proceed by induction on the order of the jet, imposing the induction
hypothesis simultaneously for every even \(n\). Fix \(r\in\mathbb N_0\)
and assume that
\begin{equation}\label{EqJetsInductionHypothesis}
\partial_\mu^j f_m(0)
=
\partial_\mu^j h_m(0)
=
0
\end{equation}
for every even integer \(m\ge6\) and every
\[
0\le j\le r
\qquad\text{such that}\qquad
j\le\frac m2-3.
\]
We prove the statement at order \(r+1\). Let \(n\ge6\) be even and suppose that
\begin{equation}\label{EqAdmissibleR}
r+1\le\frac n2-3.
\end{equation}
In particular,
\begin{equation}\label{EqRUpperBound}
r\le\frac n2-4.
\end{equation}
After
differentiating \(r\) times and applying Leibniz' rule, every such term
is a linear combination of products of the form
\[
\partial_\mu^j u_a(0)\,
\partial_\mu^{r-j}v_b(0),
\qquad
a+b=n+2,
\qquad
u,v\in\{f,h\},
\]
where \(a,b\ge4\) are even. If neither factor vanished by the induction
hypothesis, one would necessarily have
\[
j\ge\frac a2-2,
\qquad
r-j\ge\frac b2-2.
\]
Adding these inequalities would give
\[
r
\ge
\frac{a+b}{2}-4
=
\frac n2-3,
\]
contradicting \eqref{EqRUpperBound}. Thus every product contains at
least one vanishing factor.

We now consider the \(h\)-sector. Differentiating
\eqref{EqSequenceHnMu} \(r\) times and evaluating at \(\mu=0\) gives
\begin{align*}
\frac{2}{n(n+1)}
\partial_\mu^{r+1}h_n(0)
&=
\partial_\mu^r h_{n+2}(0)
-
\frac{2}{n(n+1)}
\partial_\mu^r
\bigl[
f_n(2h_2-f_2)
\bigr](0)
\\
&\quad
-
\frac{2}{n+1}
\partial_\mu^r
\bigl[
h_nf_2
\bigr](0)
-
\frac{n-2}{n(n+1)}
\partial_\mu^r h_n(0),
\end{align*}
since all interior nonlinear contributions vanish by the preceding
argument. The induction hypothesis gives
\[
\partial_\mu^r h_{n+2}(0)=0,
\]
because
\[
r\le\frac n2-4
<
\frac{n+2}{2}-3.
\]
Moreover, every term in the Leibniz expansions of
\[
\partial_\mu^r
\bigl[
f_n(2h_2-f_2)
\bigr](0)
\quad\text{and}\quad
\partial_\mu^r
\bigl[
h_nf_2
\bigr](0)
\]
contains a derivative of \(f_n\) or \(h_n\) of order at most \(r\).
Since
\[
r\le\frac n2-4<\frac n2-3,
\]
these derivatives vanish by the induction hypothesis. The same
hypothesis gives
\[
\partial_\mu^r h_n(0)=0.
\]
It follows that
\[
\partial_\mu^{r+1}h_n(0)=0.
\]
We next consider the \(f\)-equation. Differentiating
\eqref{EqSequenceFnMu} \(r\) times and evaluating at \(\mu=0\), the
induction hypothesis gives
\[
\partial_\mu^r f_{n+2}(0)
=
\partial_\mu^r h_{n+2}(0)
=
0.
\]
The interior nonlinear terms vanish by the same product argument as
above. Finally, every term obtained by differentiating
\[
\left(
2f_2+1-\frac4n
\right)f_n
\]
contains a derivative of \(f_n\) of order at most \(r\), and therefore
vanishes by the induction hypothesis. Consequently,
\[
\partial_\mu^{r+1}f_n(0)=0.
\]
The induction is complete and proves
\eqref{EqVanishingJetsFn}--\eqref{EqVanishingJetsHn}.
\end{proof}
\noindent By Proposition~\ref{PropVanishingJets}, for every even integer
\(n\ge4\) there exist smooth functions \(g_n\) and \(k_n\) such that
\begin{equation}\label{EqRescaledFunctions}
f_n(\mu)
=
\mu^{\frac n2-2}g_n(\mu),
\qquad
h_n(\mu)
=
\mu^{\frac n2-2}k_n(\mu).
\end{equation}
For \(n=4\), this simply means that \(g_4=f_4\) and \(k_4=h_4\).
In particular,
\[
k_4(0)=0
\]
by the boundary condition \eqref{EqMeanFieldBC2}. Substituting \eqref{EqRescaledFunctions} into
\eqref{EqSequenceF2Mu} gives
\begin{equation}\label{EqSequenceG4Mu}
g_4
+
4k_4
=
\frac13 f_2(f_2-1)
+
\frac13\partial_\mu f_2.
\end{equation}
Likewise, \eqref{EqSequenceH2Mu} becomes
\begin{equation}\label{EqSequenceK4Mu}
k_4
=
\frac16 f_2(4h_2-f_2)
+
\frac13\partial_\mu h_2.
\end{equation}
For every even integer \(n\ge4\), one has
\begin{multline}\label{EqSequenceGnMu}
\mu^2
\bigl(
g_{n+2}
+
4k_{n+2}
\bigr)
=
\frac{1}{n+1}
\sum_{\substack{n_1+n_2=n\\ n_1,n_2\ge3}}
g_{n_1+1}g_{n_2+1}
\\
+
\frac{\mu}{n+1}
\left(
2f_2+1-\frac4n
\right)g_n
+
\frac{n-4}{n(n+1)}g_n
+
\frac{2\mu}{n(n+1)}
\partial_\mu g_n.
\end{multline}

Similarly, \eqref{EqSequenceHnMu} gives
\begin{multline}\label{EqSequenceKnMu}
\mu^2k_{n+2}
=
\frac{1}{n(n+1)}
\sum_{\substack{
n_1+n_2=n\\
n_1,n_2\ge3\ {\rm odd}
}}
\Biggl[
n_1
\left(
1+\frac{n_2}{n-1}
\right)
k_{n_1+1}g_{n_2+1}
\\
\hspace{3.5cm}
+
n_2
\left(
1+\frac{n_1}{n-1}
\right)
g_{n_1+1}k_{n_2+1}
\Biggr]
\\
-
\frac{1}{n(n-1)(n+1)}
\sum_{\substack{
n_1+n_2=n\\
n_1,n_2\ge3\ {\rm odd}
}}
n_1n_2
g_{n_1+1}g_{n_2+1}
\\
+
\frac{2\mu}{n(n+1)}
g_n(2h_2-f_2)
+
\frac{2\mu}{n+1}
k_nf_2
+
\frac{n-4}{n(n+1)}k_n
+
\frac{(n-2)\mu}{n(n+1)}k_n
+
\frac{2\mu}{n(n+1)}
\partial_\mu k_n.
\end{multline}
Motivated by the previous system, we formally expand the unknown
functions into power series,
\begin{equation}\label{EqFormalExpansion}
g_n(\mu)
=
\sum_{i\ge0}\mu^i g_{n,i},
\qquad
k_n(\mu)
=
\sum_{i\ge0}\mu^i k_{n,i},
\qquad
f_2(\mu)
=
\sum_{i\ge0}\mu^i f_{2,i},
\qquad
h_2(\mu)
=
\sum_{i\ge0}\mu^i h_{2,i}.
\end{equation}
Accordingly,
\[
\partial_\mu g_n(\mu)
=
\sum_{i\ge1}
i\,\mu^{i-1}g_{n,i},
\qquad
\partial_\mu k_n(\mu)
=
\sum_{i\ge1}
i\,\mu^{i-1}k_{n,i}.
\]

Substituting these formal expansions into
\eqref{EqSequenceG4Mu}--\eqref{EqSequenceKnMu} and identifying the
coefficients of equal powers of \(\mu\) yields a recursive system for
the coefficients
\(
(g_{n,i},k_{n,i},f_{2,i},h_{2,i})
\).
The remainder of this section is devoted to proving that, for a
sufficiently large constant \(K\), these coefficients satisfy the
bounds defining the weighted classes \(\mathcal S_n(K)\). These
estimates control the complete formal jets and provide the input for the smooth realization
constructed in the next stage. No convergence of the formal series
\eqref{EqFormalExpansion} is asserted at this point.
\subsection{Recursion for the Taylor coefficients}
\label{SecDiscreteHierarchy}

Identifying equal powers of \(\mu\) in
\eqref{EqSequenceG4Mu} and \eqref{EqSequenceK4Mu} gives, for every
\(i\in\mathbb N_0\),
\begin{equation}\label{eq:f2_recursion}
f_{2,i+1}
=
\frac{1}{i+1}
\left[
3g_{4,i}
+
12k_{4,i}
-
\sum_{j=0}^{i}
f_{2,j}f_{2,i-j}
+
f_{2,i}
\right],
\end{equation}
and
\begin{equation}\label{eq:h2_recursion}
h_{2,i+1}
=
\frac{1}{2(i+1)}
\left[
6k_{4,i}
-
4\sum_{j=0}^{i}
f_{2,j}h_{2,i-j}
+
\sum_{j=0}^{i}
f_{2,j}f_{2,i-j}
\right].
\end{equation}
Let \(n\ge4\) be even and \(i\in\mathbb N_0\). Identifying the
coefficient of \(\mu^{i+2}\) in \eqref{EqSequenceGnMu} yields
\begin{align}
g_{n,i+2}
&=
\frac{n(n+1)}{n+2i}
\left(
g_{n+2,i}+4k_{n+2,i}
\right)
-
\frac{n}{n+2i}
\sum_{\substack{
n_1+n_2=n+2\\
n_1,n_2\ge4,\ {n\in 2\mathbb{N}}
}}
\sum_{j=0}^{i+2}
g_{n_1,j}g_{n_2,i+2-j}
\nonumber\\
&\quad
-
\frac{2n}{n+2i}
\sum_{j=0}^{i+1}
g_{n,j}f_{2,i+1-j}
-
\frac{n-4}{n+2i}
g_{n,i+1}.
\label{eq:g_recursion}
\end{align}
Similarly, \eqref{EqSequenceKnMu} gives
\begin{align}
k_{n,i+2}
&=
\frac{n(n+1)}{n+2i}
k_{n+2,i}
-
\sum_{\substack{
n_1+n_2=n+2\\
n_1,n_2\ge4,~{n\in 2\mathbb{N}}
}}
\frac{2(n_1-1)}{n+2i}
\left(
1+\frac{n_2-1}{n-1}
\right)
\sum_{j=0}^{i+2}
k_{n_1,j}g_{n_2,i+2-j}
\nonumber\\
&\quad
+
\frac{1}{(n-1)(n+2i)}
\sum_{\substack{
n_1+n_2=n+2\\
n_1,n_2\ge4,~{n\in 2\mathbb{N}}
}}
(n_1-1)(n_2-1)
\sum_{j=0}^{i+2}
g_{n_1,j}g_{n_2,i+2-j}
\nonumber\\
&\quad
-
\frac{2}{n+2i}
\sum_{j=0}^{i+1}
g_{n,j}
\left(
2h_{2,i+1-j}
-
f_{2,i+1-j}
\right)
-
\frac{2n}{n+2i}
\sum_{j=0}^{i+1}
k_{n,j}f_{2,i+1-j}
-
\frac{n-2}{n+2i}
k_{n,i+1}.
\label{eq:k_recursion}
\end{align}

The smoothness of solutions at $\mu=0$ together with
\eqref{EqSequenceGnMu}--\eqref{EqSequenceKnMu} imposes
compatibility conditions on the first two coefficient levels. At
order zero, one obtains
\begin{equation}\label{eq:g_compatibility_zero}
\sum_{\substack{
n_1+n_2=n+2\\
n_1,n_2\ge4\ {\rm even}
}}
g_{n_1,0}g_{n_2,0}
+
\frac{n-4}{n}g_{n,0}
=
0,
\qquad n\ge4.
\end{equation}
For the \(k\)-sector, this gives, for every even \(n\ge6\),
\begin{multline}\label{eq:k_compatibility_zero}
k_{n,0}
=
\frac{2}{4-n}
\sum_{\substack{
n_1+n_2=n+2\\
n_1,n_2\ge4,~n\in2\mathbb{N}
}}
(n_1-1)
\left(
1+\frac{n_2-1}{n-1}
\right)
k_{n_1,0}g_{n_2,0}
\\
+
\sum_{\substack{
n_1+n_2=n+2\\
n_1,n_2\ge4,~n\in2\mathbb{N}
}}
\frac{(n_1-1)(n_2-1)}{(n-1)(n-4)}~
g_{n_1,0}g_{n_2,0}.
\end{multline}
At first order, \eqref{EqSequenceGnMu} gives
\begin{equation}\label{eq:g_compatibility_one}
2
\sum_{\substack{
n_1+n_2=n+2\\
n_1,n_2\ge4,~n\in 2\mathbb{N}
}}
g_{n_1,0}g_{n_2,1}
+
g_{n,0}
\left(
2f_{2,0}+1-\frac4n
\right)
+
\frac{n-2}{n}g_{n,1}
=
0.
\end{equation}
Finally, the coefficient of \(\mu\) in
\eqref{EqSequenceKnMu} yields
\begin{multline}\label{eq:k_compatibility_one}
2
\sum_{\substack{
n_1+n_2=n+2\\
n_1,n_2\ge4,~n\in 2\mathbb{N
}}}
\left[
n+
\frac{2(n_1-1)(n_2-1)}{n-1}
\right]
k_{n_1,1}g_{n_2,0}
-
\frac{2}{n-1}
\sum_{\substack{
n_1+n_2=n+2\\
n_1,n_2\ge4,~n\in 2\mathbb{N}
}}
(n_1-1)(n_2-1)
g_{n_1,0}g_{n_2,1}
\\
+
2g_{n,0}
\left(
2h_{2,0}-f_{2,0}
\right)
+
2nk_{n,0}f_{2,0}
+
(n-2)k_{n,0}
+
(n-2)k_{n,1}
=
0.
\end{multline}
The boundary conditions
\eqref{EqMeanFieldBC}--\eqref{EqMeanFieldBC2} determine the zeroth-order
coefficients
\begin{equation}\label{EqDiscreteInitialData}
f_{2,0}
=
2(2\pi)^4\alpha_0c_2,
\qquad
h_{2,0}
=
2(2\pi)^4\alpha_0c_2',
\qquad
g_{4,0}
=
4\pi^2c_4,
\qquad
k_{4,0}
=
0.
\end{equation}
Thus \(f_{2,0}\), \(h_{2,0}\), and \(g_{4,0}\) encode the three
prescribed parameters of the model, whereas \(k_{4,0}\) is fixed by
the vanishing boundary condition for \(h_4\). The compatibility
conditions
\eqref{eq:g_compatibility_zero}--\eqref{eq:k_compatibility_one},
together with the recursions
\eqref{eq:f2_recursion}--\eqref{eq:k_recursion}, then determine the
remaining coefficients recursively. The weighted sequence classes
introduced above are designed to control this construction
simultaneously in $n$ and $i$.
\begin{theorem}[Construction and stability of the coefficient hierarchy]
\label{ThmCoefficientHierarchy}
There exists a constant \(K_0>1\) such that the following holds. Let
the initial coefficients be given by \eqref{EqDiscreteInitialData},
where
\[
c_4>0,
\qquad
c_2,c_2'\in\mathbb R.
\]
Assume that \(K\ge K_0\) is sufficiently large that
\begin{equation}\label{EqInitialBounds0}
|f_{2,0}|
\le
\frac{\sqrt K}{4},
\qquad
|h_{2,0}|
\le
\frac{\sqrt K}{16},
\qquad
|g_{4,0}|
\le
\frac{\sqrt K}{32}.
\end{equation}
Then the recursions
\eqref{eq:f2_recursion}--\eqref{eq:k_recursion}, together with the
compatibility conditions
\eqref{eq:g_compatibility_zero}--\eqref{eq:k_compatibility_one},
recursively determine a coefficient family
\[
(f_{2,i})_{i\in\mathbb N_0},
\qquad
(h_{2,i})_{i\in\mathbb N_0},
\qquad
(g_{n,i})_{\substack{n\ge4,\;n\in2\mathbb N\\ i\in\mathbb N_0}},
\qquad
(k_{n,i})_{\substack{n\ge4,\;n\in2\mathbb N\\ i\in\mathbb N_0}}.
\]
Moreover, for every \(i\in\mathbb N_0\),
\[
f_{2,i},h_{2,i}
\in
\mathcal S_2^{(i)}(K),
\]
and, for every even integer \(n\ge4\),
\[
g_{n,i},k_{n,i}
\in
\mathcal S_n^{(i)}(K).
\]
\end{theorem}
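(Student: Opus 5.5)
The proof is an induction adapted to the triangular structure of the recursions \eqref{eq:f2_recursion}--\eqref{eq:k_recursion}. The natural ordering parameter is the ``level'' $\ell:=\frac n2+i$. For $n\ge4$, \eqref{eq:g_recursion} and \eqref{eq:k_recursion} express $(g_{n,i+2},k_{n,i+2})$, which sit at level $\frac n2+i+2$, in terms of three kinds of quantities:
\begin{itemize}
\item $(g_{n+2,i},k_{n+2,i})$, at level $\frac n2+i+1$;
\item bilinear convolutions $g_{n_1,j}g_{n_2,i+2-j}$ with $n_1+n_2=n+2$ and $n_1,n_2\ge4$, whose levels are at most $\frac n2+i+1$ because each factor has index $n_j<n+2$;
\item terms of the same $n$ with lower $i$.
\end{itemize}
Similarly, \eqref{eq:f2_recursion}--\eqref{eq:h2_recursion} give $f_{2,i+1},h_{2,i+1}$ from $g_{4,i},k_{4,i}$ and lower $f_2,h_2$ coefficients. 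The first two coefficient levels $i=0,1$ for $n\ge6$ are instead fixed by the compatibility conditions \eqref{eq:g_compatibility_zero}--\eqref{eq:k_compatibility_one}. Each of these is linear in the new unknown, with nonzero coefficient $\frac{n-4}{n}$, $\frac{n-2}{n}$ or $(n-2)$, while its remaining terms involve only strictly smaller $n$. Hence I would first verify well-posedness: ordering the unknowns by $(\ell,n)$, every coefficient is determined uniquely from previously constructed ones together with the data \eqref{EqDiscreteInitialData}. Here the degenerate case $n=4$ in \eqref{eq:g_compatibility_zero} must be checked separately, since it reduces to the trivial identity with no $n_1,n_2\ge4$ summands.

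The second step is the bounds, proved inside the same induction. Starting from \eqref{EqInitialBounds0}, I would check the $i=0,1$ entries directly. For $n\ge6$, \eqref{eq:g_compatibility_zero} gives $|g_{n,0}|\le\frac{n}{n-4}\sum|g_{n_1,0}||g_{n_2,0}|$. Inserting the inductive bound $K^{\frac{n_j}{2}-\frac32}/(2n_j^2)$, the product carries $K^{\frac{n}{2}-2}$, one half-power less than the target $K^{\frac n2-\frac32}/(2n^2)$. Combined with $\sum_{n_1+n_2=n+2}n_1^{-2}n_2^{-2}\lesssim n^{-2}$, this shows that the $\sqrt K$ margin absorbs all constants once $K\ge K_0$. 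The $i=1$ level is treated the same way, with the extra $f_{2,0}$ term producing the $\frac{nK}{2}$ piece in $\mathcal S^{(1)}_n$.

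For $i\ge2$ I would split the right-hand side of \eqref{eq:g_recursion} into three parts and estimate each against $K^{\frac n2+i+\frac12}|\frac n4+i-1|!/((i+2)!)^{1/4}$.
\begin{enumerate}
\item \emph{Shift term.} This is $\frac{n(n+1)}{n+2i}(g_{n+2,i}+4k_{n+2,i})$. Its $K$-power matches the target exactly, while its factorial is smaller by a factor of about $\frac n4+i$. Together with the prefactor $\frac{n(n+1)}{n+2i}$, this must be compared with the gain $((i+2)!/i!)^{1/4}$.
\item \emph{Linear lower-$i$ terms.} These lose at least one power of $K$ and are easy.
\item \emph{Convolution term.} This is a double sum over $n_1$ and $j$ of products of Gamma-weights. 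The key lemma I would prove is a log-convexity estimate
\[
|a|!\,|b|!\le |a+b+c|!
\]
suitably tuned, together with $\sum_j (j!)^{-1/4}((i+2-j)!)^{-1/4}\lesssim ((i+2)!)^{-1/4}C^{i}$. The powers of $K$ combine to $K^{\frac{n}{2}+i-\frac12}$, leaving a full factor $K^{-1}$ to absorb the combinatorial constants.
\end{enumerate}
The $k$-recursion \eqref{eq:k_recursion} is handled identically, with its extra weights $(n_1-1)(1+\frac{n_2-1}{n-1})\le 2n$ absorbed in the same way. The $n=2$ recursions use the shifted factorial $|i-3|!/((i-1)!)^{1/4}$, and their quadratic terms are controlled analogously.

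The main obstacle is the shift term in step~1 when $n\gg i$. There the factor $\frac{n(n+1)}{n+2i}\sim n$ must be beaten by the factorial ratio $|\frac n4+i-1|!/|\frac n4+i-3|!\sim (\frac n4+i)^2$. This works only up to explicit constants, and $K$ cannot help because the $K$-powers match exactly. I would therefore first carry out a careful Gamma-function computation of this single term, possibly checking separately the boundary case where $\frac n4+i-3$ lies in $(-1,1)$ and the absolute value in the weight matters. Only afterwards would I fix the remaining constants and choose $K_0$.
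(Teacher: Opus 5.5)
Your ordering by the level $\frac n2+i$ is a legitimate alternative to the paper's induction on the order $i$ followed by induction on $n$, provided $g_{4,i},k_{4,i}$ are determined before $f_{2,i+1},h_{2,i+1}$, which sit at the same level. Your treatment of orders $0,1$ through the compatibility relations, using the $\sqrt K$ margin of products, matches Proposition~\ref{ThmCoefficientHierarchy0}.

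\textbf{The gap is in the convolution step.} The weights of $g_{n_1,j}$ and $g_{n_2,i+2-j}$ multiply to $K^{\frac n2+i}$, not $K^{\frac n2+i-\frac12}$. So the margin against the target $K^{\frac n2+i+\frac12}$ is a fixed $K^{-1/2}$, independent of $i$.

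You then bound the Gamma product pointwise by log-convexity and sum the quarter-power weights separately at a cost $C^i$. That cost is real: $((i+2)!)^{1/4}\sum_j (j!\,(i+2-j)!)^{-1/4}=\sum_j\binom{i+2}{j}^{1/4}$ grows like $2^{(i+2)/4}$ up to polynomial factors. A pointwise bound on $(\frac{n_1}{4}+j-3)!\,(\frac{n_2}{4}+i-1-j)!$ gains only a polynomial factor in $\frac n4+i$ relative to $(\frac n4+i-1)!$, because by log-convexity its maximum over $j$ sits at the ends of the range. Neither the factorials nor any fixed power of $K$ can absorb $C^i$, so the induction does not close.

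The missing idea is to estimate the $j$-sum jointly. Away from the endpoints, the Gamma product falls below its endpoint value by an inverse-binomial factor, and this beats $\binom{i+2}{j}^{1/4}$ because $\frac14<1$. That is the content of the Kopper--Wang factorial-sum estimate (Lemma~\ref{LemKopperWangFactorialSum}). The paper applies it after splitting off $j\in\{0,1,i+1,i+2\}$, whose bounds in $\mathcal S_n(K)$ are not of factorial form, and after separating $n_1\le10$ from $n_1\ge12$.

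The same joint estimate is needed for the two-point convolutions $\sum_j g_{n,j}f_{2,i+1-j}$, $g_n(2h_2-f_2)$ and $k_nf_2$. You listed these as linear lower-$i$ terms losing a full power of $K$. They are in fact bilinear, have the same $K^{1/2}$ margin, and need the analogue of Proposition~\ref{LemOperatorBnPlus2}.

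\textbf{The shift term is not the obstacle.} For $n\ge4$ and $i\ge2$ one has $|g_{n+2,i}|\le K^{\frac n2+i-\frac12}(\frac n4+i-\frac52)!/(i!)^{1/4}$. Hence $\frac{n(n+1)}{n+2i}g_{n+2,i}$ lies below the target by the factor $K^{-1}\frac{n(n+1)}{n+2i}\bigl((i+1)(i+2)\bigr)^{1/4}\Gamma(\frac n4+i-\frac32)/\Gamma(\frac n4+i)$. Its $K$-independent part is bounded uniformly in $n$ and $i$ (Proposition~\ref{PropMultiplicationOperators}), so $K$ does help, by a full power.

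This correction matters. If the $K$-powers really matched, this term alone would break the induction: its constant is already about $2.35$ at $n=6$, $i=2$, before the factor $5$ from $g_{n+2}+4k_{n+2}$.
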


\subsection{Operator formulation of the coefficient hierarchy}\label{sec6}
We now recast the coefficient hierarchy as a system of equations on
the weighted sequence classes introduced above. The coefficient
recursions involve only three elementary operations: shifts in the
Taylor index, multiplication by rational functions of this index, and
discrete convolution. Isolating these operations will allow the
estimates required for the construction to be proved independently of
the particular equation in which they occur. We begin with the linear operators.

\begin{definition}[Elementary linear operators]
For every even integer \(n\ge2\), define the forward shift operator
\[
\mathbb T:
\mathcal S_n(K)
\longrightarrow
\mathbb R^{\mathbb N_0},
\]
by
\[
(\mathbb Tu_n)_i
=
u_{n,i+1},
\qquad
i\in\mathbb N_0.
\]
For every even integer \(n\ge2\), define the multiplication operators
\[
\mathbb M^{(1)}_{p}:
\mathcal S_n(K)
\longrightarrow
\mathbb R^{\mathbb N_0},\qquad \tilde{\mathbb M}^{(1)}_{p}:
\mathcal S_n(K)
\longrightarrow
\mathbb R^{\mathbb N_0},
\]
by
\[
(\mathbb M^{(1)}_{p}u_n)_i
=
\frac{p-4}{p+2i}\,
u_{n,i},\qquad (\tilde{\mathbb M}^{(1)}_{p}u_n)_i
=
\frac{p-2}{p+2i}\,
u_{n,i}
\qquad
i\in\mathbb N_0.
\]
For every even integer \(n\ge4\), define the multiplication operator
\[
\mathbb M^{(2)}_{p}:
\mathcal S_n(K)
\longrightarrow
\mathbb R^{\mathbb N_0},
\]
by
\[
(\mathbb M^{(2)}_{p}u_n)_i
=
\frac{p(p+1)}{p+2i}\,
u_{n,i},
\qquad
i\in\mathbb N_0.
\]
\end{definition}
\noindent The linear terms appearing in the coefficient hierarchy are precisely
the compositions
\[
\tilde{\mathbb M}^{(1)}_{n}\circ\mathbb T,\qquad\mathbb M^{(1)}_{n}\circ\mathbb T,
\qquad
\mathbb M^{(2)}_{n}
\]
so that the forward shift and the multiplication operators constitute the elementary building blocks from which the linear part of the
hierarchy is assembled.

The nonlinear part of the hierarchy is generated by the discrete
convolution of sequences.

\begin{definition}[Discrete convolution]
For every pair of even integers \(n,n'\ge2\), define the bilinear map
\[
\mathbb C_{n,n'}:
\mathcal S_n(K)\times\mathcal S_{n'}(K)
\longrightarrow
\mathbb R^{\mathbb N_0},
\]
by
\[
\bigl(\mathbb C_{n,n'}(u_n,w_{n'})\bigr)_i
:=
\sum_{j=0}^{i}
u_{n,j}\,
w_{n',i-j},
\qquad
i\in\mathbb N_0.
\]
\end{definition}
For every even integer \(n\ge4\), set
\[
\mathfrak I_n
:=
\left\{
n_1\in2\mathbb N:
4\le n_1\le n-2
\right\},
\]
with the convention that \(\mathfrak I_4=\varnothing\). For
\(n_1\in\mathfrak I_n\), we write
\[
n_2:=n+2-n_1.
\]
Then \(n_1,n_2\ge4\) and \(n_1+n_2=n+2\).
The coefficients multiplying the convolution terms may depend on the
particular decomposition \(n+2=n_1+n_2\). We therefore allow the
coefficient to depend on \(n_1\), while imposing a bound that is
uniform over all admissible decompositions. This uniform bound is the only
property of the coefficients used later in the space estimates. Keeping
the ordered dependence on \(n_1\) allows the same operator to cover the
asymmetric terms appearing in the \(k_n\)-hierarchy.
\begin{definition}\label{Admissible}[Admissible coefficients]
A family
\[
\alpha
=
\left(
\alpha_{n_1,n}^{(i)}
\right)_{
\substack{
n\in2\mathbb N,\; n\ge4,\\
n_1\in\mathfrak I_n,\; i\in\mathbb N_0
}}
\]
is called admissible if
\[
\alpha_{n_1,n}^{(0)}
=
\alpha_{n_1,n}^{(1)}
=
0
\]
and
\begin{equation}\label{EqAlphaBound}
\left|
\alpha_{n_1,n}^{(i)}
\right|
\le
\frac{n}{n+2i-4},
\qquad
n\ge4,\quad
n_1\in\mathfrak I_n,\quad
i\ge2.
\end{equation}
\end{definition}
\noindent Since the bilinear operators simultaneously involve all lower-order
sequences, it is convenient to collect them into the product space
\begin{equation}
\widehat{\mathcal S}_n(K)
:=
\prod_{\substack{m\in2\mathbb N\\2\le m\le n}}
\mathcal S_m(K),\quad n\ge 4~.
\end{equation}

\begin{definition}[Bilinear operators]
Let $\alpha$ be an admissible family of coefficients. For every even
integer \(n\ge6\) and 
\[
u=(u_m)_{2\le m\le n+2},
\qquad
w=(w_m)_{2\le m\le n+2},
\]
define
\[
\mathbb B_{n}^{\alpha}:
\widehat{\mathcal S}_{n+2}(K)
\times
\widehat{\mathcal S}_{n+2}(K)
\longrightarrow
\mathbb R^{\mathbb N_0}
\]
by
\begin{equation}\label{EqBilinearOperator}
\bigl(
\mathbb B_{n}^{\alpha}(u,w)
\bigr)_i
:=
\sum_{\substack{4\le n_1\le n-2\\n_1\in2\mathbb N}}
\alpha_{n_1,n}^{(i)}
\bigl(
\mathbb C_{n_1,n+2-n_1}
(u_{n_1},w_{n+2-n_1})
\bigr)_i,
\qquad
i\in\mathbb N_0.
\end{equation}
\end{definition}
The two-point sector is treated separately, since it satisfies different
componentwise bounds.
\begin{definition}[Two-point sector bilinear operator]
Let \(n\ge2\) be even and let
$
\alpha_{2,n}
=
\bigl(\alpha_{2,n}^{(i)}\bigr)_{i\in\mathbb N_0}$
be a family of real coefficients. We define
\[
\mathbb B_{2,n}^{\alpha}:
\mathcal S_2(K)\times\mathcal S_n(K)
\longrightarrow
\mathbb R^{\mathbb N_0}
\]
by
\begin{equation}\label{EqBilinearOperator2n}
\bigl(\mathbb B_{2,n}^{\alpha}(u,w)\bigr)_i
:=
\alpha_{2,n}^{(i)}
\bigl(\mathbb C_{2,n}(u,w)\bigr)_i,
\qquad i\in\mathbb N_0.
\end{equation}
\end{definition}
\indent In terms of these operators, the equations defining the hierarchy can be written compactly as follows 
\begin{equation}\label{f2OP}
\mathbb{T}\left(f_2\right)=\mathbb{M}^{(2)}_{2}\left(g_4+4k_4\right)-\mathbb{M}^{(1)}_{2}(f_2)-\mathbb{B}_{2,2}^{\alpha_{2,2}}\left(f_2,f_2\right)
\end{equation}
and 
\begin{equation}\label{h2OP}
\mathbb{T}\left(h_2\right)=\mathbb{M}^{(2)}_{2}\left(k_4\right)-\frac{1}{2}\mathbb{B}_{2,2}^{\alpha_{2,2}}\left(f_2,f_2-4h_2\right)~.
\end{equation}
For every even integer \(n\ge4\), set
\[
n_2=n+2-n_1
\]
and define the coefficient families
\(\alpha_{gg}\), \(\alpha_{kg}\), \(\alpha_g\), and \(\alpha_k\) by
\begin{align}
\alpha_{gg;n_1,n}^{(r)}
&:=
\frac{(n_1-1)(n_2-1)}
{(n-1)(n+2r-4)},
\qquad
r\ge2,
\label{EqAlphaGg}
\\
\alpha_{kg;n_1,n}^{(r)}
&:=
-
\frac{n_1-1}{n+2r-4}
\left(
1+\frac{n_2-1}{n-1}
\right),
\qquad
r\ge2,
\label{EqAlphaKg}
\\
\alpha_{g;2,n}^{(r)}
&:=
-
\frac{1}{n+2r-2},
\qquad
r\ge1,
\label{EqAlphaG}
\\
\alpha_{k;2,n}^{(r)}
&:=
-
\frac{n}{n+2r-2},
\qquad
r\ge1.
\label{EqAlphaK}
\end{align}
Here, in \eqref{EqAlphaGg} and \eqref{EqAlphaKg}, the indices satisfy
\[
n_1+n_2=n+2,
\qquad
n_1,n_2\ge4.
\]
With these definitions, the recursion for \(k_n\) takes the operator form
\begin{multline}\label{EqOperatorK}
\mathbb T^2(k_n)
=
\mathbb M_n^{(2)}(k_{n+2})
-
\tilde{\mathbb M}_n^{(1)}\circ\mathbb T(k_n)
+
\mathbb T^2\circ
\mathbb B_n^{\alpha_{gg}}(g,g)
+
2\,
\mathbb T^2\circ
\mathbb B_n^{\alpha_{kg}}(k,g)
\\
+
2\,
\mathbb T\circ
\mathbb B_{2,n}^{\alpha_g}
\bigl(2h_2-f_2,g_n\bigr)
+
2\,
\mathbb T\circ
\mathbb B_{2,n}^{\alpha_k}
\bigl(f_2,k_n\bigr),
\qquad n\ge4.
\end{multline}
For every even integer \(n\ge4\), define the coefficient families
\(\beta_{gg}\) and \(\beta_{gf}\) by
\begin{align}
\beta_{gg;n_1,n}^{(r)}
&:=
-\frac{n}{n+2r-4},
\qquad
r\ge2,
\label{EqAlphaGgForG}
\\
\beta_{gf;2,n}^{(r)}
&:=
-\frac{n}{n+2r-2},
\qquad
r\ge1.
\label{EqAlphaGfForG}
\end{align}
In \eqref{EqAlphaGgForG}, the indices satisfy
\[
n_1+n_2=n+2,
\qquad
n_1,n_2\ge4,
\]
with \(n_2=n+2-n_1\). Notice that
\(\beta_{gg;n_1,n}^{(r)}\) is independent of the particular
decomposition \(n+2=n_1+n_2\). With these definitions, the recursion \eqref{eq:g_recursion} takes the
operator form
\begin{multline}\label{EqOperatorG}
\mathbb T^2(g_n)
=
\mathbb M_n^{(2)}
\bigl(g_{n+2}+4k_{n+2}\bigr)
-
\mathbb M_n^{(1)}\circ\mathbb T(g_n)
\\
+
\mathbb T^2\circ
\mathbb B_n^{\beta_{gg}}(g,g)
+
2\,
\mathbb T\circ
\mathbb B_{2,n}^{\beta_{gf}}
\bigl(f_2,g_n\bigr),
\qquad n\ge4.
\end{multline}
\subsection{Mapping properties of the multiplication operators}

For every even integer \(n\ge2\), every \(K\ge1\), and every
\(i\in\mathbb N_0\), recall that
\[
\mathcal S_n^{(i)}(K)
:=
\left\{
u_{n,i}
:\;
u_n\in\mathcal S_n(K)
\right\}
\subset\mathbb R
\]
which denotes the set of all possible \(i\)-th components of sequences belonging to
\(\mathcal S_n(K)\). We next establish the mapping properties of the operators introduced in
the previous subsection. We begin with the multiplication operators,
which encode the linear part of the hierarchy. In the sequel, we shall repeatedly use the functional equation of the Gamma function
\begin{equation}\label{EqGammaFunctionalEquation}
\Gamma(x+1)=x\Gamma(x),
\qquad x>0,
\end{equation}
and its iterated form
\begin{equation}\label{EqGammaIterated}
\frac{\Gamma(x)}{\Gamma(x+m)}
=
\frac{1}{x(x+1)\cdots(x+m-1)},
\qquad x>0,\quad m\in\mathbb N.
\end{equation}
See, for instance, \cite[Eq.~5.5.1]{NISTHandbook}.
\begin{proposition}\label{PropMultiplicationOperators}
There exists a constant \(C_{\mathbb M}>1\) such that, for every
\(K\ge C_{\mathbb M}\), every even integer \(n\ge2\), and every
\(i\in\mathbb N_0\), one has
\begin{equation}\label{EqMapM2}
\mathbb M_n^{(2)}
:
\mathcal S_{n+2}^{(i)}(K)
\longrightarrow
\mathcal S_n^{(i+2)}(K).
\end{equation}
The constant \(C_{\mathbb M}\) is independent of \(n\) and \(i\).
\end{proposition}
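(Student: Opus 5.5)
The plan is a direct componentwise comparison. Fix \(K\), an even \(n\ge2\), \(i\in\mathbb N_0\), and \(u_{n+2,i}\in\mathcal S_{n+2}^{(i)}(K)\). We must show that
\[
\Bigl|\tfrac{n(n+1)}{n+2i}\,u_{n+2,i}\Bigr|
\]
is bounded by the \((i+2)\)-th bound of \(\mathcal S_n(K)\). Since \(i+2\ge2\), the target is always the \emph{generic} factorial-type bound. For \(n\ge6\) it reads
\[
K^{\frac n2+i+\frac12}\,\frac{|\frac n4+i-1|!}{((i+2)!)^{1/4}},
\]
and the analogous expressions hold for \(n=4\) and \(n=2\). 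The source bound, however, must be split according to whether \(i\in\{0,1\}\) (the special bounds \(K^{\frac{n+2}{2}-\frac32}/(2(n+2)^2)\), etc.) or \(i\ge2\) (the generic bound). I would organise the proof as a case table over \(n\in\{2,4\}\) versus \(n\ge6\) and \(i\in\{0,1\}\) versus \(i\ge2\). In every case the inequality reduces to
\[
\frac{n(n+1)}{n+2i}\cdot\bigl(\text{source bound}\bigr)\le\bigl(\text{target bound}\bigr).
\]

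The main case is \(n\ge6\), \(i\ge2\). The ratio of target to source bound equals
\[
K\cdot\frac{\Gamma(\frac n4+i)}{\Gamma(\frac n4+i-\frac32)}\cdot\Bigl(\frac{i!}{(i+2)!}\Bigr)^{1/4},
\]
because the exponent of \(K\) increases by exactly one: \(\frac{n+2}{2}+i-\frac32\) becomes \(\frac n2+(i+2)-\frac32\). The Gamma quotient is bounded below by \(c\,(\frac n4+i)^{3/2}\) with a universal \(c>0\). This follows from \eqref{EqGammaFunctionalEquation}--\eqref{EqGammaIterated} together with log-convexity of \(\Gamma\) (Gautschi/Wendel-type bounds) for the half-integer shift, since here \(\frac n4+i-\frac32\ge1\). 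The fourth-root factor costs at most \((i+2)^{-1/2}\). It therefore suffices to check
\[
\frac{n(n+1)}{n+2i}(i+2)^{1/2}\le c\,K\,\bigl(\tfrac n4+i\bigr)^{3/2}.
\]
Using \(\frac{n(n+1)}{n+2i}\le n+1\), the left side is \(\lesssim (n+i)^{3/2}\), so the inequality holds uniformly once \(K\) exceeds a universal constant. This gives \(C_{\mathbb M}\).

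The boundary cases are checked one by one, each with room to spare from the surplus power of \(K\). For \(i\in\{0,1\}\) with \(n\ge6\), the source bound is only polynomial: it is at most \(K^{\frac n2-\frac12}(1+\frac{(n+2)K}{2})/(n+2)^2\). The target bound at index \(2\) or \(3\) is \(K^{\frac n2+\frac12}\) or \(K^{\frac n2+\frac32}\), times \(|\frac n4-1|!\) or \(|\frac n4|!\) divided by a constant. Since \(n(n+1)/(n+2)^2\le1\) and the Gamma factor at \(\frac n4-1\ge\frac12\) is bounded below, one extra power of \(K\) suffices.

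For \(n=4\), the source \(\mathcal S_6\) and target \(\mathcal S_4\) are compared in the same way, with the multiplier \(20/(4+2i)\le5\). For \(n=2\), the source \(\mathcal S_4\) and target \(\mathcal S_2\) have the multiplier \(6/(2+2i)\le3\). The target bound is \(K^{i+\frac52}|i-1|!/((i+1)!)^{1/4}\), and the source bound is \(K^{i+\frac12}|i-2|!/(i!)^{1/4}\) for \(i\ge2\), or \(\sqrt K/32\) and \(K/32\) for \(i=0,1\). Here the gain is \(K^2\), which absorbs everything.

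The main obstacle is the uniform lower bound on \(\Gamma(x+\frac32)/\Gamma(x)\) at small or noninteger arguments. In particular, the convention \(|x|!\) with absolute values must be handled carefully for the smallest pairs \((n,i)\), where \(\frac n4+i-3\) can be negative or close to zero. I would treat these finitely many small values separately, using only that \(\Gamma\) is bounded above and below on \([1,3]\), and apply the asymptotic Gamma bound only once \(\frac n4+i\) exceeds a fixed threshold.
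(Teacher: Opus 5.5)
Your proposal is correct and follows essentially the same route as the paper. Both argue by a componentwise comparison split into $n\ge6$ versus $n\in\{2,4\}$ and $i\in\{0,1\}$ versus $i\ge2$. The main case is closed by the Gamma-ratio bound $\Gamma(\frac n4+i-\frac32)/\Gamma(\frac n4+i)\lesssim(\frac n4+i)^{-3/2}$ together with the single surplus power of $K$ (two powers when $n=2$). Your concern about small or negative arguments of $|\cdot|!$ is harmless: the only negative argument in any case is the target factor $|i-1|!$ for $n=2$, $i=0$, and you already check that case by hand.
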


\begin{proof}
Let \(u_{n+2}\in\mathcal S_{n+2}(K)\). We shall prove that, for every \(i\in\mathbb N_0\),
\begin{equation}\label{EqM2Goal}
\frac{n(n+1)}{n+2i}u_{n+2,i}
\end{equation}
satisfies the estimate prescribed for the \((i+2)\)-th component of
\(\mathcal S_n(K)\). Since the definition of \(\mathcal S_n(K)\) depends on the value of $n$, we distinguish the three cases
\[
n=2,\qquad n=4,\qquad n\ge6.
\]

\begin{itemize}
\item First, we consider \(n\ge6\): for \(i=0\), the definition \eqref{Sn} gives
\[
|u_{n+2,0}|
\le
\frac{K^{\frac n2-\frac12}}{2(n+2)^2}.
\]
Consequently,
\begin{equation}\label{b0}
\left|
\frac{n(n+1)}{n}\,u_{n+2,0}
\right|
\le
\frac{n+1}{2(n+2)^2}
K^{\frac n2-\frac12}.
\end{equation}
On the other hand, the bound defining the second component $i=2$ of
\(\mathcal S_n(K)\) is
\[
K^{\frac n2+\frac12}
\frac{\left|\frac n4-1\right|!}{(2!)^{1/4}}.
\]
Since \(n\ge6\), one has
\begin{equation}\label{b1}
\left|\frac n4-1\right|!
=
\Gamma\left(\frac{n}{4}\right)\ge\Gamma\left(\frac{3}{2}\right)=\frac{\sqrt{\pi}}{2}.
\end{equation}
We also have for all $n\ge 6$ 
\begin{equation}\label{b2}
    \frac{n+1}{2(n+2)^2}\le \frac{7}{128}
\end{equation}
Combining \eqref{b1} and \eqref{b2} it follows for all $K\ge 1$ that
\begin{equation}\label{b00}
\frac{n+1}{2(n+2)^2}
\le
K~
\frac{\left|\frac n4-1\right|!}{(2!)^{1/4}}
\end{equation}
This together with \eqref{b0} implies that $$\left(\mathbb{M}_n^{(2)}\left(u_{n+2}\right)\right)_0\in\mathcal{S}_{n}^{(2)}(K)~.$$
For \(i=1\), we use \eqref{Sn} to write 
\[
|u_{n+2,1}|
\le
\frac{K^{\frac n2-\frac12}}{(n+2)^2}
\left(
1+\frac{(n+2)K}{2}
\right).
\]
Therefore,
\[
\left|
\frac{n(n+1)}{n+2}\,u_{n+2,1}
\right|
\le
\frac{n(n+1)}{(n+2)^3}
K^{\frac n2-\frac12}
\left(
1+\frac{(n+2)K}{2}
\right).
\]
Since \(K>1\), we  have
\[
1+\frac{(n+2)K}{2}
\le
\frac{n+4}{2}K~,
\]
and hence
\[
\left|
\frac{n(n+1)}{n+2}\,u_{n+2,1}
\right|
\le
\frac{n(n+1)(n+4)}{2(n+2)^3}
K^{\frac n2+\frac12}.
\]
The bound defining the third component of \(\mathcal S_n(K)\) is
\[
K^{\frac n2+\frac32}
\frac{\left|\frac n4\right|!}{(3!)^{1/4}}.
\]
Following similar steps that led to \eqref{b00} we deduce that 
$$\left(\mathbb{M}_n^{(2)}\left(u_{n+2}\right)\right)_1\in\mathcal{S}_{n}^{(3)}(K)~$$
holds for $K\ge 1$.

We now let \(i\ge2\). By definition,
\[
|u_{n+2,i}|
\le
K^{\frac n2+i-\frac12}
\frac{
\left|
\frac{n+2}{4}+i-3
\right|!
}
{(i!)^{1/4}}.
\]
Since \(n\ge6\) and \(i\ge2\),
\[
\frac{n+2}{4}+i-3
=
\frac n4+i-\frac52
\ge1,
\]
so the absolute value may be omitted. We obtain
\begin{align}
\left|
\frac{n(n+1)}{n+2i}u_{n+2,i}
\right|
&\le
\frac{n(n+1)}{n+2i}
K^{\frac n2+i-\frac12}
\frac{
\left(
\frac n4+i-\frac52
\right)!
}
{(i!)^{1/4}}
\nonumber\\
&=
\frac{1}{K}
\frac{n(n+1)}{n+2i}
\left(
\frac{(i+2)!}{i!}
\right)^{1/4}
\frac{
\left(
\frac n4+i-\frac52
\right)!
}{
\left(
\frac n4+i-1
\right)!
}
\nonumber\\
&\qquad\qquad\times
K^{\frac n2+i+\frac12}
\frac{
\left(
\frac n4+i-1
\right)!
}
{((i+2)!)^{1/4}}.
\label{EqM2GeneralRatio}
\end{align}
The numerical factor in the second line,
\[
\frac{n(n+1)}{n+2i}
\bigl((i+1)(i+2)\bigr)^{1/4}
\frac{
\Gamma\!\left(\frac n4+i-\frac32\right)
}{
\Gamma\!\left(\frac n4+i\right)
},
\]
is bounded uniformly over all even \(n\ge6\) and all \(i\ge2\). Since 
\[
\frac n4+i-\frac32\ge2,
\]
there exists\cite{NISTHandbook} a universal constant $C_{\Gamma}$ such that
\begin{equation}\label{GammaRatio}
\frac{
\Gamma\!\left(\frac n4+i-\frac32\right)
}{
\Gamma\!\left(\frac n4+i\right)
}
\le
C_\Gamma
\left(\frac n4+i\right)^{-3/2}.
\end{equation}
Furthermore,
\[
\frac{n(n+1)}{n+2i}
\le
\frac{2n^2}{n+i},\quad
\bigl((i+1)(i+2)\bigr)^{1/4}
\le
(n+i)^{1/2},\quad\frac n4+i
\gtrsim
n+i.
\]
Consequently, there exists a universal constant \(C>0\) such that
\[
\frac{n(n+1)}{n+2i}
\bigl((i+1)(i+2)\bigr)^{1/4}
\frac{
\Gamma\!\left(\frac n4+i-\frac32\right)
}{
\Gamma\!\left(\frac n4+i\right)
}
\le C
\]
uniformly in \(n\ge6\) and \(i\ge2\). Choosing
\(C_{\mathbb M}\ge C\) therefore yields the desired estimate whenever
\(K\ge C_{\mathbb M}\). Equation
\eqref{EqM2GeneralRatio} then yields
\[
\left|
\frac{n(n+1)}{n+2i}u_{n+2,i}
\right|
\le
K^{\frac n2+i+\frac12}
\frac{
\left(
\frac n4+i-1
\right)!
}
{((i+2)!)^{1/4}},
\]
which is precisely the bound defining the \((i+2)\)-th component of
\(\mathcal S_n(K)\).

\item Now we consider \(n=4\) and $n=2$. For $n=4$ we analyze $i=0$ and write 
\[
\left|
5u_{6,0}
\right|
\le
\frac{5K^{3/2}}{72}
\le
K^{5/2}\frac{0!}{(2!)^{1/4}}
\]
for \(K>1\). If \(i=1\), then
\[
\left|
\frac{4\cdot5}{6}u_{6,1}
\right|
\le
\frac{10}{3}
\frac{K^{3/2}}{36}(1+3K)
\le
K^{7/2}\frac{1!}{(3!)^{1/4}}
\]
for \(K\ge \max\left(1,C_{\mathbb M}\right)\). For \(i\ge2\), the
same computation as in \eqref{EqM2GeneralRatio}, with \(n=4\), gives
\[
\left|
\frac{20}{4+2i}u_{6,i}
\right|
\le
K^{i+\frac52}
\frac{i!}{((i+2)!)^{1/4}},
\]
provided \(K\ge C_{\mathbb M}\). This is the bound defining the
\((i+2)\)-th component of \(\mathcal S_4(K)\).

For \(n=2\), we have for \(i=0\),
\[
\left|
\frac{2\cdot3}{2}u_{4,0}
\right|
\le
\frac{3\sqrt K}{32}
\le
K^{5/2}
\frac{1!}{(1!)^{1/4}},
\]
which is the bound defining the second component of
\(\mathcal S_2(K)\). If \(i=1\),
\[
\left|
\frac{2\cdot3}{4}u_{4,1}
\right|
\le
\frac{3K}{64}
\le
\frac{K^{7/2}}{2^{\frac{1}{4}}},
\]
which is the bound defining the third component of
\(\mathcal S_2(K)\). Finally, for \(i\ge2\),
\[
|u_{4,i}|
\le
K^{i+\frac12}
\frac{|i-2|!}{(i!)^{1/4}},
\]
and therefore
\[
\left|
\frac{6}{2+2i}u_{4,i}
\right|
\le
\frac{3}{i+1}
K^{i+\frac12}
\frac{|i-2|!}{(i!)^{1/4}}.
\]
After factoring out the bound
\[
K^{i+\frac52}
\frac{|i-1|!}{((i+1)!)^{1/4}}
\]
of the \((i+2)\)-th component of \(\mathcal S_2(K)\), the remaining
coefficient is bounded uniformly in \(i\). Hence this estimate also
holds for \(K\ge C_{\mathbb M}\). 
\end{itemize}
This completes the proof of
\eqref{EqMapM2}.
\end{proof}
\begin{proposition}\label{PropMultiplicationOperators2}
There exists a constant \(C_{\mathbb M}>1\) such that, for every
\(K\ge C_{\mathbb M}\), every even integer \(n\ge2\), and every
\(i\in\mathbb N_0\), one has
\begin{equation}\label{EqMapM1}
\mathbb M_n^{(1)}
:
\mathcal S_n^{(i+1)}(K)
\longrightarrow
\mathcal S_n^{(i+2)}(K).
\end{equation}
The constant \(C_{\mathbb M}\) is independent of \(n\) and \(i\).
\end{proposition}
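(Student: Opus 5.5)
The plan is to reduce the claim to a monotonicity property of the weights defining \(\mathcal S_n(K)\) in the Taylor index, using that the multiplier of \(\mathbb M^{(1)}_n\) is harmless. Concretely, given \(u_n\in\mathcal S_n(K)\), I have to show that
\[
\frac{n-4}{n+2i}\,u_{n,i+1}
\]
satisfies the bound prescribed for the \((i+2)\)-th component of \(\mathcal S_n(K)\). This is the form in which \(\mathbb M_n^{(1)}\circ\mathbb T\) enters \eqref{EqOperatorG}. Since \(|n-4|/(n+2i)\le 1\) for every even \(n\ge2\) and \(i\ge0\), it suffices to prove that the \((i+1)\)-st component bound of \(\mathcal S_n(K)\) is dominated by the \((i+2)\)-nd one. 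Sharper prefactors are available: the multiplier is \(1/(i+1)\) for \(n=2\), and it vanishes for \(n=4\). As in Proposition~\ref{PropMultiplicationOperators}, I will split into the cases \(n\ge6\), \(n=4\) and \(n=2\).

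For \(n=4\) the multiplier is zero, so the statement is trivial. For \(n\ge6\) and \(i=0\), the bound
\[
\frac{K^{\frac n2-\frac32}}{n^2}\Bigl(1+\frac{nK}{2}\Bigr)\le K^{\frac n2-\frac12}\Bigl(\frac1{n^2}+\frac1{2n}\Bigr)
\]
must be compared with \(K^{\frac n2+\frac12}\Gamma(n/4)/2^{1/4}\). This follows for \(K\ge1\) from \(\Gamma(n/4)\ge\Gamma(3/2)=\sqrt\pi/2\), exactly as in \eqref{b1}--\eqref{b00}. For \(n\ge6\) and \(i\ge1\), set \(x=\frac n4+i-2\ge\frac12\), so that no absolute values are needed. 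The ratio of the \((i+2)\)-th to the \((i+1)\)-th bound is then, by \eqref{EqGammaFunctionalEquation},
\[
K\,\frac{\Gamma(x+2)}{\Gamma(x+1)}\Bigl(\frac{(i+1)!}{(i+2)!}\Bigr)^{1/4}
=K\,\frac{\frac n4+i-1}{(i+2)^{1/4}}
\ge K\,\frac{i+\frac12}{(i+2)^{1/4}}.
\]
The right-hand side is bounded below by \(c_0K\) with a universal \(c_0>0\), since \((i+\tfrac12)(i+2)^{-1/4}\ge \tfrac32\cdot 3^{-1/4}\) for \(i\ge1\). Hence the inclusion holds once \(K\ge 1/c_0\).

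For \(n=2\), the multiplier has modulus \(1/(i+1)\). For \(i=0\), the bound \(K/2\) on \(u_{2,1}\) is below \(K^{5/2}\cdot 1!/(1!)^{1/4}\). For \(i\ge1\), I compare \(K^{i+\frac32}|i-2|!/(i!)^{1/4}\) with \(K^{i+\frac52}|i-1|!/((i+1)!)^{1/4}\). The small cases \(i=1,2\) are checked by hand: they involve \(|-1|!=\Gamma(2)=1\) and \(|0|!=1\), and hold for \(K\ge2\). For \(i\ge3\), the ratio is \(K(i-1)(i+1)^{-1/4}\ge K\), using also the extra gain \(1/(i+1)\). Taking \(C_{\mathbb M}\) to be the maximum of the constants produced above, and of the one in Proposition~\ref{PropMultiplicationOperators}, proves \eqref{EqMapM1} uniformly in \(n\) and \(i\).

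The only delicate point is the low-index bookkeeping. The classes have special definitions for \(i=0,1\), and the factorials \(|\cdot|!\) may have arguments near zero, so the transitions \(i+1\in\{1,2\}\) for \(n=2\) and \(n\ge6\) must be checked individually. The generic regime is a one-line Gamma-ratio estimate.
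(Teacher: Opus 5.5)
Your proposal is correct and follows essentially the same route as the paper: the same case split ($n=4$ trivial, $n\ge6$ with $i=0$ handled via $\Gamma(n/4)\ge\Gamma(3/2)$, $n\ge6$ with $i\ge1$ via $\Gamma(x+1)=x\Gamma(x)$ giving the ratio $(\frac n4+i-1)/(i+2)^{1/4}$, and $n=2$ by direct comparison of consecutive component bounds). The only cosmetic differences are that you bound the multiplier $|n-4|/(n+2i)$ by $1$ where the paper keeps it, and you check $i=1,2$ for $n=2$ separately rather than through the $|\cdot|!$ convention.
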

\begin{proof}
 We need to estimate
\[
\frac{n-4}{n+2i}~u_{n,i+1}~.
\]
\begin{itemize}
\item For \(n=4\), this expression vanishes identically, and there is nothing
to prove.

Suppose first that \(n\ge6\). If \(i=0\), then
\[
\left|
\frac{n-4}{n}u_{n,1}
\right|
\le
\frac{n-4}{n^3}
K^{\frac n2-\frac32}
\left(
1+\frac{nK}{2}
\right).
\]
Using \(K>1\), we obtain
\[
1+\frac{nK}{2}
\le
\frac{n+2}{2}K,
\]
and hence
\[
\left|
\frac{n-4}{n}u_{n,1}
\right|
\le
\frac{(n-4)(n+2)}{2n^3}
K^{\frac n2-\frac12}.
\]
Since \(n\ge6\),
\[
\frac{(n-4)(n+2)}{2n^3}
\le
\frac1{2n}
\le
\frac1{12}
\le
\frac{\Gamma(n/4)}{2^{1/4}}.
\]
Since \(K\ge1\), it follows that
\[
\frac{(n-4)(n+2)}{2n^3}
K^{\frac n2-\frac12}
\le
K^{\frac n2+\frac12}
\frac{\Gamma(n/4)}{(2!)^{1/4}},
\]
which is precisely the bound defining
\(\mathcal S_n^{(2)}(K)\).

Let now \(i\ge1\). Since \(i+1\ge2\),
\[
|u_{n,i+1}|
\le
K^{\frac n2+i-\frac12}
\frac{
\left|
\frac n4+i-2
\right|!
}
{((i+1)!)^{1/4}}.
\]
For \(n\ge6\) and \(i\ge1\), one has
\[
\frac n4+i-2\ge\frac12,
\]
so that
\begin{align}
\left|
\frac{n-4}{n+2i}u_{n,i+1}
\right|
&\le
\frac{n-4}{n+2i}
K^{\frac n2+i-\frac12}
\frac{
\left(
\frac n4+i-2
\right)!
}
{((i+1)!)^{1/4}}
\nonumber\\
&=
\frac1K
\frac{n-4}{n+2i}
(i+2)^{1/4}
\frac{
\left(
\frac n4+i-2
\right)!
}{
\left(
\frac n4+i-1
\right)!
}
\nonumber\\
&\qquad\qquad\times
K^{\frac n2+i+\frac12}
\frac{
\left(
\frac n4+i-1
\right)!
}
{((i+2)!)^{1/4}}.
\end{align}
Using
\[
\frac{
\left(
\frac n4+i-2
\right)!
}{
\left(
\frac n4+i-1
\right)!
}
=
\frac{1}{\frac n4+i-1},
\]
as a consequence of 
\begin{equation*}\label{EqGammaRecurrence}
\Gamma(z+1)=z\,\Gamma(z),
\qquad z>0.
\end{equation*}
we obtain
\[
\frac{n-4}{n+2i}
(i+2)^{1/4}
\frac{1}{\frac n4+i-1}
\le1.
\]
Since \(K>1\), it follows that
\[
\left|
\frac{n-4}{n+2i}u_{n,i+1}
\right|
\le
K^{\frac n2+i+\frac12}
\frac{
\left(
\frac n4+i-1
\right)!
}
{((i+2)!)^{1/4}}.
\]

\item It remains to consider \(n=2\). For \(i=0\),
\[
\left|
\frac{2-4}{2}u_{2,1}
\right|
=
|u_{2,1}|
\le
\frac K2
\le
K^{5/2},
\]
which is the bound defining the second component of
\(\mathcal S_2(K)\). For \(i\ge1\),
\[
\left|
\frac{2-4}{2+2i}u_{2,i+1}
\right|
=
\frac{1}{i+1}|u_{2,i+1}|.
\]
Using the definition of \(\mathcal S_2(K)\), we obtain
\[
\left|
\frac{2-4}{2+2i}u_{2,i+1}
\right|
\le
\frac{1}{i+1}
K^{i+\frac32}
\frac{|i-2|!}{(i!)^{1/4}}.
\]
The bound defining the \((i+2)\)-th component is
\[
K^{i+\frac52}
\frac{|i-1|!}{((i+1)!)^{1/4}}.
\]
The quotient of the former coefficient by the latter is
\[
\frac1K
\frac{1}{i+1}
(i+1)^{1/4}
\frac{|i-2|!}{|i-1|!},
\]
which is at most \(1\) for every \(i\ge1\) and every \(K>1\).
This proves \eqref{EqMapM1} and completes the proof.
\end{itemize}
\end{proof}
\begin{proposition}\label{PropMultiplicationOperators2Tilde}
There exists a constant \(C_{\mathbb M}>1\) such that, for every
\(K\ge C_{\mathbb M}\), every even integer \(n\ge2\), and every
\(i\in\mathbb N_0\), one has
\begin{equation}\label{EqMapM1}
\tilde{\mathbb M}_n^{(1)}
:
\mathcal S_n^{(i+1)}(K)
\longrightarrow
\mathcal S_n^{(i+2)}(K).
\end{equation}
The constant \(C_{\mathbb M}\) is independent of \(n\) and \(i\).
\end{proposition}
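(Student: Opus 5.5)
The plan is to copy the proof of Proposition~\ref{PropMultiplicationOperators2} line by line, replacing the factor $(n-4)$ by $(n-2)$. Given $u_n\in\mathcal S_n(K)$, we must show that
\[
\frac{n-2}{n+2i}\,u_{n,i+1}
\]
satisfies the bound defining $\mathcal S_n^{(i+2)}(K)$. The definition of the classes depends on $n$, so we treat the cases $n=2$, $n=4$, and $n\ge6$ separately.

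\emph{Case $n=2$.} The operator vanishes identically, so there is nothing to prove.

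\emph{Case $n\ge6$, $i=0$.} We use $|u_{n,1}|\le n^{-2}K^{\frac n2-\frac32}(1+\tfrac{nK}{2})$ together with $1+\tfrac{nK}{2}\le\tfrac{n+2}{2}K$. This gives the bound
\[
\frac{(n-2)(n+2)}{2n^3}\,K^{\frac n2-\frac12}\le \frac{1}{2n}\,K^{\frac n2-\frac12}\le \frac1{12}\,K^{\frac n2-\frac12}.
\]
Since $\Gamma(n/4)\ge\Gamma(3/2)=\sqrt\pi/2$, this is at most $K^{\frac n2+\frac12}\Gamma(n/4)/2^{1/4}$, which is the bound defining $\mathcal S_n^{(2)}(K)$.

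\emph{Case $n\ge6$, $i\ge1$.} Here $\frac n4+i-2\ge\frac12$, so the Gamma function recurrence lets us factor out the target bound. The remaining quotient is
\[
\frac1K\,\frac{n-2}{n+2i}\,(i+2)^{1/4}\,\frac{1}{\frac n4+i-1}.
\]
Because $n-2\le n+2i$, it suffices to show $(i+2)^{1/4}\le\frac n4+i-1$, and this holds since $\frac n4+i-1\ge i+\frac12\ge (i+2)^{1/4}$ for $i\ge1$. Hence the quotient is at most $1/K\le1$.

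\emph{Case $n=4$.} The factor is $\frac{2}{4+2i}=\frac1{i+2}$, and the bounds of $\mathcal S_4(K)$ have a different shape, so this case needs separate checks.
\begin{itemize}
\item For $i=0$: $\tfrac12|u_{4,1}|\le K/64\le K^{5/2}\,0!/2^{1/4}$, using the component-$2$ bound $K^{5/2}|0|!/(2!)^{1/4}$.
\item For $i\ge1$: we have $|u_{4,i+1}|\le K^{i+\frac32}|i-1|!/((i+1)!)^{1/4}$, while the target is $K^{i+\frac52}\,i!/((i+2)!)^{1/4}$. The ratio is
\[
\frac1K\,\frac{1}{i+2}\,(i+2)^{1/4}\,\frac{|i-1|!}{i!}\le\frac1K\le 1.
\]
\end{itemize}

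The main obstacle is only bookkeeping: the shifted Gamma arguments, the absolute values $|\cdot|!$ for small indices, and the $(i!)^{1/4}$ versus $((i-1)!)^{1/4}$ conventions across the three classes. No constant beyond $K>1$ appears to be needed, so any $C_{\mathbb M}>1$ (for instance the one from Proposition~\ref{PropMultiplicationOperators}) works.
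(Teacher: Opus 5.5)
Your proof is correct and follows the paper's argument essentially verbatim: the multiplier vanishes for $n=2$, the case $n\ge6$ is the proof of Proposition~\ref{PropMultiplicationOperators2} with $n-4$ replaced by $n-2$, and $n=4$ is checked directly (for $i\ge1$ the ratio is $\frac{1}{K\,i\,(i+2)^{3/4}}$). You also spell out the inequality $(i+2)^{1/4}\le\frac n4+i-1$ that the paper leaves implicit, and you rightly note that only $K\ge1$ is actually needed.
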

\begin{proof}
For \(u_n\in\mathcal S_n(K)\), we have
\[
\bigl(
\widetilde{\mathbb M}_n^{(1)}
\circ\mathbb T(u_n)
\bigr)_i
=
\frac{n-2}{n+2i}\,u_{n,i+1}.
\]
For \(n\ge6\), the proof is identical to that of
Proposition~\ref{PropMultiplicationOperators2}, with \(n-4\) replaced
by \(n-2\). Indeed, the only modification in the estimate for the
zeroth component is
\[
\frac{(n-2)(n+2)}{2n^3}
=
\frac{n^2-4}{2n^3}
\le
\frac{1}{2n},
\]
while the argument for \(i\ge1\) is unchanged after replacing the
factor \(n-4\) by \(n-2\). Thus
\[
\frac{n-2}{n+2i}\,u_{n,i+1}
\in
\mathcal S_n^{(i+2)}(K),
\qquad n\ge6.
\]
For \(n=2\), the multiplier vanishes identically, so there is nothing
to prove.\\
It remains to consider \(n=4\). In this case,
\[
\frac{n-2}{n+2i}
=
\frac{1}{i+2}.
\]
For \(i=0\), the definition of \(\mathcal S_4(K)\) gives
\[
\left|
\frac12 u_{4,1}
\right|
\le
\frac{K}{64}
\le
K^{5/2}\frac{0!}{(2!)^{1/4}},
\]
for \(K\ge1\), which is precisely the bound defining
\(\mathcal S_4^{(2)}(K)\). Let now \(i\ge1\). Since \(i+1\ge2\),
\[
|u_{4,i+1}|
\le
K^{i+\frac32}
\frac{(i-1)!}{((i+1)!)^{1/4}}.
\]
Hence
\begin{align*}
\left|
\frac{1}{i+2}u_{4,i+1}
\right|
&\le
\frac{1}{i+2}
K^{i+\frac32}
\frac{(i-1)!}{((i+1)!)^{1/4}}
\\
&=
\frac{1}{K\,i(i+2)^{3/4}}
K^{i+\frac52}
\frac{i!}{((i+2)!)^{1/4}}
\\
&\le
K^{i+\frac52}
\frac{i!}{((i+2)!)^{1/4}}.
\end{align*}
The last expression is exactly the bound defining
\(\mathcal S_4^{(i+2)}(K)\). This proves
\eqref{EqMapM1} and completes the proof.
\end{proof}
\subsection{Mapping properties of the bilinear operators}

We now turn to the bilinear operators introduced in the previous
subsection. We begin with the operator \(\mathbb B_n^{\alpha}\), where
\(\alpha\) denotes an admissible coefficient satisfying
\eqref{EqAlphaBound}. Since all the estimates below are uniform over the
choice of \(\alpha\), we suppress the superscript and simply write
\[
\mathbb B_n
:=
\mathbb B_n^{\alpha},
\qquad
\mathbb B_{2,n}
:=
\mathbb B_{2,n}^{\alpha}.
\]

\begin{proposition}\label{LemOperatorBnPlusTwo}
There exists a constant \(C_{\mathbb B}>1\), independent of \(n\), \(K\),
and of the admissible coefficient family \(\alpha\), such that, for
every even integer \(n\ge6\) and every
\(K\ge C_{\mathbb B}^{\,2}\),
\[
\mathbb B_n
\left(
\widehat{\mathcal S}_{n+2}(K)
\times
\widehat{\mathcal S}_{n+2}(K)
\right)
\subset
\mathcal S_n(K).
\]
Equivalently, for every
\(u,w\in\widehat{\mathcal S}_{n+2}(K)\) and every
\(i\in\mathbb N_0\),
\[
\left(
\mathbb T^2\circ\mathbb B_n(u,w)
\right)_i
\in
\mathcal S_n^{(i+2)}(K).
\]
\end{proposition}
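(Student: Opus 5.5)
Since the admissible coefficients satisfy \(\alpha_{n_1,n}^{(0)}=\alpha_{n_1,n}^{(1)}=0\), the components \(i=0,1\) of \(\mathbb B_n(u,w)\) vanish. The statement therefore reduces to the equivalent form: for every \(i\ge0\), the \((i+2)\)-th component
\[
\bigl(\mathbb B_n(u,w)\bigr)_{i+2}
=
\sum_{n_1\in\mathfrak I_n}\alpha^{(i+2)}_{n_1,n}\sum_{j=0}^{i+2}u_{n_1,j}\,w_{n_2,i+2-j},
\qquad n_2=n+2-n_1,
\]
must satisfy the bound \(K^{\frac n2+i+\frac12}\,(\frac n4+i-1)!/((i+2)!)^{1/4}\) that defines \(\mathcal S_n^{(i+2)}(K)\). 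The plan is to bound each product \(|u_{n_1,j}||w_{n_2,i+2-j}|\) using the component bounds \eqref{Sn}--\eqref{S4} (only \(n_1,n_2\ge4\) occur, since \(n_1\in\mathfrak I_n\)), and then sum over \(j\) and over \(n_1\).

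The first step is bookkeeping of the powers of \(K\). For \(j,i+2-j\ge2\), the product of the generic bounds carries
\[
K^{\frac{n_1}{2}+j-\frac32}\,K^{\frac{n_2}{2}+i+2-j-\frac32}=K^{\frac n2+i}.
\]
This is exactly \(K^{-1/2}\) below the target. For the boundary terms \(j\in\{0,1\}\) (and symmetrically \(i+2-j\in\{0,1\}\)), the special bounds at indices \(0,1\) are smaller: a factor \(1/(2n_1^2)\) at index \(0\), and \((1+n_1K/2)/n_1^2\lesssim K/n_1\) at index \(1\). Their \(K\)-power is at most \(K^{\frac n2+i-\frac12}\) at index \(0\), and \(K^{\frac n2+i-\frac12}\) at index \(1\) after using \(1+n_1K/2\le n_1K\). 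So every term has a spare factor of at least \(K^{-1/2}\). This spare factor is why the hypothesis is \(K\ge C_{\mathbb B}^2\): it absorbs a constant \(C_{\mathbb B}\) coming from the combinatorial sums.

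The second step, which I expect to be the main obstacle, is the uniform control of the Gamma-function weights. I would show that
\[
\frac{n}{n+2i}\sum_{n_1\in\mathfrak I_n}\sum_{j}
\frac{(\frac{n_1}{4}+j-3)!\,(\frac{n_2}{4}+i-j-1)!}{(j!)^{1/4}((i+2-j)!)^{1/4}}
\le C_{\mathbb B}\,\frac{(\frac n4+i-1)!}{((i+2)!)^{1/4}},
\]
with suitably modified terms at the boundary indices. The tools are:
\begin{itemize}
\item the Beta-function inequality \(\Gamma(a)\Gamma(b)\le\Gamma(a+b-c)\cdot\)(bounded factor) when \(a,b\ge1\). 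Here the arguments add to \(\frac{n+2}{4}+i-4+2=\frac n4+i-\frac32\), which is \(\tfrac12\) below \(\frac n4+i\). So \(\Gamma(a)\Gamma(b)/\Gamma(\frac n4+i)\le B(a,b)\cdot\Gamma(a+b)/\Gamma(a+b+\tfrac12)\), giving a decay of order \((n+i)^{-1/2}\) times the Beta factor;
\item the binomial inequality \(((i+2)!)/(j!(i+2-j)!)=\binom{i+2}{j}\), so the \(1/4\)-power weights contribute \(\binom{i+2}{j}^{1/4}\).
\end{itemize}
Summing \(B(a,b)\binom{i+2}{j}^{1/4}\) over \(j\) and \(n_1\) should give a convergent, geometrically decaying sum, since \(B(a,b)\) decays like \(\binom{a+b}{a}^{-1}\) and this dominates \(\binom{i+2}{j}^{1/4}\) as well as the roughly \(n/2\) choices of \(n_1\). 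The prefactor \(n/(n+2i)\) from \eqref{EqAlphaBound}, together with the \((n+i)^{-1/2}\) gain, offsets any polynomial loss.

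I would treat the terms in three groups:
\begin{itemize}
\item \textbf{Interior terms} (\(2\le j\le i\), both indices generic): handled by the Beta argument above.
\item \textbf{Boundary terms} (\(j\in\{0,1,i+1,i+2\}\)): one factor has the explicit small bound, and the other is a single Gamma weight already at most the target ratio, by the same shift estimates used in the proof of Proposition~\ref{PropMultiplicationOperators}.
\item \textbf{Cases involving \(n_1=4\) or \(n_2=4\)}: here \(\mathcal S_4(K)\) of \eqref{S4} is used, whose weights coincide with \eqref{Sn} at \(n=4\) up to constants.
\end{itemize}
Finally, choosing \(C_{\mathbb B}\) as the maximum of the resulting constants and using \(K^{-1/2}C_{\mathbb B}\le1\) closes the estimate, uniformly in \(n\), \(i\), and the admissible family \(\alpha\).
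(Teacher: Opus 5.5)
Your proposal is correct in outline. It shares the paper's skeleton: the admissibility bound \(\frac{n}{n+2i}\), the \(K\)-bookkeeping that leaves a spare \(K^{-1/2}\), and absorption of the constant via \(K\ge C_{\mathbb B}^2\). The central summation, however, is done differently.

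\textbf{The paper's route.} It symmetrises to \(n_1\le n_2\) and splits the range of \(n_1\).
For \(n_1\ge12\), the generic factorial bound holds at every index, including \(0\) and \(1\). The whole convolution, endpoints included, is then controlled by the Kopper--Wang estimate, Lemma~\ref{LemKopperWangFactorialSum} with \(a=0\), i.e.\ \eqref{EqF00}.
For \(4\le n_1\le10\), the interior \(2\le j\le i\) uses the same lemma with \(a=2\), i.e.\ \eqref{EqF22}. The four endpoint terms are checked individually via Lemma~\ref{LemGammaRatio}, for only four values of \(n_1\).

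\textbf{Your route.} You use the identity \(\Gamma(a)\Gamma(b)=B(a,b)\Gamma(a+b)\) together with the weight \(\binom{i+2}{j}^{1/4}\), and you treat the endpoints for every \(n_1\).

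\textbf{What each buys.} The paper's split keeps the endpoint analysis finite and delegates the summation to an existing lemma. Your route is self-contained and, once the arithmetic below is fixed, very short.

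\textbf{First correction (the gap).} Since \(a+b=\frac n4+i-\frac32\), you are \(\frac32\) below \(\frac n4+i\), not \(\frac12\). The ratio is therefore \(B(a,b)\Gamma(a+b)/\Gamma(a+b+\frac32)\le 2B(a,b)(a+b)^{-3/2}\). This makes the interior almost trivial:
\begin{itemize}
\item Because \(B\) is decreasing in each argument, \(B(a,b)\le B(j-1,i+1-j)=\frac{1}{(i-1)\binom{i-2}{j-2}}\).
\item The \(j\)-sum of this quantity times \(\binom{i+2}{j}^{1/4}\) is \(O(1)\).
\item The at most \(n/2\) values of \(n_1\) cost a factor \(n\), which is absorbed by \((a+b)^{-3/2}\le(\frac n4)^{-3/2}\).
\end{itemize}
So no decay of \(B\) across \(n_1\) is needed. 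Relatedly, every boundary product carries \(K^{\frac n2+i}\), not \(K^{\frac n2+i-\frac12}\). This is harmless, since the spare \(K^{-1/2}\) is all you use.

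\textbf{Second correction (endpoints summed over all \(n_1\)).} At index \(1\), the small factor is only \(\frac{1}{n_1}\). If the accompanying Gamma ratio were merely \(O(1)\), as your sketch suggests, summing over \(n_1\) would lose a factor \(\log n\). You must use the actual gap:
\(\Gamma(\frac{n_2}{4}+i-1)/\Gamma(\frac n4+i)\le 2x^{-(n_1+2)/4}\), where \(x=\frac{n_2}{4}+i-1\ge\max\{i,\frac{n_2}{4}\}\) for \(i\ge1\).
Combined with the factor \((i+2)^{1/4}\), this gives terms \(\lesssim n_1^{-1}n_2^{-5/4}\), which are summable uniformly in \(n\). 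The case \(i=0\) gives \(\sum 1/(n_1n_2)\), which is also fine. The index-\(0\) endpoints are fine as stated, because of the factor \(\frac{1}{2n_1^2}\).
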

\begin{remark}
    The equivalence of the two formulations follows from the fact that
admissibility gives
\[
(\mathbb B_n(u,w))_0
=
(\mathbb B_n(u,w))_1
=
0.
\]
\end{remark}
\begin{remark}
The coefficient families appearing in the higher-point bilinear terms
of the mean-field hierarchy are admissible in the sense of
Definition~\ref{Admissible}. Indeed, we set their components of order
\(0\) and \(1\) equal to zero, while for \(r\ge2\) and
\(n_1+n_2=n+2\), \(n_1,n_2\ge4\), one has
\[
\left|
\alpha_{gg;n_1,n}^{(r)}
\right|
=
\frac{1}{n+2r-4}
\frac{(n_1-1)(n_2-1)}{n-1}
\le
\frac{n}{n+2r-4}.
\]
Moreover,
\[
\left|
\alpha_{kg;n_1,n}^{(r)}
\right|
=
\frac{n_1-1}{n+2r-4}
\left(
1+\frac{n_2-1}{n-1}
\right)
\le
\frac{n}{n+2r-4},
\]
where we used
\[
(n_1-1)
+
\frac{(n_1-1)(n_2-1)}{n-1}
\le
(n_1-1)+(n_2-1)
=
n.
\]
Finally,
\[
\left|
\beta_{gg;n_1,n}^{(r)}
\right|
=
\frac{n}{n+2r-4}.
\]
Thus all coefficient families occurring in
\(\mathbb T^2\circ\mathbb B_n\) satisfy
\eqref{EqAlphaBound}, and Proposition~\ref{LemOperatorBnPlusTwo}
applies to each of the higher-point bilinear terms in the mean-field
hierarchy.
\end{remark}
\noindent The proof relies on the combinatorial estimates of Kopper and Wang
\cite{KopperWang2025}, collected in Appendix~\ref{appendix}.

\begin{proof}
Let \(u,w\in\widehat{\mathcal S}_{n+2}(K)\), and fix
\(i\in\mathbb N_0\). Since
\[
\bigl(
\mathbb T^2\circ\mathbb B_n(u,w)
\bigr)_i
=
\bigl(
\mathbb B_n(u,w)
\bigr)_{i+2},
\]
the admissibility condition \eqref{EqAlphaBound} yields
\begin{align}
\left|
\bigl(
\mathbb T^2\circ\mathbb B_n(u,w)
\bigr)_i
\right|
&\le
\frac{n}{n+2i}
\sum_{\substack{
n_1+n_2=n+2\\
n_1,n_2\in2\mathbb N,\; n_1,n_2\ge4
}}
\sum_{j=0}^{i+2}
|u_{n_1,j}|\,
|w_{n_2,i+2-j}|.
\label{EqProofBnInitial}
\end{align}
We shall prove that the right-hand side is bounded by
\begin{equation}\label{EqProofBnTarget}
K^{\frac n2+i+\frac12}
\frac{
\left(
\frac n4+i-1
\right)!
}{
[(i+2)!]^{1/4}
},
\end{equation}
which is precisely the bound defining the \((i+2)\)-th component of
\(\mathcal S_n(K)\).

Since the summation in \eqref{EqProofBnInitial} is over ordered pairs,
we may restrict attention to \(n_1\le n_2\) at the price of a factor
\(2\). Indeed, the term corresponding to \((n_2,n_1)\) is estimated in
exactly the same way after replacing \(j\) by \(i+2-j\). It therefore
suffices to bound
\begin{equation}\label{EqProofBnOrdered}
\frac{2n}{n+2i}
\sum_{\substack{
n_1+n_2=n+2\\
4\le n_1\le n_2
}}
\sum_{j=0}^{i+2}
|u_{n_1,j}|\,
|w_{n_2,i+2-j}|.
\end{equation}
The summation is separated according to whether \(n_1\le10\) or
\(n_1\ge12\):
\begin{itemize}
\item\underline{ \(n_1\ge12\):} for every even integer \(m\ge12\) and every \(r\in\mathbb N_0\), the
definition of \(\mathcal S_m(K)\) implies
\begin{equation}\label{EqGenericBoundLargeN}
|u_{m,r}|
\le
K^{\frac m2+r-\frac32}
\frac{
\left(
\frac m4+r-3
\right)!
}{
(r!)^{1/4}
}.
\end{equation}
For \(r\ge2\), this is the defining estimate. For \(r=0\), it follows
from the defining estimate \eqref{Sn} together with 
\[
\frac1{2m^2}
\le
\left(\frac m4-3\right)!,
\]
whereas, for \(r=1\), the inequality \(K>1\) gives
\[
\frac1{m^2K}+\frac1{2m}
\le
\left(\frac m4-2\right)!.
\]
Thus \eqref{EqGenericBoundLargeN} holds for all \(r\ge0\). Applying \eqref{EqGenericBoundLargeN} to both factors, we obtain
\begin{align}
&\frac{2n}{n+2i}
\sum_{\substack{
n_1+n_2=n+2\\
12\le n_1\le n_2
}}
\sum_{j=0}^{i+2}
|u_{n_1,j}|\,
|w_{n_2,i+2-j}|
\nonumber\\
&\qquad\le
2K^{\frac n2+i}
\frac{n}{n+2i}
\sum_{\substack{
n_1+n_2=n+2\\
12\le n_1\le n_2
}}
F\left(
n_1,n_2,i,0,0,\frac14
\right).
\end{align}
Indeed, since \(n_1+n_2=n+2\), the product of the powers of \(K\) is
\[
K^{\frac{n_1}{2}+j-\frac32}
K^{\frac{n_2}{2}+i+2-j-\frac32}
=
K^{\frac n2+i}.
\]
By \eqref{EqF00},
\begin{align}
&\frac{2n}{n+2i}
\sum_{\substack{
n_1+n_2=n+2\\
12\le n_1\le n_2
}}
\sum_{j=0}^{i+2}
|u_{n_1,j}|\,
|w_{n_2,i+2-j}|\le
2C_0~K^{\frac n2+i}
\frac{
\left(
\frac n4+i-1
\right)!
}{
[(i+2)!]^{1/4}
}.
\label{EqProofBnLargeSector}
\end{align}

\medskip

\item\underline{The range \(4\le n_1\le10\):}
We decompose the interior part of the convolution sum into its endpoint and interior
contributions:
\[
\sum_{j=0}^{i+2}
=
\sum_{j\in\{0,1,i+1,i+2\}}
+
\sum_{j=2}^{i},
\]
where repeated indices are counted only once. 

The second sum is empty
when \(i\in\{0,1\}\). In these two cases, the convolution consists
entirely of endpoint contributions, which are estimated separately
below.

If \(i\ge2\),
then, for \(2\le j\le i\), both indices \(j\) and \(i+2-j\) are at
least \(2\). Hence the factorial estimates in the definitions of the
spaces may be applied directly, and
\begin{align}
&\frac{2n}{n+2i}
\sum_{\substack{
n_1+n_2=n+2\\
4\le n_1\le10\\
n_1\le n_2
}}
\sum_{j=2}^{i}
|u_{n_1,j}|\,
|w_{n_2,i+2-j}|\le
2K^{\frac n2+i}
\frac{n}{n+2i}
\sum_{\substack{
n_1+n_2=n+2\\
4\le n_1\le10\\
n_1\le n_2
}}
F\left(
n_1,n_2,i,2,2,\frac14
\right).
\end{align}
The estimate \eqref{EqF22} therefore gives
\begin{align}
&\frac{2n}{n+2i}
\sum_{\substack{
n_1+n_2=n+2\\
4\le n_1\le10\\
n_1\le n_2
}}
\sum_{j=2}^{i}
|u_{n_1,j}|\,
|w_{n_2,i+2-j}|
\le
2C_1K^{\frac n2+i}
\frac{
\left(
\frac n4+i-1
\right)!
}{
[(i+2)!]^{1/4}
}.
\label{EqProofBnInteriorSector}
\end{align}
It remains to estimate the four endpoint terms
\[
j=0,\qquad j=1,\qquad j=i+1,\qquad j=i+2.
\]
In the sequel, we prove that there exists a universal constant \(C_{\partial}>0\)
such that
\begin{align}
\frac{2n}{n+2i}
\sum_{\substack{
n_1+n_2=n+2\\
4\le n_1\le10\\
n_1\le n_2
}}
\sum_{j\in\{0,1,i+1,i+2\}}
|u_{n_1,j}|\,
|w_{n_2,i+2-j}|
\le
C_{\partial}K^{\frac n2+i}
\frac{
\left(
\frac n4+i-1
\right)!
}{
[(i+2)!]^{1/4}
}.
\label{EqProofBnBoundarySector}
\end{align}
 Since \(n_1\in\{4,6,8,10\}\), there are only
finitely many possible values of the first index. For \(j=0\), the
definitions give
\[
|u_{n_1,0}|
\lesssim
K^{\frac{n_1}{2}-\frac32}.
\]
Moreover, the bound for the \((i+2)\)-th component of \(w_{n_2}\)
implies
\[
|w_{n_2,i+2}|
\lesssim
K^{\frac{n_2}{2}+i+\frac12}
\frac{
\left(
\frac{n_2}{4}+i-1
\right)!
}{
[(i+2)!]^{1/4}
}.
\]
Since \(n_1+n_2=n+2\), the product of the powers of \(K\) is
\(K^{\frac n2+i}\). Moreover,
\[
\frac n4+i-1
-
\left(
\frac{n_2}{4}+i-1
\right)
=
\frac{n_1-2}{4}
\in
\left\{
\frac12,1,\frac32,2
\right\}.
\]
Lemma~\ref{LemGammaRatio} gives
\[
\frac{n}{n+2i}
\frac{
\left(
\frac{n_2}{4}+i-1
\right)!
}{
\left(
\frac n4+i-1
\right)!
}
\lesssim1,
\]
uniformly in \(n\) and \(i\). Hence the contribution corresponding to
\(j=0\) satisfies \eqref{EqProofBnBoundarySector}.

For $j=1$ and \(i=0\), the contribution follows
directly from the first-component bounds in the definitions of
\(\mathcal S_{n_1}(K)\) and \(\mathcal S_{n_2}(K)\). 
For \(j=1\) and $i\ge 1$, the defining bounds and \(K>1\) give
\[
|u_{n_1,1}|
\lesssim
K^{\frac{n_1}{2}-\frac12},
\]
and
\[
|w_{n_2,i+1}|
\lesssim
K^{\frac{n_2}{2}+i-\frac12}
\frac{
\left(
\frac{n_2}{4}+i-2
\right)!
}{
[(i+1)!]^{1/4}
}.
\]
Thus the product again carries the factor \(K^{\frac n2+i}\).
Furthermore,
\[
\frac n4+i-1
-
\left(
\frac{n_2}{4}+i-2
\right)
=
1+\frac{n_1-2}{4}
\in
\left\{
\frac32,2,\frac52,3
\right\}.
\]
Since
\[
\frac{[(i+2)!]^{1/4}}{[(i+1)!]^{1/4}}
=
(i+2)^{1/4},
\]
the standard Gamma-ratio estimate from Lemma \ref{LemGammaRatio} gives
\[
\frac{n}{n+2i}
(i+2)^{1/4}
\frac{
\left|
\frac{n_2}{4}+i-2
\right|!
}{
\left(
\frac n4+i-1
\right)!
}
\lesssim1,
\]
uniformly in \(n\) and \(i\). Hence the contribution corresponding to
\(j=1\) also satisfies \eqref{EqProofBnBoundarySector}.

It remains to consider the opposite endpoints
\(j=i+1\) and \(j=i+2\). Writing
\[
j=i+2-r,
\qquad
r\in\{0,1\},
\]
the defining bounds give
\[
|u_{n_1,i+2-r}|\,
|w_{n_2,r}|
\lesssim
K^{\frac n2+i}
\frac{
\left(
\frac{n_1}{4}+i-1-r
\right)!
}{
[(i+2-r)!]^{1/4}
}.
\]
To compare this with \eqref{EqProofBnTarget}, set
\[
x_r:=\frac{n_1}{4}+i-r,
\qquad
\delta_r:=\frac{n_2-2}{4}+r.
\]
Since \(n_1+n_2=n+2\), one has
\[
x_r+\delta_r=\frac n4+i.
\]
Consequently, Lemma~\ref{LemGammaRatio} yields
\[
\frac{
\Gamma(x_r)
}{
\Gamma\!\left(\frac n4+i\right)
}
=
\frac{\Gamma(x_r)}{\Gamma(x_r+\delta_r)}
\le
2x_r^{-\delta_r}.
\]
For \(r=0\), one has \(x_0\ge1\) and
\(\delta_0\ge\frac12\), so Lemma~\ref{LemGammaRatio} applies for every
\(i\in\mathbb N_0\). For \(r=1\), the case \(i=0\) coincides with the
endpoint \(j=1\) already treated above. If \(i\ge1\), then
\[
x_1\ge i,
\qquad
\delta_1\ge\frac32,
\]
and therefore
\[
(i+2)^{1/4}x_1^{-\delta_1}
\le
(i+2)^{1/4}i^{-3/2}
\le
3^{1/4}.
\]
Hence the required estimate holds uniformly for both
\(r\in\{0,1\}\) and we obtain in both cases,
\[
\frac{n}{n+2i}
|u_{n_1,i+2-r}|\,
|w_{n_2,r}|
\lesssim
K^{\frac n2+i}
\frac{
\left(
\frac n4+i-1
\right)!
}{
[(i+2)!]^{1/4}
}.
\]
Thus the contributions corresponding to \(j=i+1\) and \(j=i+2\)
satisfy \eqref{EqProofBnBoundarySector}.
\end{itemize}
Combining \eqref{EqProofBnLargeSector},
\eqref{EqProofBnInteriorSector}, and
\eqref{EqProofBnBoundarySector}, we obtain
\begin{align}
\left|
\bigl(
\mathbb T^2\circ\mathbb B_n(u,w)
\bigr)_i
\right|
&\le
C_*
K^{\frac n2+i}
\frac{
\left(
\frac n4+i-1
\right)!
}{
[(i+2)!]^{1/4}
},
\label{EqProofBnFinalIntermediate}
\end{align}
where
\[
C_*:=2C_0+2C_1+C_{\partial}
\]
is independent of \(n\), \(i\), \(K\), and of the admissible
coefficient \(\alpha\). Choose
\[
C_{\mathbb B}\ge\max\{1,C_*\}.
\]
If \(K\ge C_{\mathbb B}^{\,2}\), then
\(C_*\le K^{1/2}\), and \eqref{EqProofBnFinalIntermediate} yields
\[
\left|
\bigl(
\mathbb T^2\circ\mathbb B_n(u,w)
\bigr)_i
\right|
\le
K^{\frac n2+i+\frac12}
\frac{
\left(
\frac n4+i-1
\right)!
}{
[(i+2)!]^{1/4}
}.
\]
Thus
\[
\bigl(
\mathbb T^2\circ\mathbb B_n(u,w)
\bigr)_i
\in
\mathcal S_n^{(i+2)}(K),
\qquad
i\in\mathbb N_0.
\]
Equivalently,
\[
\bigl(\mathbb B_n(u,w)\bigr)_r
\in
\mathcal S_n^{(r)}(K),
\qquad r\ge2.
\]
Since admissibility implies
\[
\bigl(\mathbb B_n(u,w)\bigr)_0
=
\bigl(\mathbb B_n(u,w)\bigr)_1
=
0,
\]
we conclude that
\[
\mathbb B_n(u,w)\in\mathcal S_n(K).
\]
All estimates are uniform over the admissible coefficient families
\(\alpha\), which concludes the proof.

\end{proof}
Next we consider the bilinear terms involving the two-point sector.
After one forward shift, the \(i\)-th component involves a convolution
of order \(i+1\), which we estimate in
\(\mathcal S_n^{(i+2)}(K)\).
\begin{proposition}\label{LemOperatorBnPlus2}
There exists a constant \(C_{\mathbb B}>1\) with the following
property. Let \(n\ge2\) be even and let $
\alpha_{2,n}
=
\bigl(\alpha_{2,n}^{(r)}\bigr)_{r\ge1}$
be a coefficient family satisfying
\begin{equation}\label{EqAlphaB2nShifted}
\left|
\alpha_{2,n}^{(i+1)}
\right|
\le
\frac{n}{n+2i},
\qquad
i\in\mathbb N_0.
\end{equation}
For \(n=2\), we take
\begin{equation}\label{EqAlpha22}
\alpha_{2,2}^{(r)}
:=
\frac{1}{r+1},
\qquad
r\in\mathbb N_0.
\end{equation}
Then, for every \(K\ge C_{\mathbb B}^{\,2}\),
every \(u\in\mathcal S_2(K)\), every
\(w\in\mathcal S_n(K)\), and every \(i\in\mathbb N_0\),
\begin{equation}\label{EqMapB2nComponent}
\left(
\mathbb T\circ
\mathbb B_{2,n}^{\alpha}(u,w)
\right)_i
\in
\mathcal S_n^{(i+2)}(K).
\end{equation}
The constant \(C_{\mathbb B}\) is independent of \(n\), \(i\), \(K\),
and of the coefficient family satisfying
\eqref{EqAlphaB2nShifted}.
\end{proposition}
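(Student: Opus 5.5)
We have to bound, for each $i\in\mathbb N_0$,
\[
\bigl(\mathbb T\circ\mathbb B_{2,n}^{\alpha}(u,w)\bigr)_i
=\alpha_{2,n}^{(i+1)}\sum_{j=0}^{i+1}u_{2,j}\,w_{n,i+1-j}.
\]
The plan is to show it is dominated by the $(i+2)$-th defining bound of $\mathcal S_n(K)$. Since this bound carries one more power of $K$ than the product of the bounds of the factors, a constant $C_*$ independent of $n,i,K$ is enough. The absorption step is the same as at the end of the proof of Proposition~\ref{LemOperatorBnPlusTwo}: for $K\ge C_{\mathbb B}^2$ we have $C_*\le K^{1/2}$.

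Power counting in $K$ goes as follows. For $n\ge6$, the term $u_{2,j}$ carries $K^{j+1/2}$ and $w_{n,i+1-j}$ carries $K^{n/2+i-j-1/2}$. Their product is $K^{n/2+i}$, half a power below the target $K^{n/2+i+1/2}$. The cases $n=2$ and $n=4$ have the same structure, with exponents $K^{i+1}$ against $K^{i+5/2}$ and $K^{i+3/2}$ against $K^{i+5/2}$, so they leave even more room. We therefore only need uniform control of the factorial weights, using $|\alpha_{2,n}^{(i+1)}|\le n/(n+2i)$. For $n=2$ the weight $1/(i+2)$ from \eqref{EqAlpha22} is even better.

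We split the convolution into endpoints and interior, as in Proposition~\ref{LemOperatorBnPlusTwo}.

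\emph{Endpoints} $j\in\{0,1,i,i+1\}$. Here one factor is a low component, bounded by $O(\sqrt K)$ or $O(K)$ via \eqref{S2}, or via the $i=0,1$ bounds of \eqref{Sn} and \eqref{S4}. The other factor is $w_{n,i+1}$ or $w_{n,i}$, respectively $u_{2,i+1}$ or $u_{2,i}$. The terms with $w_{n,i+1}$ or $w_{n,i}$ are compared with the target $(\frac n4+i-1)!/((i+2)!)^{1/4}$ using Lemma~\ref{LemGammaRatio}, since $\Gamma(\frac n4+i-1)/\Gamma(\frac n4+i)\lesssim(\frac n4+i)^{-1}$. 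This extra decay, together with the factor $n/(n+2i)$, absorbs the loss $(i+2)^{1/4}$ or $((i+1)(i+2))^{1/4}$. When $u_{2,i+1}$ carries the high index, its weight is $(i-2)!/(i!)^{1/4}$. We compare it with $\Gamma(\frac n4+i)$, which is at least $\Gamma(i+\frac32)\gtrsim i^{3/2}(i-1)!$ for $n\ge6$, so this term is fine. For $n=2,4$ the target factorials are $(i-1)!$ and $i!$ respectively, and the check is a direct computation of the same type as in Proposition~\ref{PropMultiplicationOperators}. The small cases $i=0,1$, where every index is an endpoint, are handled by explicit numerics, absorbing constants into $C_*$.

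\emph{Interior} $2\le j\le i-1$. Both factorial bounds apply, and we need a uniform estimate of
\[
\frac{n}{n+2i}\sum_{j=2}^{i-1}\frac{(j-3)!\,\bigl(\tfrac n4+i-j-2\bigr)!}{((j-1)!)^{1/4}((i+1-j)!)^{1/4}}
\lesssim\frac{(\frac n4+i-1)!}{((i+2)!)^{1/4}}.
\]
I would look for this among the $F(\cdot)$-type estimates of Appendix~\ref{appendix}, taken from \cite{KopperWang2025}, treating the two-point sequence formally as an $n_1=2$ entry with shifted index $j-1$. If no exact match is available, I would prove it directly by a log-convexity argument: $j\mapsto\Gamma(j-2)\Gamma(\frac n4+i-j-1)$ is log-convex, so the sum is dominated by its endpoints $j=2$ and $j=i-1$ times a geometric factor. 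Each endpoint term is then $\lesssim(\frac n4+i-1)!\,(\frac n4+i)^{-1}$ times fractional powers, and these are controlled as in the endpoint step.

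I expect the interior convolution bound to be the main obstacle, because it must hold uniformly in both $n$ and $i$. The delicate regime is $n$ small with $i$ large, where the $(j-3)!$ growth of the two-point sector competes with the target factorial and only the $(\cdot)^{1/4}$ weights and the factor $n/(n+2i)$ provide slack. I would first verify the log-convexity domination numerically at $n=2,4$ before committing to it.
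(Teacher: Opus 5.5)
Your plan is essentially the paper's proof. It uses the same splitting of the convolution into endpoints $j\in\{0,1,i,i+1\}$ and interior $2\le j\le i-1$, controls the endpoints with Lemma~\ref{LemGammaRatio}, matches the interior to the Kopper--Wang estimate \eqref{EqF22}, treats finitely many small $i$ by hand, and absorbs the resulting constant into the spare $K^{1/2}$ once $K\ge C_{\mathbb B}^{2}$.

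There are two small corrections, neither of which affects the argument:
\begin{itemize}
\item For $n\ge6$, the interior terms with $j\ge3$ are exactly $F(4,n,i-2,2,2,\frac14)$ with $\nu=j-1$. So the right identification is $n_1=4$, not $n_1=2$; Lemma~\ref{LemKopperWangFactorialSum} excludes $n_1=2$ in any case. Estimate \eqref{EqF22} then applies with $n$ replaced by $n+2$ and $k=i-2$. The $j=2$ term must be split off, because its weight is $|j-3|!=1$ rather than $\Gamma(0)$.
\item For $n=2$ and $n=4$, the product of the factor bounds is $K^{i+2}$, not $K^{i+1}$ or $K^{i+3/2}$. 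So the margin is exactly $K^{1/2}$, the same as for $n\ge6$, which still suffices.
\end{itemize}
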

\begin{remark}
    Note that the coefficient families occurring in the mean-field hierarchy satisfy
\eqref{EqAlphaB2nShifted}. For \(n\ge4\), we have 
\[
\left|
\alpha_{g;2,n}^{(i+1)}
\right|
=
\frac{1}{n+2i}
\le
\frac{n}{n+2i}~,
\]
while
\[
\left|
\alpha_{k;2,n}^{(i+1)}
\right|
=
\left|
\beta_{gf;2,n}^{(i+1)}
\right|
=
\frac{n}{n+2i}~.
\]
For \(n=2\), \eqref{EqAlpha22} gives
\[
\left|
\alpha_{2,2}^{(i+1)}
\right|
=
\frac{1}{i+2}
\le
\frac{1}{i+1}
=
\frac{2}{2+2i}.
\]
Thus Proposition~\ref{LemOperatorBnPlus2} applies to all bilinear terms
involving the two-point sector.
\end{remark}
\begin{proof}
Let \(u\in\mathcal S_2(K)\), \(w\in\mathcal S_n(K)\), and fix
\(i\in\mathbb N_0\). By definition,
\[
\left(
\mathbb T\circ\mathbb B_{2,n}^{\alpha}(u,w)
\right)_i
=
\alpha_{2,n}^{(i+1)}
\sum_{j=0}^{i+1}
u_{2,j}w_{n,i+1-j}.
\]
Hence \eqref{EqAlphaB2nShifted} gives
\begin{equation}\label{EqB2nInitialEstimate}
\left|
\left(
\mathbb T\circ\mathbb B_{2,n}^{\alpha}(u,w)
\right)_i
\right|
\le
\frac{n}{n+2i}
\sum_{j=0}^{i+1}
|u_{2,j}|\,|w_{n,i+1-j}|.
\end{equation}

We first consider \(n\ge6\). The cases \(i=0,1,2\) involve only
finitely many convolution terms and follow directly from the defining
bounds of \(\mathcal S_2(K)\) and \(\mathcal S_n(K)\), after
increasing \(C_{\mathbb B}\) if necessary. We therefore assume
\(i\ge3\) and write
\[
\sum_{j=0}^{i+1}
=
\sum_{j\in\{0,1,i,i+1\}}
+
\sum_{j=2}^{i-1}.
\]
We start with the four endpoint contributions. From
\eqref{S2}--\eqref{Sn}, we obtain that
\begin{align}
\frac{n}{n+2i}|u_{2,0}w_{n,i+1}|
&\le
\frac{K^{\frac n2+i}}{4}
\frac{n}{n+2i}
\frac{\left(\frac n4+i-2\right)!}
{[(i+1)!]^{1/4}}
\nonumber\\
&=
\left(\frac{n}{4(n+2i)}
\frac{(i+2)^{1/4}}{\frac n4+i-1}\right)K^{\frac n2+i}
\frac{\left(\frac n4+i-1\right)!}
{[(i+2)!]^{1/4}},
\label{EqEndpointJ0}
\end{align}
and 
\begin{align}
\frac{n}{n+2i}|u_{2,1}w_{n,i}|
&\le
\frac{K^{\frac n2+i-\frac12}}{2}
\frac{n}{n+2i}
\frac{\left(\frac n4+i-3\right)!}
{(i!)^{1/4}}
\nonumber\\
&=
\left(
\frac{n}{2(n+2i)}
\frac{[(i+1)(i+2)]^{1/4}}
{\left(\frac n4+i-2\right)
 \left(\frac n4+i-1\right)}\right)~
K^{\frac n2+i-\frac12}\frac{\left(\frac n4+i-1\right)!}
{[(i+2)!]^{1/4}}.
\label{EqEndpointJ1}
\end{align}
For $n\ge 6$ and $j\ge 1$, one has
\begin{equation}
 \frac{(j+2)^{1/4}}{\frac n4+j-1}
\le
\frac{2}{(j+2)^{3/4}}
\le 1.
\end{equation}
Using this bound with $j=i$ and $j=i-1$, the numerical factors in
\eqref{EqEndpointJ0}--\eqref{EqEndpointJ1}
are bounded uniformly for
\(n\ge6\) and \(i\ge3\). For the opposite endpoints, using \(K>1\) and
\[
1+\frac{nK}{2}
\le
\frac{n+2}{2}K,
\]
we obtain
\begin{align}
\frac{n}{n+2i}|u_{2,i}w_{n,1}|
&\le
\frac{n+2}{2n(n+2i)}
K^{\frac n2+i}
\frac{(i-3)!}{[(i-1)!]^{1/4}},
\label{EqEndpointJi}
\\
\frac{n}{n+2i}|u_{2,i+1}w_{n,0}|
&\le
\frac{1}{2n(n+2i)}
K^{\frac n2+i}
\frac{(i-2)!}{(i!)^{1/4}}.
\label{EqEndpointJiPlusOne}
\end{align}
After factoring out
\[
K^{\frac n2+i}
\frac{\left(\frac n4+i-1\right)!}
{[(i+2)!]^{1/4}}
\] in \eqref{EqEndpointJi}-\eqref{EqEndpointJiPlusOne}, the remaining Gamma ratios are
\[
\frac{\Gamma(i-2)}
{\Gamma(\frac n4+i)}
\qquad\text{and}\qquad
\frac{\Gamma(i-1)}
{\Gamma(\frac n4+i)},
\]
whose respective gaps are $
\frac n4+2$ and $\frac n4+1$. Lemma~\ref{LemGammaRatio} therefore gives a uniform bound. Consequently,
there exists \(C_{\partial}>0\), independent of \(n\), \(i\), and
\(K\), such that
\begin{equation}\label{EqEndpointContributions}
\frac{n}{n+2i}
\sum_{j\in\{0,1,i,i+1\}}
|u_{2,j}|\,|w_{n,i+1-j}|
\le
C_{\partial}
K^{\frac n2+i}
\frac{\left(\frac n4+i-1\right)!}
{[(i+2)!]^{1/4}}.
\end{equation}
It remains to consider the interior convolution. We would like to establish that 
\begin{equation}\label{EqB2nInteriorConvolution}
\frac{n}{n+2i}
\sum_{j=2}^{i-1}
\frac{|j-3|!}{[(j-1)!]^{1/4}}
\frac{
\left|
\frac n4+i-j-2
\right|!
}{
[(i+1-j)!]^{1/4}
}
\le
C~
\frac{
\left(
\frac n4+i-1
\right)!
}{
[(i+2)!]^{1/4}},
\end{equation}
with \(C\) universal. Separating the term \(j=2\) from the sum gives
\[
\sum_{j=3}^{i-1}
\frac{
(j-3)!
\left|
\frac n4+i-j-2
\right|!
}{
[(j-1)!(i+1-j)!]^{1/4}
}
=
F\left(4,n,i-2,2,2,\frac14\right).
\]
Applying \eqref{EqF22} with its parameter \(n\) replaced by \(n+2\)
and \(k=i-2\), we obtain
\[
\begin{aligned}
\frac{n}{n+2i}
~F\left(4,n,i-2,2,2,\frac14\right)
&\le
C_1~
\frac{n(n+2i-2)}
{(n+2i)(n+2)}
\frac{
\left(
\frac n4+i-\frac52
\right)!
}{
(i!)^{1/4}
}
\\
&\le
C_1~
\frac{
\left(
\frac n4+i-\frac52
\right)!
}{
(i!)^{1/4}
},
\end{aligned}
\]
Furthermore, Lemma~\ref{LemGammaRatio} yields
\[
\bigl((i+1)(i+2)\bigr)^{1/4}
\frac{
\left(
\frac n4+i-\frac52
\right)!
}{
\left(
\frac n4+i-1
\right)!
}
\le C,
\]
uniformly in \(n\ge6\) and \(i\ge3\). Hence
\[
\frac{
\left(
\frac n4+i-\frac52
\right)!
}{
(i!)^{1/4}}
\le
C
\frac{
\left(
\frac n4+i-1
\right)!
}{
[(i+2)!]^{1/4}}.
\]
The contribution corresponding to \(j=2\) is treated separately. Using
\eqref{EqGammaIterated}, we have
\[
\frac{
\left(\frac n4+i-4\right)!
}{
\left(\frac n4+i-1\right)!
}
=
\frac{1}{
\left(\frac n4+i-3\right)
\left(\frac n4+i-2\right)
\left(\frac n4+i-1\right)}.
\]
Consequently, we obtain that 
\[
\begin{aligned}
\frac{n}{n+2i}
\frac{
\left(\frac n4+i-4\right)!
}{
[(i-1)!]^{1/4}
}
&=
\frac{n}{n+2i}
\frac{[i(i+1)(i+2)]^{1/4}}
{
\left(\frac n4+i-3\right)
\left(\frac n4+i-2\right)
\left(\frac n4+i-1\right)
}
\frac{
\left(\frac n4+i-1\right)!
}{
[(i+2)!]^{1/4}
}
\\
&\le
C
\frac{
\left(\frac n4+i-1\right)!
}{
[(i+2)!]^{1/4}
},
\end{aligned}
\]
since the prefactor is bounded uniformly for \(n\ge6\) and \(i\ge3\). This
proves \eqref{EqB2nInteriorConvolution}.

Combining \eqref{EqEndpointContributions} and
\eqref{EqB2nInteriorConvolution}, and including the finitely many
cases \(i=0,1,2\), we obtain a constant \(C>0\), independent of
\(n\), \(i\), and \(K\), such that
\begin{equation}\label{EqB2nEstimateNge6}
\frac{n}{n+2i}
\sum_{j=0}^{i+1}
|u_{2,j}|\,|w_{n,i+1-j}|
\le
C K^{\frac n2+i}
\frac{
\left(
\frac n4+i-1
\right)!
}{
[(i+2)!]^{1/4}}.
\end{equation}
We next consider \(n=4\). The same decomposition, with the weights
defining \(\mathcal S_4(K)\), gives
\begin{equation}\label{EqB2nEstimateN4}
\frac{4}{4+2i}
\sum_{j=0}^{i+1}
|u_{2,j}|\,|w_{4,i+1-j}|
\le
C K^{i+2}
\frac{i!}{[(i+2)!]^{1/4}},
\qquad i\in\mathbb N_0,
\end{equation}
where \(C\) is independent of \(i\) and \(K\). 

Finally, let \(n=2\). In this case
\[
\left|
\alpha_{2,2}^{(i+1)}
\right|
=
\frac{1}{i+2}
\le
\frac{1}{i+1}
=
\frac{2}{2+2i}.
\]
Using the defining bounds of \(\mathcal S_2(K)\), and separating the
four endpoint terms from the interior convolution as above, one obtains
\begin{equation}\label{EqB22Estimate}
\frac{1}{i+1}
\sum_{j=0}^{i+1}
|u_{2,j}|\,|w_{2,i+1-j}|
\le
C K^{i+2}
\frac{|i-1|!}{[(i+1)!]^{1/4}},
\qquad i\in\mathbb N_0.
\end{equation}
Indeed, the finitely many low-index terms are estimated directly,
while for the interior terms the Gamma recurrence gives the elementary
convolution bound
\[
\frac{1}{i+1}
\sum_{j=2}^{i-1}
\frac{|j-3|!}{[(j-1)!]^{1/4}}
\frac{|i-j-2|!}{[(i-j)!]^{1/4}}
\le
C
\frac{|i-1|!}{[(i+1)!]^{1/4}}.
\]

We may now choose \(C_{\mathbb B}>1\) larger than all constants
appearing above. If \(K\ge C_{\mathbb B}^{\,2}\), then
\(C\le K^{1/2}\). Therefore
\eqref{EqB2nEstimateNge6}, \eqref{EqB2nEstimateN4}, and
\eqref{EqB22Estimate} yield, respectively, the defining bounds of
\[
\mathcal S_n^{(i+2)}(K),
\qquad
n\ge6,
\]
\[
\mathcal S_4^{(i+2)}(K),
\]
and
\[
\mathcal S_2^{(i+2)}(K).
\]
Consequently,
\[
\left(
\mathbb T\circ\mathbb B_{2,n}^{\alpha}(u,w)
\right)_i
\in
\mathcal S_n^{(i+2)}(K)
\]
for every even \(n\ge2\) and every \(i\in\mathbb N_0\), which
concludes the proof.
\end{proof}
\subsection{Proof of Theorem~\ref{ThmCoefficientHierarchy}}

We now prove Theorem~\ref{ThmCoefficientHierarchy}. The argument combines
the mapping properties established in the previous subsection with the
recursive structure of the discrete mean-field hierarchy.

The coefficients of orders \(0\) and \(1\) are treated separately.
Indeed, the corresponding bounds in the definition of
\(\mathcal S_2(K)\) and \(\mathcal S_n(K)\) are stronger than the
factorial estimates used for orders \(i\ge2\), and therefore do not
follow directly from the mapping properties. We first establish the
following low-order estimate.

\begin{proposition}[Low-order stability]
\label{ThmCoefficientHierarchy0}
Let
\[
(f_{2,i})_{i\in\mathbb N_0},
\qquad
(h_{2,i})_{i\in\mathbb N_0},
\qquad
(g_{n,i})_{\substack{n\ge4,\;n\in2\mathbb N\\ i\in\mathbb N_0}},
\qquad
(k_{n,i})_{\substack{n\ge4,\;n\in2\mathbb N\\ i\in\mathbb N_0}}
\]
be a solution of the discrete mean-field hierarchy
\eqref{eq:f2_recursion}--\eqref{eq:k_compatibility_one}, satisfying
the boundary conditions
\eqref{EqMeanFieldBC}--\eqref{EqMeanFieldBC2}. Assume that \(K>1\) is
sufficiently large and that
\begin{equation}\label{EqInitialBounds0}
|f_{2,0}|
\le
\frac{\sqrt K}{4},
\qquad
|h_{2,0}|
\le
\frac{\sqrt K}{16},
\qquad
|g_{4,0}|
\le
\frac{\sqrt K}{32}.
\end{equation}
Then, for \(i\in\{0,1\}\),
\[
f_{2,i}\in\mathcal S_2^{(i)}(K),
\qquad
h_{2,i}\in\mathcal S_2^{(i)}(K),
\]
and, for every even integer \(n\ge4\),
\[
g_{n,i}\in\mathcal S_n^{(i)}(K),
\qquad
k_{n,i}\in\mathcal S_n^{(i)}(K).
\]
\end{proposition}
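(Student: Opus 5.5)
Our plan is to establish the bounds for orders \(0\) and \(1\) directly from the explicit low-order relations, treating the two-point and four-point sectors by hand and the sectors \(n\ge6\) by induction in \(n\) via the compatibility conditions. For the two-point sector at \(i=0\) the bounds are exactly \eqref{EqInitialBounds0} (note \(\sqrt K/16\le\sqrt K/4\)). For \(g_4\) and \(k_4\) at order \(0\) we use \(g_{4,0}=4\pi^2c_4\) with \(|g_{4,0}|\le\sqrt K/32\), and \(k_{4,0}=0\). For \(i=1\) in the two-point sector we evaluate \eqref{eq:f2_recursion}--\eqref{eq:h2_recursion} at \(i=0\). This gives
\[
f_{2,1}=3g_{4,0}+12k_{4,0}-f_{2,0}^2+f_{2,0},
\qquad
h_{2,1}=3k_{4,0}-2f_{2,0}h_{2,0}+\tfrac12 f_{2,0}^2,
\]
and hence \(|f_{2,1}|\le \tfrac{3\sqrt K}{32}+\tfrac{K}{16}+\tfrac{\sqrt K}{4}\le K/2\) and \(|h_{2,1}|\le \tfrac{K}{32}+\tfrac{K}{32}\le K/2\) once \(K\) is large. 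The four-point order-one coefficients \(g_{4,1},k_{4,1}\) do not come from \eqref{EqSequenceGnMu} (whose \(n=4\) case has \(\mathfrak I_4=\varnothing\) and \(n-4=0\)). We obtain them instead by differentiating \eqref{EqSequenceG4Mu}--\eqref{EqSequenceK4Mu} once at \(\mu=0\). This expresses \(g_{4,1}+4k_{4,1}\) and \(k_{4,1}\) through \(f_{2,0},f_{2,1},f_{2,2}\) and \(h_{2,0},h_{2,1},h_{2,2}\), up to the factor \(1/3\). Since \(f_{2,2},h_{2,2}\) are in turn fixed by \eqref{eq:f2_recursion} at \(i=1\), this step is a small triangular system. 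Each quantity is a fixed polynomial in \(f_{2,0},h_{2,0},g_{4,0}\) of degree at most three, so we must check that its size is at most \(K/32\). This bound is exactly what the weight \(\sqrt K\) on the initial data is designed to give, since cubic terms are of size \(K^{3/2}\cdot 4^{-3}\).

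We then treat \(n\ge6\) by strong induction in \(n\), using \eqref{eq:g_compatibility_zero} and \eqref{eq:k_compatibility_zero} at order \(0\), and \eqref{eq:g_compatibility_one} and \eqref{eq:k_compatibility_one} at order \(1\). Proposition~\ref{PropVanishingJets} together with \(f_n(0)=h_n(0)=0\) shows that for \(n\ge 6\) the rescaled jets \(g_{n,0},k_{n,0}\) are the genuine leading coefficients, and the compatibility relations solve for them. Equation \eqref{eq:g_compatibility_zero} taken at index \(n-2\) gives \(g_{n,0}\) as \(-\tfrac{n-2}{n-6}\) times (for \(n\ge8\)) a quadratic convolution of lower \(g_{m,0}\), and similarly for \(k\). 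In the convolution \(\sum g_{n_1,0}g_{n_2,0}\) with \(n_1+n_2=n\), the inductive bounds \(K^{n_1/2-3/2}/(2n_1^2)\) multiply to \(K^{n/2-3}\). Summing \(\sum 1/(4n_1^2n_2^2)\lesssim 1/n^2\) then leaves a spare factor \(K^{-3/2}\) against the target \(K^{n/2-3/2}/(2n^2)\). The order-one relations \eqref{eq:g_compatibility_one} and \eqref{eq:k_compatibility_one} are handled the same way: they are linear in \(g_{n,1}\) (resp.\ \(k_{n,1}\)) with coefficient \((n-2)/n\) (resp.\ \(n-2\)). The term \(g_{n,0}(2f_{2,0}+1-4/n)\) contributes at most \(\sqrt K\cdot K^{n/2-3/2}/n^2\), which is absorbed by the \(\tfrac{nK}{2}\) part of the target \(\tfrac{K^{n/2-3/2}}{n^2}(1+\tfrac{nK}{2})\).

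The main obstacle is the base of this induction, namely \(n=6\) (and \(n=8\)), together with the indexing of the compatibility relations. At these levels the convolution involves the four-point data \(g_{4,0}\sim\sqrt K\), which is not small compared with \(K^{n/2-3/2}\). The degenerate factor \(n-4\) (or \(n-6\) after the index shift) must also be checked not to vanish at the level where one solves. Our first attempt would be to verify these few cases explicitly, using \(|g_{4,0}|\le\sqrt K/32\) and the \(1/(n-1)\) weights in \eqref{eq:k_compatibility_zero}. If a degenerate coefficient does occur, we would instead read the corresponding value off the vanishing-jet structure of Proposition~\ref{PropVanishingJets}, which forces \(g_{6,0}=k_{6,0}=0\).
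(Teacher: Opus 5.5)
Your overall architecture matches the paper's: the two-point sector by hand, then strong induction in $n$ through the compatibility relations, with every term absorbed by a spare negative power of $K$. Your estimates for $f_{2,0},h_{2,0},f_{2,1},h_{2,1}$ and your sketch for $g_{n,1},k_{n,1}$ with $n\ge6$ follow it.

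\textbf{The gap is the four-point order-one step.} You assert that $g_{4,1},k_{4,1}$ do not come from \eqref{EqSequenceGnMu}, but only its $\mu^0$-coefficient degenerates at $n=4$. The $\mu^1$-coefficient contains $\frac{n-2}{n(n+1)}g_{n,1}=\frac1{10}g_{4,1}$ from the derivative term, and this is exactly \eqref{eq:g_compatibility_one} at $n=4$. It gives $g_{4,1}=-4f_{2,0}g_{4,0}$. Likewise \eqref{eq:k_compatibility_one} at $n=4$ gives $k_{4,1}=-g_{4,0}(2h_{2,0}-f_{2,0})$, using $k_{4,0}=0$. Both are quadratic and give $|g_{4,1}|\le K/32$ and $|k_{4,1}|\le 3K/256$ at once.

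Your substitute route does not work. The $\mu^1$-coefficients of \eqref{EqSequenceG4Mu}--\eqref{EqSequenceK4Mu} are literally \eqref{eq:f2_recursion}--\eqref{eq:h2_recursion} at $i=1$. So ``fixing $f_{2,2},h_{2,2}$ by \eqref{eq:f2_recursion}'' is circular: you have two equations for the four unknowns $g_{4,1},k_{4,1},f_{2,2},h_{2,2}$. No independent bound on $f_{2,2},h_{2,2}$ is available at this stage; in the paper they are controlled only afterwards, from the very order-one bounds you are trying to prove. Your size check also goes the wrong way: a cubic term of size $K^{3/2}/64$ is at most $K/32$ only for $K\le4$, while the statement requires $K$ large. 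Since $g_{4,1}$ and $k_{4,1}$ enter the convolutions of \eqref{eq:g_compatibility_one}--\eqref{eq:k_compatibility_one} for $n\ge6$, your first-order induction is left without a valid base.

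\textbf{The order-zero step is also mis-indexed.} The relation \eqref{eq:g_compatibility_zero} at index $n-2$ does not contain $g_{n,0}$: its convolution has $n_1,n_2\le n-4$, and it determines $g_{n-2,0}$. It is the relation at index $n$ that gives $g_{n,0}=-\frac{n}{n-4}\sum_{n_1+n_2=n+2}g_{n_1,0}g_{n_2,0}$ with $n_1,n_2\le n-2$. Similarly, \eqref{eq:k_compatibility_zero} at index $n$ gives $k_{n,0}$.

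Consequences of the correct indexing:
\begin{itemize}
\item The prefactors never vanish for $n\ge6$, so there are no degenerate base cases.
\item The product of weights is $K^{n/2-2}$, leaving a spare factor $K^{-1/2}$ (not $K^{-3/2}$), which suffices.
\item The case $n=6$ needs no special treatment, because the $\mathcal S_4^{(0)}(K)$ bound $\sqrt K/32$ is exactly $K^{n/2-3/2}/(2n^2)$ at $n=4$.
\end{itemize}

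Your fallback is moreover false. Proposition~\ref{PropVanishingJets} gives only $f_6(0)=h_6(0)=0$, not $g_{6,0}=\partial_\mu f_6(0)=0$. In fact the hierarchy forces $g_{6,0}=-3g_{4,0}^2$ and $k_{6,0}=\frac{9}{10}g_{4,0}^2$, both nonzero since $c_4>0$.
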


We shall
repeatedly use the compatibility relations derived in
Section~\ref{SecDiscreteHierarchy} by identifying equal powers of
\(\mu\). These relations will be recalled, in a form adapted to the
estimates, whenever they are needed.

The boundary condition \eqref{EqMeanFieldBC2} gives
\[
k_4(0)=0,
\]
and therefore
\[
k_{4,0}=0.
\]
This identity provides the initial value for the induction at
order \(i=0\).
\begin{proof}
The proof is divided into three steps. We first control the two-point
coefficients, then establish the zeroth-order estimates by simultaneous
induction on the even integer \(n\ge4\), and finally proceed in the
same way for the first-order coefficients.

Throughout the proof, \(C>0\) denotes a numerical constant whose value
may change from one occurrence to the next. We shall repeatedly use
the elementary estimates
\begin{equation}\label{EqElementaryDiscreteSums}
\sum_{\substack{n_1+n_2=n+2\\ n_1,n_2\ge4}}
\frac{1}{n_1^2n_2^2}
\le
\frac{C}{n^2},
\qquad
\sum_{\substack{n_1+n_2=n+2\\ n_1,n_2\ge4}}
\frac{1}{n_1n_2^2}
\le
\frac{C}{n},
\end{equation}
together with
\begin{equation}\label{EqElementaryDiscreteSumsTwo}
\sum_{\substack{n_1+n_2=n+2\\ n_1,n_2\ge4}}
\frac{1}{n_1n_2}
\le
C~\frac{\log n}{n}.
\end{equation}
The same bound as in the second estimate of
\eqref{EqElementaryDiscreteSums} holds with \(n_1\) and \(n_2\)
interchanged. These estimates follow by splitting the summation domain
according to \(n_1\le n_2\) and \(n_2<n_1\).

We assume \(K>1\) sufficiently large so that all estimates of the form
\(CK^{-1/2}\le1\) arising below are satisfied.

\medskip

\noindent\emph{The two-point coefficients:}
the estimates
\[
|f_{2,0}|
\le
\frac{\sqrt K}{4},
\qquad
|h_{2,0}|
\le
\frac{\sqrt K}{16}
\]
are part of the assumptions. Hence $
f_{2,0}$ and $h_{2,0}$ are in $\mathcal S_2^{(0)}(K)$. Moreover, the boundary condition \eqref{EqMeanFieldBC2} gives
\[
k_{4,0}=0.
\]
Evaluating \eqref{eq:f2_recursion} at \(i=0\), we obtain
\[
f_{2,1}
=
3g_{4,0}
+
12k_{4,0}
-
f_{2,0}^2
+
f_{2,0}.
\]
This implies that
\[
\begin{aligned}
|f_{2,1}|
&\le
3|g_{4,0}|
+
12|k_{4,0}|
+
|f_{2,0}|^2
+
|f_{2,0}|
\le
\frac{3\sqrt K}{32}
+
\frac{K}{16}
+
\frac{\sqrt K}{4}.
\end{aligned}
\]
For \(K>1\), the right-hand side is bounded by
\(K/2\). Therefore
$
f_{2,1}\in\mathcal S_2^{(1)}(K)$. Similarly, evaluating the recursion for \(h_2\) at \(i=0\) gives
\[
h_{2,1}
=
\frac12
\left(
6k_{4,0}
-
4f_{2,0}h_{2,0}
+
f_{2,0}^2
\right).
\]
Since \(k_{4,0}=0\), it follows that
\[
\begin{aligned}
|h_{2,1}|
&\le
2|f_{2,0}||h_{2,0}|
+
\frac12|f_{2,0}|^2\le
2\frac{\sqrt K}{4}\frac{\sqrt K}{16}
+
\frac12\frac{K}{16}
\le
\frac{K}{2}.
\end{aligned}
\]
Thus $
h_{2,1}\in\mathcal S_2^{(1)}(K)$.

\medskip

\noindent\emph{The zeroth-order coefficients:} we prove simultaneously, by induction on the even integer \(n\ge4\),
that
\[
g_{n,0}\in\mathcal S_n^{(0)}(K),
\qquad
k_{n,0}\in\mathcal S_n^{(0)}(K).
\]
For \(n=4\), the assumption \eqref{EqInitialBounds0}
and \(k_{4,0}=0\) imply that $
g_{4,0}$ and $k_{4,0}$ belong to $\mathcal S_4^{(0)}(K)$.\\
Let \(n\ge6\), and assume that the desired estimates hold for every
even integer \(m<n\). Since
\[
n_1+n_2=n+2,
\qquad
n_1,n_2\ge4,
\]
one necessarily has $
n_1<n$ and $
n_2<n$. The induction hypothesis is therefore applicable to every term in the
compatibility relations. Rearranging \eqref{eq:g_compatibility_zero}, we obtain
\[
g_{n,0}
=
-\frac{n}{n-4}
\sum_{\substack{n_1+n_2=n+2\\ n_1,n_2\ge4}}
g_{n_1,0}g_{n_2,0}.
\]
By the induction hypothesis and \eqref{EqInitialBounds0}, for every
even \(4\le m<n\),
\[
|g_{m,0}|
\le
\frac{K^{\frac m2-\frac32}}{2m^2}.
\]
Hence
\[
|g_{n_1,0}g_{n_2,0}|
\le
\frac{K^{\frac{n_1+n_2}{2}-3}}
{4n_1^2n_2^2}
=
\frac{K^{\frac n2-2}}
{4n_1^2n_2^2}.
\]
Consequently, \eqref{EqElementaryDiscreteSums} yields
\begin{align*}
|g_{n,0}|
&\le
\frac{n}{n-4}
\frac{K^{\frac n2-2}}{4}
\sum_{\substack{n_1+n_2=n+2\\ n_1,n_2\ge4}}
\frac1{n_1^2n_2^2}
\\
&\le
C\frac{K^{\frac n2-2}}{n^2}
\le
\frac{K^{\frac n2-\frac32}}{2n^2},
\end{align*}
and therefore $
g_{n,0}\in\mathcal S_n^{(0)}(K)$. We next consider \(k_{n,0}\). By
\eqref{eq:k_compatibility_zero}, we have 
\[
\begin{aligned}
|k_{n,0}|
&\le
\frac{2}{n-4}
\sum_{\substack{n_1+n_2=n+2\\ n_1,n_2\ge4}}
(n_1-1)
\left(
1+\frac{n_2-1}{n-1}
\right)
|k_{n_1,0}||g_{n_2,0}|
\\
&\quad
+
\frac{1}{(n-1)(n-4)}
\sum_{\substack{n_1+n_2=n+2\\ n_1,n_2\ge4}}
(n_1-1)(n_2-1)
|g_{n_1,0}||g_{n_2,0}|.
\end{aligned}
\]
Using the induction hypothesis, the estimate already obtained for
\(g_{n,0}\), and the elementary bounds
\[
n_j-1\le n_j,
\qquad
1+\frac{n_j-1}{n-1}\le2,
\]
we obtain from
\eqref{EqElementaryDiscreteSums}--\eqref{EqElementaryDiscreteSumsTwo} that
\[
\frac1{n-4}
\sum_{\substack{n_1+n_2=n+2\\n_1,n_2\ge4}}
\frac1{n_1n_2^2}
\le\frac{C}{n^2},
\qquad
\frac1{(n-1)(n-4)}
\sum_{\substack{n_1+n_2=n+2\\n_1,n_2\ge4}}
\frac1{n_1n_2}
\le\frac{C}{n^2}.
\]
which together with \eqref{EqElementaryDiscreteSums}
that
\[
|k_{n,0}|
\le
\frac{C}{\sqrt K}
\frac{K^{\frac n2-\frac32}}{2n^2}.
\]
By our choice of \(K\), we conclude that
\[
|k_{n,0}|
\le
\frac{K^{\frac n2-\frac32}}{2n^2}.
\]
Thus $
k_{n,0}\in\mathcal S_n^{(0)}(K)$. This completes the zeroth-order induction.

\medskip

\noindent\emph{The first-order coefficients.}
We now prove simultaneously, by induction on the even integer \(n\ge4\),
that
\[
g_{n,1}\in\mathcal S_n^{(1)}(K),
\qquad
k_{n,1}\in\mathcal S_n^{(1)}(K).
\]
First we consider \(n=4\). Since the summation domain in
\eqref{eq:g_compatibility_one} is empty and $n=4$, the compatibility relation reduces to
\[
2f_{2,0}g_{4,0}
+
\frac12g_{4,1}
=
0.
\]
Therefore
\[
g_{4,1}
=
-4f_{2,0}g_{4,0},
\]
which gives that
\[
|g_{4,1}|
\le
4\frac{\sqrt K}{4}\frac{\sqrt K}{32}
=
\frac{K}{32}.
\]
It follows that $
g_{4,1}\in\mathcal S_4^{(1)}(K)$. Likewise, setting \(n=4\) in
\eqref{eq:k_compatibility_one} gives
\[
2g_{4,0}
\left(
2h_{2,0}-f_{2,0}
\right)
+
8k_{4,0}f_{2,0}
+
2k_{4,1}
=
0.
\]
Since \(k_{4,0}=0\), we obtain
\[
k_{4,1}
=
-g_{4,0}
\left(
2h_{2,0}-f_{2,0}
\right).
\]
Consequently,
\[
\begin{aligned}
|k_{4,1}|\le
|g_{4,0}|
\left(
2|h_{2,0}|+|f_{2,0}|
\right)\le
\frac{\sqrt K}{32}
\left(
\frac{\sqrt K}{8}
+
\frac{\sqrt K}{4}
\right)=
\frac{3K}{256}
\le
\frac{K}{32}.
\end{aligned}
\]
Thus, we obtain that $
k_{4,1}\in\mathcal S_4^{(1)}(K)$.

We turn now to \(n\ge6\), and assume that the required first-order bounds hold
for every even integer \(m<n\). Using
\eqref{eq:g_compatibility_one}, we write
\[
g_{n,1}
=
-\frac{n}{n-2}
\left[
2
\sum_{\substack{n_1+n_2=n+2\\ n_1,n_2\ge4}}
g_{n_1,0}g_{n_2,1}
+
g_{n,0}
\left(
2f_{2,0}+1-\frac4n
\right)
\right].
\]
Therefore
\[
\begin{aligned}
|g_{n,1}|\le
\frac{2n}{n-2}
\sum_{\substack{n_1+n_2=n+2\\ n_1,n_2\ge4}}
|g_{n_1,0}||g_{n_2,1}|
+
\frac{n}{n-2}
|g_{n,0}|
\left(
2|f_{2,0}|+1
\right).
\end{aligned}
\]
Using the zeroth-order estimates, the induction hypothesis and
\eqref{EqElementaryDiscreteSums}, we obtain
\[
\frac{2n}{n-2}
\sum_{\substack{n_1+n_2=n+2\\ n_1,n_2\ge4}}
|g_{n_1,0}||g_{n_2,1}|
\le
\frac{C}{\sqrt K}
\frac{K^{\frac n2-\frac32}}{n^2}
\left(
1+\frac{nK}{2}
\right).
\]
We also have
\[
\begin{aligned}
|g_{n,0}|
\left(
2|f_{2,0}|+1
\right)
&\le
\frac{K^{\frac n2-\frac32}}{2n^2}
\left(
\frac{\sqrt K}{2}+1
\right)\le
\frac{C}{\sqrt K}
\frac{K^{\frac n2-\frac32}}{n^2}
\left(
1+\frac{nK}{2}
\right).
\end{aligned}
\]
It follows that
\[
|g_{n,1}|
\le
\frac{C}{\sqrt K}
\frac{K^{\frac n2-\frac32}}{n^2}
\left(
1+\frac{nK}{2}
\right),
\]
which implies the following
\[
|g_{n,1}|
\le
\frac{K^{\frac n2-\frac32}}{n^2}
\left(
1+\frac{nK}{2}
\right).
\]
Hence we deduce that $
g_{n,1}\in\mathcal S_n^{(1)}(K)$. Finally, we verify the inductive estimate for $k_{n,1}$. Starting from \eqref{eq:k_compatibility_one} we write
\[
\begin{aligned}
|k_{n,1}|
&\le
\frac{2}{n-2}
\sum_{\substack{n_1+n_2=n+2\\ n_1,n_2\ge4}}
\left[
n+
\frac{2(n_1-1)(n_2-1)}{n-1}
\right]
|k_{n_1,1}||g_{n_2,0}|
\\
&\quad
+
\frac{2}{(n-1)(n-2)}
\sum_{\substack{n_1+n_2=n+2\\ n_1,n_2\ge4}}
(n_1-1)(n_2-1)
|g_{n_1,0}||g_{n_2,1}|
\\
&\quad
+
\frac{2}{n-2}
|g_{n,0}|
\left(
2|h_{2,0}|+|f_{2,0}|
\right)
\\
&\quad
+
\frac{2n}{n-2}
|k_{n,0}||f_{2,0}|.
\end{aligned}
\]
Since
\[
n+
\frac{2(n_1-1)(n_2-1)}{n-1}
\le
3n,
\]
the induction hypothesis, the zeroth-order estimates, and
\eqref{EqElementaryDiscreteSums}--\eqref{EqElementaryDiscreteSumsTwo}
imply that the first two terms on the right-hand side are bounded by
\[
\frac{C}{\sqrt K}
\frac{K^{\frac n2-\frac32}}{n^2}
\left(
1+\frac{nK}{2}
\right).
\]
Furthermore, we have
\[
\begin{aligned}
|g_{n,0}|
\left(
2|h_{2,0}|+|f_{2,0}|
\right)
&\le
\frac{K^{\frac n2-\frac32}}{2n^2}
\left(
\frac{\sqrt K}{8}
+
\frac{\sqrt K}{4}
\right),\quad 
n|k_{n,0}||f_{2,0}|
&\le
\frac{K^{\frac n2-\frac32}}{2n}
\frac{\sqrt K}{4}.
\end{aligned}
\]
Both quantities are bounded by
\[
\frac{C}{\sqrt K}
\frac{K^{\frac n2-\frac32}}{n^2}
\left(
1+\frac{nK}{2}
\right).
\]
Combining these estimates gives
\[
|k_{n,1}|
\le
\frac{C}{\sqrt K}
\frac{K^{\frac n2-\frac32}}{n^2}
\left(
1+\frac{nK}{2}
\right).
\]
By our choice of \(K\), we obtain
\[
|k_{n,1}|
\le
\frac{K^{\frac n2-\frac32}}{n^2}
\left(
1+\frac{nK}{2}
\right).
\]
Therefore we deduce that $
k_{n,1}\in\mathcal S_n^{(1)}(K)$. This completes the proof.
\end{proof}
We now prove Theorem~\ref{ThmCoefficientHierarchy}. The argument is by
induction on the coefficient order, supplemented, at each fixed order,
by an induction on the number of external legs. The cases \(i=0\) and
\(i=1\) were established in
Proposition~\ref{ThmCoefficientHierarchy0}. It therefore remains to
consider \(i\ge2\).

The proof is based on the operator formulation of the coefficient
hierarchy established above. The mapping properties of the
multiplication operators
\(\mathbb M^{(1)}\), \(\widetilde{\mathbb M}^{(1)}\), and
\(\mathbb M^{(2)}\), together with those of the bilinear operators
\(\mathbb B_n^\alpha\) and \(\mathbb B_{2,n}^\alpha\), will allow us
to propagate the defining bounds of the spaces
\(\mathcal S_n(K)\) from one coefficient order to the next. 

\begin{proof}
Set
\[
K_\star
:=
\max\left\{
C_{\mathbb M},
C_{\mathbb B}^{\,2},
K_0
\right\},
\]
where \(K_0>1\) is chosen sufficiently large to absorb the finite
number of numerical factors arising when the quantitative estimates
of the preceding mapping propositions are combined in
\eqref{f2OP}--\eqref{h2OP}, \eqref{EqOperatorG}, and
\eqref{EqOperatorK}. We assume throughout that \(K\ge K_\star\).

The proof is by induction on the coefficient order, supplemented, at
each fixed order, by an induction on the number of external legs. By
Proposition~\ref{ThmCoefficientHierarchy0}, the conclusion holds at
orders \(0\) and \(1\). Let \(r\ge2\), and assume that
\begin{equation}\label{EqInductionCoefficient}
f_{2,j}\in\mathcal S_2^{(j)}(K),
\qquad
h_{2,j}\in\mathcal S_2^{(j)}(K),
\end{equation}
and
\begin{equation}\label{EqInductionCoefficientHigher}
g_{n,j}\in\mathcal S_n^{(j)}(K),
\qquad
k_{n,j}\in\mathcal S_n^{(j)}(K),
\end{equation}
for every \(j<r\) and every even integer \(n\ge4\). We prove the
corresponding bounds at order \(r\).

We first consider the two-point sector. Taking the \((r-1)\)-st
component of \eqref{f2OP} gives
\[
f_{2,r}
=
\left(
\mathbb M_2^{(2)}(g_4+4k_4)
-
\mathbb M_2^{(1)}(f_2)
+
\mathbb B_{2,2}^{\alpha_{2,2}}(f_2,f_2)
\right)_{r-1}.
\]
All coefficients entering the right-hand side have order at most
\(r-1\), and are therefore controlled by the induction hypothesis.
The multiplication terms are estimated by
Propositions~\ref{PropMultiplicationOperators} and
\ref{PropMultiplicationOperators2}. For the bilinear term, writing
\[
\bigl(
\mathbb B_{2,2}^{\alpha_{2,2}}(f_2,f_2)
\bigr)_{r-1}
=
\bigl(
\mathbb T\circ
\mathbb B_{2,2}^{\alpha_{2,2}}(f_2,f_2)
\bigr)_{r-2},
\]
Proposition~\ref{LemOperatorBnPlus2} applies since \(r\ge2\).
Combining the corresponding quantitative estimates and using the
choice of \(K_\star\), we obtain
\begin{equation}\label{F22}
f_{2,r}\in\mathcal S_2^{(r)}(K).
\end{equation}
Similarly, taking the \((r-1)\)-st component of \eqref{h2OP} gives
\[
h_{2,r}
=
\left(
\mathbb M_2^{(2)}(k_4)
-
\frac12
\mathbb B_{2,2}^{\alpha_{2,2}}
(f_2,f_2-4h_2)
\right)_{r-1}.
\]
Again, all coefficients on the right-hand side have order strictly
smaller than \(r\). The same mapping properties, together with the
choice of \(K_\star\), therefore yield
\begin{equation}\label{H22}
h_{2,r}\in\mathcal S_2^{(r)}(K).
\end{equation}

It remains to consider the higher-point sectors \(n\ge4\). At the
fixed coefficient order \(r\), we argue by induction on the even
integer \(n\ge4\). Assume that
\begin{equation}\label{EqInductionPointSector}
g_{m,r}\in\mathcal S_m^{(r)}(K),
\qquad
k_{m,r}\in\mathcal S_m^{(r)}(K)
\end{equation}
for every even integer \(4\le m<n\).

Write \(r=i+2\), with \(i\in\mathbb N_0\). Taking the \(i\)-th
component of \eqref{EqOperatorG}, we obtain
\begin{multline}
g_{n,r}
=
\left[
\mathbb M_n^{(2)}
(g_{n+2}+4k_{n+2})
\right]_i
-
\left[
\mathbb M_n^{(1)}
\circ\mathbb T(g_n)
\right]_i\\+
\left[
\mathbb T^2\circ
\mathbb B_n^{\beta_{gg}}(g,g)
\right]_i
+
2\left[
\mathbb T\circ
\mathbb B_{2,n}^{\beta_{gf}}(f_2,g_n)
\right]_i.
\label{EqProofGnComponent}
\end{multline}
Since \(i=r-2<r\), the induction hypothesis on the coefficient order
gives
\[
g_{n+2,i},\,k_{n+2,i}
\in
\mathcal S_{n+2}^{(i)}(K),
\]
and Proposition~\ref{PropMultiplicationOperators} applies to the first
term. Similarly, \(i+1=r-1<r\), so that
\[
g_{n,i+1}\in\mathcal S_n^{(i+1)}(K),
\]
and the second term is controlled by
Proposition~\ref{PropMultiplicationOperators2}.

For \(n\ge6\), the induction hypotheses and
Proposition~\ref{LemOperatorBnPlusTwo} give
\[
\left[
\mathbb T^2\circ
\mathbb B_n^{\beta_{gg}}(g,g)
\right]_i
\in
\mathcal S_n^{(r)}(K).
\]
For \(n=4\), this contribution is absent. Finally, since
\(i+1=r-1<r\), Proposition~\ref{LemOperatorBnPlus2} yields
\[
\left[
\mathbb T\circ
\mathbb B_{2,n}^{\beta_{gf}}(f_2,g_n)
\right]_i
\in
\mathcal S_n^{(r)}(K).
\]
Applying these mapping properties to the operator identity
\eqref{EqProofGnComponent}, and using the choice of \(K_\star\) to
absorb its fixed numerical coefficients, we conclude that $
g_{n,r}\in\mathcal S_n^{(r)}(K)$.

We next turn to \(k_n\). Taking the \(i\)-th component of
\eqref{EqOperatorK} gives
\begin{align}
k_{n,r}
={}&
\left[
\mathbb M_n^{(2)}(k_{n+2})
\right]_i
-
\left[
\widetilde{\mathbb M}_n^{(1)}
\circ\mathbb T(k_n)
\right]_i+
\left[
\mathbb T^2\circ
\mathbb B_n^{\alpha_{gg}}(g,g)
\right]_i
+
2\left[
\mathbb T^2\circ
\mathbb B_n^{\alpha_{kg}}(k,g)
\right]_i
\nonumber\\
&+
2\left[
\mathbb T\circ
\mathbb B_{2,n}^{\alpha_g}
(2h_2-f_2,g_n)
\right]_i
+
2\left[
\mathbb T\circ
\mathbb B_{2,n}^{\alpha_k}
(f_2,k_n)
\right]_i.
\label{EqProofKnComponent}
\end{align}
Since \(i=r-2<r\), the induction hypothesis on the coefficient order
gives that $
k_{n+2,i}$ belongs to $\mathcal S_{n+2}^{(i)}(K)$,
and Proposition~\ref{PropMultiplicationOperators} applies to the first
term. Similarly, since \(i+1=r-1<r\),
\[
k_{n,i+1}\in\mathcal S_n^{(i+1)}(K),
\]
and Proposition~\ref{PropMultiplicationOperators2Tilde} applies to the
second term.

For \(n\ge6\), the induction hypotheses and
Proposition~\ref{LemOperatorBnPlusTwo} yield
\[
\left[
\mathbb T^2\circ
\mathbb B_n^{\alpha_{gg}}(g,g)
\right]_i,
\quad
\left[
\mathbb T^2\circ
\mathbb B_n^{\alpha_{kg}}(k,g)
\right]_i
\in
\mathcal S_n^{(r)}(K).
\]
For \(n=4\), these contributions are absent.

Finally, the two terms involving the two-point sector contain
convolution coefficients of total order \(i+1=r-1\). Hence the
induction hypothesis on the coefficient order and
Proposition~\ref{LemOperatorBnPlus2} give
\[
\left[
\mathbb T\circ
\mathbb B_{2,n}^{\alpha_g}
(2h_2-f_2,g_n)
\right]_i,
\quad
\left[
\mathbb T\circ
\mathbb B_{2,n}^{\alpha_k}
(f_2,k_n)
\right]_i
\in
\mathcal S_n^{(r)}(K).
\]
Applying these mapping properties to the operator identity
\eqref{EqProofKnComponent}, and using the choice of \(K_\star\) to
absorb its fixed numerical coefficients, we conclude that
\[
k_{n,r}\in\mathcal S_n^{(r)}(K).
\]
This closes the induction on \(n\). Thus, for every even \(n\ge4\),
\[
g_{n,r}\in\mathcal S_n^{(r)}(K),
\qquad
k_{n,r}\in\mathcal S_n^{(r)}(K).
\]
Together with \eqref{F22} and \eqref{H22}, this closes the induction on \(r\) and completes the
proof. 
\end{proof}

\section{Construction of a smooth trivial solution to the mean-field
hierarchy}\label{sec7}

We now use the coefficient estimates established in the previous
section to construct a smooth solution of the mean-field hierarchy
\eqref{EqSequenceF2Mu}--\eqref{EqSequenceHnMu} which vanishes in the
ultraviolet limit \(\mu_{\max}\to+\infty\), equivalently
\(\alpha_0\to0\).

\begin{proposition}\label{PropSmoothTrivialSolution}
There exist smooth solutions \(f_n(\mu)\) and \(h_n(\mu)\) of
\eqref{EqSequenceF2Mu}--\eqref{EqSequenceHnMu}, satisfying the boundary
conditions \eqref{EqMeanFieldBC}--\eqref{EqMeanFieldBC2}, with the
following properties.
\begin{enumerate}
\item
The functions \(f_2\) and \(h_2\) are well defined on
\([0,\mu_{\max}]\), and, for every \(l\in\mathbb N_0\),
\begin{equation}\label{EqVanishingTwoPointSector}
\lim_{\mu_{\max}\to+\infty}
\partial_\mu^l f_2(\mu_{\max})
=
\lim_{\mu_{\max}\to+\infty}
\partial_\mu^l h_2(\mu_{\max})
=
0.
\end{equation}

\item
For every even integer \(n\ge4\) and every \(l\in\mathbb N_0\), the
functions \(\partial_\mu^l f_n\) and \(\partial_\mu^l h_n\) are well
defined on \([0,\mu_{\max}]\), and
\begin{equation}\label{EqVanishingHigherPointSector}
\lim_{\mu_{\max}\to+\infty}
\partial_\mu^l f_n(\mu_{\max})
=
\lim_{\mu_{\max}\to+\infty}
\partial_\mu^l h_n(\mu_{\max})
=
0.
\end{equation}
\end{enumerate}
\end{proposition}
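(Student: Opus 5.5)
The plan follows the three-stage strategy described in the subsection on the strategy of the proof. First, Theorem~\ref{ThmCoefficientHierarchy} supplies the coefficient family $(f_{2,i},h_{2,i},g_{n,i},k_{n,i})$ with $f_{2,i},h_{2,i}\in\mathcal S_2^{(i)}(K)$. In particular,
\[
|f_{2,i}|,\,|h_{2,i}|\le K^{i+\frac12}\,\frac{|i-3|!}{((i-1)!)^{1/4}},
\]
so the jets grow at most like $K^i\,(i!)^{3/4}$.

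Next, I would realise these jets by the ansatz \eqref{EqAnsatzf2}--\eqref{EqAnsatzh2}. Expanding $\frac{(n\mu)^{n-1}}{1+(n\mu)^n}=\sum_{q\ge0}(-1)^q (n\mu)^{n-1+qn}$ near $\mu=0$, the coefficient of $\mu^{i}$ receives contributions only from those $n$ dividing $i+1$. The term $n=i+1$ enters with coefficient $(i+1)^i$, which gives the triangular system
\[
b_{i+1}(i+1)^i=f_{2,i}-\sum_{\substack{n\mid i+1\\ n<i+1}}(-1)^{\frac{i+1}{n}-1}n^{i}\,b_n ,
\]
and similarly for $d_{i+1}$. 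I would solve this recursively. Since $n^i/(i+1)^i\le 2^{-i}$ for proper divisors $n$, an induction should give $|b_{i+1}|\lesssim |f_{2,i}|/(i+1)^i+\dots$. Combined with the bound above, this yields super-exponential decay, roughly $|b_{n}|\lesssim (CK)^n (n!)^{3/4}/n^{n-1}$. Each term $\frac{(n\mu)^{n-1}}{1+(n\mu)^n}$ is bounded by $1$ for $\mu\ge0$, and its $l$-th derivative is $O(n^{l})$ uniformly on $[0,\infty)$; I would check the latter by the substitution $x=n\mu$. Hence $\sum_n |b_n|n^l<\infty$, the series converges in every $C^l([0,\infty))$, and $f_2,h_2$ are smooth and carry the prescribed jets at $0$. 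Triviality \eqref{EqVanishingTwoPointSector} follows by dominated convergence. For fixed $n$, $\partial_\mu^l\bigl[\frac{(n\mu)^{n-1}}{1+(n\mu)^n}\bigr]\to0$ as $\mu\to\infty$, decaying like $\mu^{-1-l}$, and the summable majorant $|b_n|n^l$ is independent of $\mu$.

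Then I would construct the higher sectors from the functional equations. Equation \eqref{EqSequenceK4Mu} defines $k_4$ from $f_2,h_2$, and \eqref{EqSequenceG4Mu} then defines $g_4=f_4$. For $n\ge4$, \eqref{EqSequenceFnMu}--\eqref{EqSequenceHnMu} express $h_{n+2}$ and then $f_{n+2}$ polynomially in $f_m,h_m$ ($m\le n$) and $\partial_\mu f_n,\partial_\mu h_n$. Induction on $n$ therefore gives smooth $f_n,h_n$ on $[0,\mu_{\max}]$, with every derivative being a polynomial in derivatives of $f_2,h_2$ with no constant term, since each term contains a factor from the two-point sector. Thus $\partial_\mu^l f_n(\mu_{\max}),\partial_\mu^l h_n(\mu_{\max})\to0$ follows from \eqref{EqVanishingTwoPointSector} by induction on $n$. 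One more check is needed here: the polynomial for $f_4$ contains the linear term $\frac13\partial_\mu f_2$ and no constant, so there is no obstruction.

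The main obstacle is the boundary conditions for $n\ge4$. The functionally defined $f_n,h_n$ must satisfy $f_4(0)=g_{4,0}$, $h_4(0)=0$, and $f_n(0)=h_n(0)=0$ for $n\ge6$. This holds precisely because the jets of $f_2,h_2$ were chosen from the discrete hierarchy. Differentiating the functional equations at $0$ reproduces the recursions \eqref{eq:f2_recursion}--\eqref{eq:k_recursion} together with the compatibility conditions, so the uniqueness of the recursive determination forces the higher jets to agree with $g_{n,i},k_{n,i}$. In particular the vanishing jets of Proposition~\ref{PropVanishingJets} and the value $g_{4,0}=4\pi^2c_4$ hold. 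I would verify this by an induction on $n$ paralleling that proposition, which is where most care is required.
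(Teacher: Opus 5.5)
Your proposal follows the paper's route. You realise the jets of $f_2,h_2$ through the rational-kernel ansatz, with $b_n,d_n$ fixed by the triangular system \eqref{EqRecursionBi}--\eqref{EqRecursionDi}. You then get convergence of all termwise derivatives, deduce \eqref{EqVanishingTwoPointSector} by dominated convergence, and build $h_{n+2}$, $f_{n+2}$ upward in $n$ from \eqref{EqSequenceFnMu}--\eqref{EqSequenceHnMu}.

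Your remark that each $f_n,h_n$ is a polynomial without constant term in derivatives of $f_2,h_2$ is a compact form of the paper's induction. Your jet-matching argument for the boundary conditions at $n\ge4$ is also correct. The Taylor coefficients of the functional equations are exactly \eqref{eq:f2_recursion}--\eqref{eq:k_compatibility_one}. The discrete hierarchy solves these downward in $n$, and the functional construction solves them upward, so the jets agree level by level. This fills in a step the paper's proof of the proposition does not carry out; it is only asserted in the strategy subsection.

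Two of your intermediate estimates are wrong, although the argument survives once they are corrected.

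First, the $b_n$ do not decay super-exponentially in general, and your induction cannot produce such a bound. In \eqref{EqRecursionBi} with $i+1=N$ even, the divisor $\rho=2$ contributes $+2^{-(N-1)}b_{N/2}$. Iterating along $N=2^k$ generically gives only decay of order $4^{-N}$, far above $(CK)^N(N!)^{3/4}/N^{N-1}$. What the recursion does give, via the Kopper--Wang argument the paper uses, is $|b_n|+|d_n|\le C(K)\,n^2 2^{-n}$. This follows from $|b_{i+1}|\le |f_{2,i}|(i+1)^{-i}+\sum_{\rho\ge2}|b_{\{(i+1)/\rho\}}|\rho^{-i}$.

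Second, $\sup_{\mu\ge0}\bigl|\partial_\mu^l\bigl[(n\mu)^{n-1}/(1+(n\mu)^n)\bigr]\bigr|$ is not $O(n^l)$. After the substitution $x=n\mu$ you still need to bound $\sup_x|\partial_x^l\, x^{n-1}/(1+x^n)|$. This function has a transition layer of width about $1/n$ at $x=1$; its first derivative there is $(n-2)/4$. The correct bound is therefore $C_l n^{2l}$, as in \eqref{EqDerivativeBoundKernel}.

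Since $\sum_n n^{2l+2}2^{-n}<\infty$, you still get a $\mu$-independent summable majorant. Smoothness, the jet identities and the two-point triviality then follow as you describe.

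One further point that neither you nor the paper states: $b_n,d_n$ depend on $\alpha_0=e^{-\mu_{\max}}$ through $f_{2,0},h_{2,0}$, so the majorant must also be uniform in $\alpha_0\le1$. It is, because these initial values stay bounded and $K$ can be fixed once for all such $\alpha_0$.
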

\noindent First, we recall the following estimate from
\cite[Proposition~3.4]{KopperWang2025}.

\begin{proposition}\label{PropDerivativeRationalKernel}
Let $
x_n=n\mu$. For every \(l\in\mathbb N_0\),  \(n\ge l+1\), and
\(\mu\in[0,\mu_{\max}]\), one has
\begin{equation}\label{EqDerivativeRationalKernel}
\left|
\partial_\mu^l
\frac{x_n^{\,n-1}}{1+x_n^n}
\right|
\le
C_l
\frac{n^{\,n+l-1}\mu^{\,n-l-1}}{1+x_n^n},
\end{equation}
where \(C_0=1\) and
\begin{equation}\label{EqConstantsCl}
C_{l+1}
=
1+
\sum_{j=0}^{l}
\binom{l+1}{j}C_j
\le
4^{\,l+1}(l+1)!.
\end{equation}
\end{proposition}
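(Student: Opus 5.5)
The plan is to prove \eqref{EqDerivativeRationalKernel} by induction on \(l\), using the algebraic identity that defines the kernel instead of differentiating the quotient directly. Write
\[
\varphi_n(\mu):=\frac{x_n^{\,n-1}}{1+x_n^n},\qquad D_n(\mu):=1+x_n^n=1+n^n\mu^n .
\]
These satisfy \(\varphi_nD_n=x_n^{\,n-1}=n^{n-1}\mu^{n-1}\), which avoids the quotient rule.

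The base case \(l=0\) is the identity \(|\varphi_n|=n^{n-1}\mu^{n-1}/D_n\), so \(C_0=1\). For the inductive step, assume the bound for every \(j\le l\) and every \(n\ge j+1\), and fix \(n\ge l+2\). Apply \(\partial_\mu^{l+1}\) to \(\varphi_nD_n=n^{n-1}\mu^{n-1}\) and use the Leibniz rule to isolate the top-order term:
\[
D_n\,\partial_\mu^{l+1}\varphi_n
=
\partial_\mu^{l+1}\bigl(n^{n-1}\mu^{n-1}\bigr)
-\sum_{j=0}^{l}\binom{l+1}{j}\,\partial_\mu^j\varphi_n\,\partial_\mu^{l+1-j}D_n .
\]

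We bound the two kinds of terms using falling-factorial estimates. For \(m\le n\),
\[
\bigl|\partial_\mu^{m}D_n\bigr|=n^n\,n(n-1)\cdots(n-m+1)\,\mu^{n-m}\le n^{n+m}\mu^{n-m},
\]
and similarly
\[
\bigl|\partial_\mu^{l+1}(n^{n-1}\mu^{n-1})\bigr|\le n^{n+l}\mu^{n-l-2}.
\]
Since \(n\ge l+2\), all exponents of \(\mu\) are nonnegative and no singularity arises at \(\mu=0\). For \(j\le l\) we have \(n\ge l+2>j+1\), so the induction hypothesis applies. It gives
\[
\binom{l+1}{j}\,\bigl|\partial_\mu^j\varphi_n\bigr|\,\bigl|\partial_\mu^{l+1-j}D_n\bigr|
\le
\binom{l+1}{j}C_j\,
\frac{n^{n+j-1}\mu^{n-j-1}}{D_n}\,n^{n+l+1-j}\mu^{n-l-1+j}
=
\binom{l+1}{j}C_j\,n^{n+l}\mu^{n-l-2}\cdot\frac{x_n^n}{D_n}.
\]
The key observation is that \(x_n^n/D_n\le1\), so each term is at most \(\binom{l+1}{j}C_jn^{n+l}\mu^{n-l-2}\). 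Summing over \(j\) and dividing by \(D_n\) yields
\[
\bigl|\partial_\mu^{l+1}\varphi_n\bigr|\le C_{l+1}\,\frac{n^{\,n+l}\mu^{\,n-l-2}}{D_n},
\qquad
C_{l+1}=1+\sum_{j=0}^l\binom{l+1}{j}C_j .
\]
This is \eqref{EqDerivativeRationalKernel} at order \(l+1\).

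It remains to check the growth bound in \eqref{EqConstantsCl}, again by induction. Use \(C_j\le4^jj!\) and \(\binom{l+1}{j}j!=(l+1)!/(l+1-j)!\le(l+1)!\). Then
\[
C_{l+1}\le1+(l+1)!\sum_{j=0}^l4^j
=1+(l+1)!\,\frac{4^{l+1}-1}{3}
\le4^{l+1}(l+1)!.
\]

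The only point requiring care is absorbing the extra factor \(n^n\mu^n\) coming from \(\partial_\mu^{l+1-j}D_n\) through \(x_n^n/D_n\le1\). This absorption is what keeps the denominator \(1+x_n^n\) intact, rather than degrading it to a higher power. Everything else is bookkeeping of exponents under the condition \(n\ge l+1\).
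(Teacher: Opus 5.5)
Your proof is correct. You differentiate $\varphi_n(1+x_n^n)=x_n^{n-1}$ with the Leibniz rule, bound $\partial_\mu^{m}(1+x_n^n)$ and $\partial_\mu^{l+1}x_n^{n-1}$ by falling factorials, and absorb the surplus $x_n^n$ via $x_n^n/(1+x_n^n)\le 1$. This yields exactly the stated estimate: the ``$1$'' in $C_{l+1}=1+\sum_{j=0}^{l}\binom{l+1}{j}C_j$ comes from the numerator term and the binomial sum from the Leibniz terms, and your check of $C_l\le 4^l l!$ is also correct.

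The paper itself gives no argument and only refers to Proposition~3.4 of Kopper and Wang. Your induction reproduces precisely the constants that appear in the statement, so it is a self-contained version of the cited result.
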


\begin{proof}
See the proof of [Proposition~3.4] in \cite{KopperWang2025}.
\end{proof}

The construction follows the argument of \cite{KopperWang2025}, with
the difference that both the momentum-independent sector \(f_n\) and
the quadratic momentum sector \(h_n\) have to be considered. The
argument applies to both families because it only uses the estimates
on their Taylor coefficients established in
Theorem~\ref{ThmCoefficientHierarchy}.

We start from the following ansatz for the two-point functions
\begin{equation}\label{EqAnsatzF2}
f_2(\mu)
=
\sum_{n\ge1}
b_n\frac{x_n^{\,n-1}}{1+x_n^n},
\qquad
x_n=n\mu,
\end{equation}
and
\begin{equation}\label{EqAnsatzH2}
h_2(\mu)
=
\sum_{n\ge1}
d_n\frac{x_n^{\,n-1}}{1+x_n^n}.
\end{equation}
Once these functions have been constructed, the higher-point functions
are determined recursively by
\eqref{EqSequenceFnMu}--\eqref{EqSequenceHnMu}. Expanding formally in powers of \(\mu\), we write
\[
f_2(\mu)
=
\sum_{i\ge0}f_{2,i}\mu^i,
\qquad
h_2(\mu)
=
\sum_{i\ge0}h_{2,i}\mu^i.
\]
The coefficients are related to the sequences \((b_n)_{n\ge0}\) and
\((d_n)_{n\ge0}\) through
\begin{equation}\label{EqTaylorCoefficientsF2}
f_{2,i}
=
(i+1)^i
\sum_{\rho=1}^{i+1}
b_{\left\{\frac{i+1}{\rho}\right\}}
(-1)^{\rho-1}\frac{1}{\rho^i},
\qquad
i\in\mathbb N_0,
\end{equation}
and
\begin{equation}\label{EqTaylorCoefficientsH2}
h_{2,i}
=
(i+1)^i
\sum_{\rho=1}^{i+1}
d_{\left\{\frac{i+1}{\rho}\right\}}
(-1)^{\rho-1}\frac{1}{\rho^i},
\qquad
i\in\mathbb N_0.
\end{equation}
Here, \(b_0=d_0:=0\), and, for \(n,m\in\mathbb N\), we use the
convention
\begin{equation}\label{EqIntegerRatioConvention}
\left\{\frac{n}{m}\right\}
:=
\begin{cases}
\dfrac{n}{m},
& \text{if }\dfrac{n}{m}\in\mathbb N,\\[6pt]
0,
& \text{otherwise}.
\end{cases}
\end{equation}
In particular,
\begin{equation}\label{EqFirstTaylorCoefficientsF2}
f_{2,0}=b_1,
\qquad
f_{2,1}=2b_2-b_1,
\end{equation}
and
\begin{equation}\label{EqFirstTaylorCoefficientsH2}
h_{2,0}=d_1,
\qquad
h_{2,1}=2d_2-d_1.
\end{equation}
Isolating the term corresponding to \(\rho=1\) in
\eqref{EqTaylorCoefficientsF2} and
\eqref{EqTaylorCoefficientsH2}, we obtain
\begin{equation}\label{EqRecursionBi}
b_{i+1}
=
\frac{f_{2,i}}{(i+1)^i}
-
\sum_{\rho=2}^{i+1}
b_{\left\{\frac{i+1}{\rho}\right\}}
(-1)^{\rho-1}\frac{1}{\rho^i},
\end{equation}
and
\begin{equation}\label{EqRecursionDi}
d_{i+1}
=
\frac{h_{2,i}}{(i+1)^i}
-
\sum_{\rho=2}^{i+1}
d_{\left\{\frac{i+1}{\rho}\right\}}
(-1)^{\rho-1}\frac{1}{\rho^i}.
\end{equation}
For \(i\ge1\), set
\[
c_i
:=
K^{i+\frac12}
\frac{|i-3|!}
{((i-1)!)^{1/4}(i+1)^i}.
\]
The argument of
\cite[Proposition~3.3]{KopperWang2025} shows that a sequence
\((a_i)_{i\ge0}\), with \(a_0=0\), satisfying
\[
|a_{i+1}|
\le
c_i
+
\sum_{\rho=2}^{i+1}
\left|
a_{\left\{\frac{i+1}{\rho}\right\}}
\right|
\frac1{\rho^i}
\]
satisfies
\[
|a_i|
\le
C(K)\frac{i^2}{2^i},
\qquad i\ge1.
\]
On the other hand, Theorem~\ref{ThmCoefficientHierarchy} gives
\[
|f_{2,i}|,\ |h_{2,i}|
\le
K^{i+\frac12}
\frac{|i-3|!}{((i-1)!)^{1/4}},
\qquad i\ge2.
\]
Hence \eqref{EqRecursionBi} and \eqref{EqRecursionDi}, after treating
the finitely many low-order coefficients separately, yield
\begin{equation}\label{EqBoundsBiDi}
|b_i|+|d_i|
\le
C(K)\frac{i^2}{2^i},
\qquad i\ge1.
\end{equation}
Now we prove Proposition~\ref{PropSmoothTrivialSolution}.
\begin{proof}[Proof of Proposition~\ref{PropSmoothTrivialSolution}]
By Proposition~\ref{PropDerivativeRationalKernel}, for
\(l\in\mathbb N_0\), \(n\ge l+1\), and \(\mu\ge0\),
\begin{equation}\label{EqDerivativeBoundKernel}
\left|
\partial_\mu^l
\frac{x_n^{\,n-1}}{1+x_n^n}
\right|
\le
C_l n^{2l}
\frac{x_n^{\,n-l-1}}{1+x_n^n}.
\end{equation}
For \(0\le l<n-1\), set
\[
G_{n,l}(\mu)
:=
\frac{(n\mu)^{n-l-1}}{1+(n\mu)^n},
\qquad
\mu\ge0.
\]
This function attains its maximum at
\[
\widetilde{\mu}
=
\frac1n
\left(
\frac{n}{l+1}-1
\right)^{1/n},
\]
and
\begin{equation}\label{EqMaximumKernel}
\|G_{n,l}\|_\infty
=
\frac{l+1}{n}
\left(
\frac{n}{l+1}-1
\right)^{1-\frac{l+1}{n}}
\le1.
\end{equation}
Hence, for every fixed \(l\), the \(n\)-th terms in the series
defining \(\partial_\mu^l f_2\) and \(\partial_\mu^l h_2\) are bounded,
for \(n>l+1\), by
\[
C(K)C_l\frac{n^{2l+2}}{2^n}.
\]
The finitely many remaining terms are bounded separately. Since
\[
\sum_{n\ge1}\frac{n^{2l+2}}{2^n}<\infty,
\]
\eqref{EqBoundsBiDi} implies that
\eqref{EqAnsatzF2}--\eqref{EqAnsatzH2}, together with all their
termwise derivatives, converge uniformly on \(\mathbb{R}^+\). Thus $
f_2$ and $h_2$ are in $ C^\infty(\mathbb{R}^+)$. For every fixed \(m\ge1\) and \(l\in\mathbb N_0\),
\[
\lim_{\mu\to+\infty}
\partial_\mu^l
\frac{(m\mu)^{m-1}}{1+(m\mu)^m}
=
0.
\]
The preceding summable majorant is independent of \(\mu\), and
dominated convergence therefore yields
\[
\lim_{\mu_{\max}\to+\infty}
\partial_\mu^l f_2(\mu_{\max})
=
\lim_{\mu_{\max}\to+\infty}
\partial_\mu^l h_2(\mu_{\max})
=
0,
\qquad
l\in\mathbb N_0,
\]
and this proves \eqref{EqVanishingTwoPointSector}.

It remains to construct the higher-point functions. We proceed by
induction on the number of external legs $n$, with the induction
hypothesis imposed simultaneously for all derivative orders. The
two-point sector established above provides the initial step.
Assume that the functions \(f_m\) and \(h_m\) have been constructed
for every even \(m\le n\), that they are smooth, and that
\[
\lim_{\mu_{\max}\to+\infty}
\partial_\mu^l f_m(\mu_{\max})
=
\lim_{\mu_{\max}\to+\infty}
\partial_\mu^l h_m(\mu_{\max})
=
0,
\qquad l\in\mathbb N_0.
\]
The recursion \eqref{EqSequenceHnMu} defines \(h_{n+2}\) in terms of
the already constructed lower-point sectors. Since these functions
are smooth, so is \(h_{n+2}\). Differentiating
\eqref{EqSequenceHnMu} \(l\) times gives
\begin{multline}\label{EqSequenceHnMuDerivative}
\partial_\mu^l h_{n+2}
=
\frac{1}{n(n+1)}
\sum_{\substack{n_1+n_2=n\\ n_1,n_2\ge3}}
\;
\sum_{\substack{l_1+l_2=l\\ l_1,l_2\ge0}}
\binom{l}{l_1}
\Biggl[
n_1
\left(
1+\frac{n_2}{n-1}
\right)
\partial_\mu^{l_1}h_{n_1+1}\,
\partial_\mu^{l_2}f_{n_2+1}
\\
+
n_2
\left(
1+\frac{n_1}{n-1}
\right)
\partial_\mu^{l_1}f_{n_1+1}\,
\partial_\mu^{l_2}h_{n_2+1}
\Biggr]
\\
-
\frac{1}{n(n-1)(n+1)}
\sum_{\substack{n_1+n_2=n\\ n_1,n_2\ge3}}
\;
\sum_{\substack{l_1+l_2=l\\ l_1,l_2\ge0}}
\binom{l}{l_1}
n_1n_2\,
\partial_\mu^{l_1}f_{n_1+1}\,
\partial_\mu^{l_2}f_{n_2+1}
\\
+
\frac{4}{n(n+1)}
\sum_{\substack{l_1+l_2=l\\ l_1,l_2\ge0}}
\binom{l}{l_1}
\partial_\mu^{l_1}f_n\,
\partial_\mu^{l_2}h_2
-
\frac{2}{n(n+1)}
\sum_{\substack{l_1+l_2=l\\ l_1,l_2\ge0}}
\binom{l}{l_1}
\partial_\mu^{l_1}f_n\,
\partial_\mu^{l_2}f_2
\\
+
\frac{2}{n+1}
\sum_{\substack{l_1+l_2=l\\ l_1,l_2\ge0}}
\binom{l}{l_1}
\partial_\mu^{l_1}h_n\,
\partial_\mu^{l_2}f_2
+
\frac{2}{n(n+1)}
\partial_\mu^{l+1}h_n.
\end{multline}
Every function appearing on the right-hand side belongs to a
lower-point sector. The induction hypothesis therefore implies that,
at \(\mu=\mu_{\max}\), each of the corresponding derivatives vanishes
as \(\mu_{\max}\to+\infty\). It follows from
\eqref{EqSequenceHnMuDerivative} that
\[
\lim_{\mu_{\max}\to+\infty}
\partial_\mu^l h_{n+2}(\mu_{\max})
=
0,
\qquad l\in\mathbb N_0.
\]
The recursion \eqref{EqSequenceFnMu} now defines \(f_{n+2}\), since
\(h_{n+2}\) has already been constructed. Differentiating \(l\) times
gives
\begin{multline}\label{EqSequenceFnMuDerivative}
\partial_\mu^l f_{n+2}
+
4\partial_\mu^l h_{n+2}
=
\frac{1}{n+1}
\sum_{\substack{n_1+n_2=n\\ n_1,n_2\ge3}}
\;
\sum_{\substack{l_1+l_2=l\\ l_1,l_2\ge0}}
\binom{l}{l_1}
\partial_\mu^{l_1}f_{n_1+1}\,
\partial_\mu^{l_2}f_{n_2+1}
\\
+
\frac{2}{n+1}
\sum_{\substack{l_1+l_2=l\\ l_1,l_2\ge0}}
\binom{l}{l_1}
\partial_\mu^{l_1}f_2\,
\partial_\mu^{l_2}f_n
+
\frac{1}{n+1}
\left(
1-\frac4n
\right)
\partial_\mu^l f_n
+
\frac{2}{n(n+1)}
\partial_\mu^{l+1}f_n.
\end{multline}
All terms on the right-hand side involve lower-point sectors and
therefore vanish at \(\mu=\mu_{\max}\) by the induction hypothesis.
Together with the result just obtained for \(h_{n+2}\), this gives
\[
\lim_{\mu_{\max}\to+\infty}
\partial_\mu^l f_{n+2}(\mu_{\max})
=
0,
\qquad l\in\mathbb N_0.
\]
Thus \(f_{n+2}\) and \(h_{n+2}\) have the required properties. This completes the
induction, and the proof follows.
\end{proof}
We conclude by combining the preceding results to prove the central theorem \ref{ThmSecondOrderTriviality} of this paper that gives the
triviality of the solutions to the second-order mean-field flow
equations.
\begin{proof}
Fix the boundary data
\[
0<c_4<\infty,
\qquad
|c_2|<\infty,
\qquad
|c_2'|<\infty.
\]
Since these quantities are finite, \(K>1\) may be chosen sufficiently
large so that the initial bounds
\eqref{EqInitialBounds0} and all the hypotheses of the mapping
propositions are satisfied.

The compatibility relations derived in
Section~\ref{SecDiscreteHierarchy}, together with
Theorem~\ref{ThmCoefficientHierarchy}, then yield a solution of the
discrete coefficient hierarchy satisfying
\[
f_{2,i},h_{2,i}\in\mathcal S_2^{(i)}(K),
\qquad
g_{n,i},k_{n,i}\in\mathcal S_n^{(i)}(K),
\]
for every \(i\in\mathbb N_0\) and every even \(n\ge4\).

Proposition~\ref{PropSmoothTrivialSolution} reconstructs from these
coefficients smooth solutions \(f_n,h_n\) of the dimensionless
mean-field hierarchy satisfying the prescribed boundary conditions.
Moreover, for every even \(n\ge2\) and every \(l\in\mathbb N_0\),
\[
\lim_{\mu_{\max}\to+\infty}
\partial_\mu^l f_n(\mu_{\max})
=
\lim_{\mu_{\max}\to+\infty}
\partial_\mu^l h_n(\mu_{\max})
=
0.
\]
Since \(\mu_{\max}\to+\infty\) is equivalent to
\(\alpha_0\to0\), this is precisely the ultraviolet limit.

Finally, remembering the rescaling \eqref{EqRescalingAB}, which relates the dimensionless families \(f_n,h_n\) to the mean-field coefficients \(A_n,B_n\) yields
the corresponding solution of the second-order mean-field flow
hierarchy. The momentum-independent and quadratic momentum sectors
therefore vanish in the ultraviolet limit. This proves the claim.
\end{proof}
\section*{Discussion}

In this paper, we have introduced and analysed a second-order
mean-field reduction of the Wilson--Polchinski hierarchy for the
four-dimensional Euclidean \(\phi^4\) theory. On the symmetric momentum
configuration
\[
(p,-p,\ldots,p,-p),
\]
the connected amputated Green functions were decomposed into a
momentum-independent component, a quadratic momentum component, and a
complementary remainder. The first two components form a closed
hierarchy which retains, in particular, the mass, coupling, and
wave-function sectors of the theory.

After passing to dimensionless variables, we analysed this hierarchy
through its Taylor coefficients. The weighted sequence spaces
introduced above provide uniform control of the linear and bilinear
operators entering the coefficient recursion. This yields a
non-perturbative construction of the coefficient hierarchy, with
bounds uniform in both the number of external legs and the coefficient
order. An explicit construction of the two-point functions then gives
smooth solutions of the functional hierarchy satisfying, for every
even \(n\ge2\) and every \(l\in\mathbb N_0\),
\[
\lim_{\mu_{\max}\to+\infty}
\partial_\mu^l f_n(\mu_{\max})
=
\lim_{\mu_{\max}\to+\infty}
\partial_\mu^l h_n(\mu_{\max})
=
0.
\]
Thus both the momentum-independent and quadratic momentum sectors
approach the Gaussian fixed point in the ultraviolet limit. In
particular, ultraviolet triviality holds for the second-order
mean-field hierarchy, including the quadratic two-point component
associated with wave-function renormalization.

The present result does not address the ultraviolet behaviour of the
complementary remainder \(R_n\). Although \(R_n\) satisfies an exact
flow equation, its structure is considerably richer than that of the
closed second-order mean-field hierarchy, and no properties of this remainder are established here. A natural continuation of the present
work is instead to study the perturbative expansion of \(R_n\)
together with those of the mean-field coefficients \(A_n\) and \(B_n\),
and to derive quantitative bounds relating these three sectors. Such
estimates would provide a first step towards understanding whether the
triviality established here for the second-order mean-field hierarchy can be lifted
to the full restricted correlators. In conjunction with suitable
summability properties, they may also provide a non-perturbative
reconstruction from the corresponding perturbative expansions.

A second direction concerns the infrared regularization. Kopper and
Wang showed in the mean-field setting that the flow-equation analysis
can be carried out with a physical mass in place of the auxiliary
infrared cutoff \cite{KopperWang2025}. The close structural relation
between their hierarchy and the system obtained here suggests that
the same strategy should extend to the second-order mean-field
hierarchy. Establishing this extension would provide a natural
physical refinement of the present construction.

Finally, we expect the smooth trivial solution constructed above to be
unique within the class determined by the prescribed boundary
conditions and coefficient bounds. The recursive structure of the
hierarchy suggests a connection with the uniqueness argument of
\cite{KopperWang2025}, but uniqueness is not required for the existence
and triviality results proved here. Its analysis, together with the
perturbative study of the remainder, is left for future work.
\section*{Statements and Declarations}

\subsection*{Funding}
The author received no funding for this work.

\subsection*{Conflict of Interest}
The author declares that she has no competing interests.
\subsection*{Data Availability}
This article is a theoretical study. No datasets were generated, analysed, or used during the current study.

\newpage

\appendix
\section*{Appendix: Algebraic estimates}\label{appendix}
\noindent We shall repeatedly use the following consequence of Gautschi's
inequality; see \cite[Eq.~(5.6.4)]{NISTHandbook}.

\begin{lemma}\label{LemGammaRatio}
For every \(x\ge1\) and every \(\delta\ge0\),
\begin{equation}\label{EqGammaRatioGeneral}
\frac{\Gamma(x)}{\Gamma(x+\delta)}
\le
2x^{-\delta}.
\end{equation}
\end{lemma}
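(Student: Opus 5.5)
The plan is to reduce to Gautschi's inequality for a fractional shift and treat the integer part of the shift by the functional equation \eqref{EqGammaFunctionalEquation}. Fix $x\ge1$ and $\delta\ge0$. Write $\delta=m+s$ with $m\in\mathbb N_0$ and $s\in[0,1)$. It is equivalent to prove the lower bound
\[
\frac{\Gamma(x+\delta)}{\Gamma(x)}\ge \tfrac12\,x^{\delta}.
\]

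\emph{Step 1: the integer part of the shift.} By the iterated form \eqref{EqGammaIterated}, applied at the point $x+s>0$,
\[
\frac{\Gamma(x+\delta)}{\Gamma(x)}
=\frac{\Gamma(x+s)}{\Gamma(x)}\prod_{k=0}^{m-1}(x+s+k).
\]
Each factor of the product satisfies $x+s+k\ge x$, so the product is at least $x^m$. This is the only place where the integer part of $\delta$ enters, and it costs no constant. It therefore suffices to prove $\Gamma(x+s)/\Gamma(x)\ge \tfrac12 x^{s}$ for $s\in[0,1)$.

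\emph{Step 2: the fractional part via Gautschi.} Gautschi's inequality \cite[Eq.~(5.6.4)]{NISTHandbook} gives, for $x>0$ and $s\in(0,1)$,
\[
\frac{\Gamma(x+1)}{\Gamma(x+s)}<(x+1)^{1-s}.
\]
The case $s=0$ is trivial. Using $\Gamma(x+1)=x\Gamma(x)$, this becomes
\[
\frac{\Gamma(x+s)}{\Gamma(x)}
>\frac{x}{(x+1)^{1-s}}
=x^{s}\Bigl(\frac{x}{x+1}\Bigr)^{1-s}.
\]
Since $x\ge1$ we have $x/(x+1)\ge\tfrac12$. Because $1-s\in(0,1]$, it follows that $(x/(x+1))^{1-s}\ge(1/2)^{1-s}\ge\tfrac12$.

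\emph{Step 3: combine.} Multiplying the bounds of Steps 1 and 2 gives $\Gamma(x+\delta)/\Gamma(x)\ge\tfrac12 x^{m+s}=\tfrac12x^\delta$. Inverting yields \eqref{EqGammaRatioGeneral}.

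The only point requiring care is Step 2. There the hypothesis $x\ge1$ is exactly what makes the loss $(x/(x+1))^{1-s}$ bounded below by the absolute constant $\tfrac12$. For $x$ near $0$ this loss degenerates, which explains the restriction $x\ge1$ in the statement. The remaining steps are routine.
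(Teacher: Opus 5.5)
Your proof is correct and is essentially the paper's argument: you split $\delta=m+s$, absorb the integer part with the Gamma recurrence at $x+s$ (each factor $x+s+k\ge x$), and handle the fractional part with Gautschi's inequality, using $x\ge1$ to get $\bigl((x+1)/x\bigr)^{1-s}\le 2$. The only difference is cosmetic: you prove a lower bound on $\Gamma(x+\delta)/\Gamma(x)$ instead of an upper bound on its reciprocal.
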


\begin{proof}
Write
\[
\delta=m+s,
\qquad
m\in\mathbb N_0,
\qquad
0\le s<1.
\]
If \(s=0\), the Gamma recurrence gives
\[
\frac{\Gamma(x)}{\Gamma(x+m)}
=
\prod_{r=0}^{m-1}\frac{1}{x+r}
\le
x^{-m}.
\]
If \(0<s<1\), Gautschi's inequality
\cite[Eq.~(5.6.4)]{NISTHandbook} yields
\[
\frac{\Gamma(x)}{\Gamma(x+s)}
<
\frac{(x+1)^{1-s}}{x}
\le
2x^{-s},
\qquad x\ge1.
\]
Consequently,
\[
\frac{\Gamma(x)}{\Gamma(x+m+s)}
=
\frac{\Gamma(x)}{\Gamma(x+s)}
\prod_{r=0}^{m-1}\frac{1}{x+s+r}
\le
2x^{-s}x^{-m}
=
2x^{-\delta}.
\]
\end{proof}
\begin{lemma}[Kopper--Wang {\cite[Lemma~3.4]{KopperWang2025}}]
\label{LemKopperWangFactorialSum}
Let \(n_1,n_2\in2\mathbb N\setminus\{2\}\),
\(k,a\in\mathbb N_0\), and \(l\in(0,1)\). Assume that
\begin{equation}\label{EqKopperWangConditions}
\max\left\{
3-\frac{n_1}{4},
3-\frac{n_2}{4}
\right\}
\le
a
\le
\frac{k+2}{2}.
\end{equation}
Define
\begin{equation}\label{EqDefinitionF}
F(n_1,n_2,k,a,a,l)
:=
\sum_{\nu=a}^{k+2-a}
\frac{
\left(
\frac{n_1}{4}+\nu-3
\right)!
\left(
\frac{n_2}{4}+k-\nu-1
\right)!
}{
\bigl[\nu!(k+2-\nu)!\bigr]^l
}.
\end{equation}
Then
\begin{equation}\label{EqKopperWangFactorialEstimate}
F(n_1,n_2,k,a,a,l)
\le
\bigl[a!(k+2-a)!\bigr]^{1-l}
\frac{1}{
\frac{n_1+n_2}{4}+2a-5
}
\frac{
\left(
\frac{n_1+n_2}{4}+k-3
\right)!
}{
(k+2-2a)!
}.
\end{equation}
\end{lemma}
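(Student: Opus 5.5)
We would prove the estimate of Lemma~\ref{LemKopperWangFactorialSum} by induction on \(k\), following the combinatorial argument of \cite{KopperWang2025}. The first step removes the fractional exponent \(l\). For \(a\le\nu\le k+2-a\), the product \(\nu!(k+2-\nu)!\) is log-convex in \(\nu\), so it is maximised at the endpoints of the range. Hence
\[
\bigl[\nu!(k+2-\nu)!\bigr]^{1-l}\le\bigl[a!(k+2-a)!\bigr]^{1-l},
\]
and therefore
\[
\frac{1}{[\nu!(k+2-\nu)!]^l}
\le
\frac{[a!(k+2-a)!]^{1-l}}{\nu!(k+2-\nu)!}.
\]
It then suffices to prove the case \(l=1\) with the prefactor pulled out:
\[
\sum_{\nu=a}^{k+2-a}
\frac{\left(\frac{n_1}{4}+\nu-3\right)!\left(\frac{n_2}{4}+k-\nu-1\right)!}{\nu!(k+2-\nu)!}
\le
\frac{1}{\frac{n_1+n_2}{4}+2a-5}
\frac{\left(\frac{n_1+n_2}{4}+k-3\right)!}{(k+2-2a)!}.
\]

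The second step writes each ratio through Beta functions. Set \(x=\frac{n_1}{4}-2\) and \(y=\frac{n_2}{4}-2\); the hypothesis \eqref{EqKopperWangConditions} guarantees that the Gamma arguments stay in the admissible range. Then each ratio has the form
\[
\frac{\Gamma(x+\nu)}{\Gamma(\nu+1)},
\]
which is, up to the factor \(1/\Gamma(x)\) and a shift in the index, a rising factorial. We would then compare the sum with a Chu--Vandermonde convolution of rising factorials,
\[
\sum_\nu \binom{x+\nu-1}{\nu}\binom{y+m-\nu-1}{m-\nu}=\binom{x+y+m-1}{m}.
\]
This identity, applied over the truncated range \(\nu\in[a,k+2-a]\) and with the shift by \(a\) absorbed, produces \(\Gamma\bigl(\frac{n_1+n_2}{4}+k-3\bigr)/(k+2-2a)!\).

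The third step accounts for the leftover factor \(1/\bigl(\frac{n_1+n_2}{4}+2a-5\bigr)\). It comes from the mismatch in Gamma arguments between the left and right sides, namely a single missing unit, which is extracted with \eqref{EqGammaFunctionalEquation}. Alternatively, we would run an induction on \(k\) in steps of \(2\), peeling off the two endpoint terms \(\nu=a\) and \(\nu=k+2-a\) and applying the induction hypothesis with \(a\) replaced by \(a+1\).

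The main obstacle is the exact bookkeeping of the index shifts. The truncated sum starting at \(a\) must dominate the full Vandermonde-type sum with a loss of exactly one linear factor, and this has to be checked carefully at the boundary cases \(a=3-\frac{n_i}{4}\). Should the direct identity fail there, we would first try the telescoping induction in \(k\), verifying the base cases \(k+2=2a\) and \(k+2=2a+1\), where the sum has only one or two terms, by explicit computation.
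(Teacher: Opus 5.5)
Your reduction to $l=1$ is correct: $\nu\mapsto\log\bigl(\nu!(k+2-\nu)!\bigr)$ is convex and symmetric about $\frac{k+2}{2}$, so pulling out $[a!(k+2-a)!]^{1-l}$ is legitimate. Chu--Vandermonde is also the right tool afterwards. The gap is in getting from Vandermonde to the stated right-hand side.

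First, you cannot factor out $\Gamma(x)$ with $x=\frac{n_1}{4}-2$, because $x\in\{-1,-\frac12,0\}$ for $n_1\in\{4,6,8\}$. You must shift first: set $\nu=a+\mu$, $M:=k+2-2a$, $p:=\frac{n_1}{4}-2+a$, $q:=\frac{n_2}{4}-2+a$. The reduced sum then becomes $\sum_{\mu=0}^{M}\Gamma(p+\mu)\Gamma(q+M-\mu)/[(a+\mu)!\,(a+M-\mu)!]$. This is not a Vandermonde sum. You need the lossy bound $(a+\mu)!\ge\mu!$, so the truncated sum is dominated \emph{by}, not dominating, a full Vandermonde sum, which equals
\[
\frac{\Gamma(p)\Gamma(q)}{\Gamma(p+q)}\,\frac{\Gamma(p+q+M)}{M!}
=B(p,q)\,\frac{\bigl(\frac{n_1+n_2}{4}+k-3\bigr)!}{(k+2-2a)!}.
\]

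Your stated outcome $\Gamma(\frac{n_1+n_2}{4}+k-3)/(k+2-2a)!$ drops the Beta prefactor and is false as a bound. For $n_1=n_2=12$, $a=0$, the reduced sum is $\sum_{\nu=0}^{k+2}1=k+3$, whereas your expression equals $1$. Your account of the factor $1/(\frac{n_1+n_2}{4}+2a-5)$ is therefore also wrong. Extracting ``one missing unit'' from $\Gamma(\frac{n_1+n_2}{4}+k-3)$ yields $1/(\frac{n_1+n_2}{4}+k-3)$, which depends on $k$ rather than $a$. It would give a strictly stronger inequality that fails in the same example, where the reduced inequality of the lemma is an equality.

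The missing idea is the estimate $B(p,q)\le\frac{1}{p+q-1}$, i.e.\ $\Gamma(p)\Gamma(q)\le\Gamma(p+q-1)$ for $p,q\ge1$. This holds, e.g., because $t\mapsto\log\Gamma(p+t)-\log\Gamma(1+t)$ is nondecreasing when $p\ge1$. Combined with $p+q-1=\frac{n_1+n_2}{4}+2a-5$, it gives exactly the stated factor. This is precisely where the hypothesis $a\ge 3-\frac{n_i}{4}$, equivalently $p,q\ge1$, is used. Your proposal invokes that hypothesis only to keep the factorial arguments admissible.

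Your fallback induction does not close either. The right-hand side increases when $a$ is replaced by $a+1$: for $n_1=n_2=12$, $k=2$, in the reduced form, it is $5$ for $a=0$ and $20$ for $a=1$. So the induction hypothesis for the inner sum over $\nu\in[a+1,k+1-a]$, which is the $(k,a+1)$ instance, is already weaker than the target for the full sum, and adding the endpoint terms cannot help.

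With the shift, the inequality $(a+\mu)!\ge\mu!$, and the Beta bound inserted, your Vandermonde route does give a complete proof.
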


The preceding estimate yields the two convolution bounds used in the
proof of Proposition~\ref{LemOperatorBnPlusTwo}.

\begin{corollary}\label{CorKopperWangEstimates}
Let \(n\ge6\) be even. There exist constants \(C_0,C_1>0\),
independent of \(n\) and \(k\in\mathbb N_0\), such that
\begin{align}
\frac{n}{n+2k}
\sum_{\substack{
n_1+n_2=n+2\\
4\le n_1\le10\\
n_1\le n_2
}}
F\left(
n_1,n_2,k,2,2,\frac14
\right)
&\le
C_1
\frac{
\left(\frac n4+k-1\right)!
}{
[(k+2)!]^{1/4}
},
\label{EqF22}
\\
\frac{n}{n+2k}
\sum_{\substack{
n_1+n_2=n+2\\
12\le n_1\le n_2
}}
F\left(
n_1,n_2,k,0,0,\frac14
\right)
&\le
C_0
\frac{
\left(\frac n4+k-1\right)!
}{
[(k+2)!]^{1/4}
}.
\label{EqF00}
\end{align}
Empty sums are understood to vanish.
\end{corollary}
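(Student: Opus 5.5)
Both estimates will be derived from Lemma~\ref{LemKopperWangFactorialSum} with $l=\tfrac14$, after which the sum over the decompositions $n+2=n_1+n_2$ is controlled.

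\emph{Proof of \eqref{EqF00}.} Here $n_1,n_2\ge12$, so $3-n_i/4\le0$ and $a=0$ is admissible for every $k\ge0$. The lemma gives
\[
F(n_1,n_2,k,0,0,\tfrac14)\le\frac{[(k+2)!]^{3/4}}{(k+2)!}\,\frac{\left(\frac{n+2}{4}+k-3\right)!}{\frac{n+2}{4}-5}
=\frac{\left(\frac n4+k-\frac52\right)!}{[(k+2)!]^{1/4}\left(\frac{n}{4}-\frac92\right)}.
\]
This bound is independent of $n_1$. The number of admissible $n_1$ is at most $n/4$, so summing gives at most a factor $n/4$. It remains to check that
\[
\frac{n}{n+2k}\cdot\frac{n}{4}\cdot\frac{1}{\frac n4-\frac92}\cdot\frac{\Gamma(\frac n4+k-\frac32)}{\Gamma(\frac n4+k)}
\]
is bounded. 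The factor $\frac{n}{4}\big/\!\left(\frac n4-\frac92\right)$ is bounded, since the sum is nonempty only when $n\ge22$. By Lemma~\ref{LemGammaRatio}, the Gamma ratio is at most $2(\frac n4+k-\frac32)^{-3/2}$. Together with $\frac{n}{n+2k}\le1$, this gives a uniform bound, which defines $C_0$.

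\emph{Proof of \eqref{EqF22}.} Here $n_1\in\{4,6,8,10\}$. The condition $a=2\ge3-n_1/4$ always holds, and $2\ge 3-n_2/4$ also holds because $n_2\ge n_1\ge4$. The remaining condition $2\le (k+2)/2$ requires $k\ge2$. For $k\in\{0,1\}$, $F(\cdot,\cdot,k,2,2,\frac14)$ is an empty sum, since its range runs from $\nu=2$ to $k$, so these cases are trivial. For $k\ge2$, the lemma gives
\[
F\le [2(k)!]^{3/4}\frac{1}{\frac{n+2}{4}-1}\frac{\left(\frac n4+k-\frac52\right)!}{(k-2)!}.
\]
The next step is to compare this with the target $\Gamma(\frac n4+k)/[(k+2)!]^{1/4}$. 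The quotient is, up to constants,
\[
\frac{n}{n+2k}\cdot\frac{1}{n}\cdot\frac{(k!)^{3/4}[(k+2)!]^{1/4}}{(k-2)!}\cdot\frac{\Gamma(\frac n4+k-\frac32)}{\Gamma(\frac n4+k)}
\lesssim \frac{k^{2}}{(n+2k)\,(\frac n4+k)^{3/2}},
\]
which is bounded uniformly in $n$ and $k$. There are only four values of $n_1$, so the sum over them adds a factor of at most $4$, giving $C_1$.

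\emph{Main obstacle.} I expect the bookkeeping of the factorial powers $k^2$ against $(n/4+k)^{3/2}$ and the prefactor $n/(n+2k)$ to be the delicate step. The crude bound $(k!)^{3/4}[(k+2)!]^{1/4}/(k-2)!\lesssim k^{2}$ must be checked carefully. The denominator $n+2k$ together with the $3/2$ Gamma gap then supplies $k^{-5/2}$ decay, which absorbs it. If this turns out to be too tight, I would instead use the sharper exponent $l=\tfrac14$ structure directly: keep $[a!(k+2-a)!]^{1-l}$ with $a=2$ rather than bounding $a!\le k!$.
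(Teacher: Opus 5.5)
Your proposal follows the paper's proof essentially step for step. You apply Lemma~\ref{LemKopperWangFactorialSum} with $a=2$ to the at most four values $n_1\in\{4,6,8,10\}$ (the sum being empty for $k\le1$), and with $a=0$ to $n_1\ge12$ (nonempty only for $n\ge22$, with at most $n/4$ terms and a bound independent of $n_1$). You then compare with $(\frac n4+k-1)!/[(k+2)!]^{1/4}$ via a Gamma-ratio estimate, exactly as the paper does.

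One correction: $(k!)^{3/4}[(k+2)!]^{1/4}/(k-2)!=k(k-1)[(k+1)(k+2)]^{1/4}\asymp k^{5/2}$, not $k^2$. This is harmless, since $k+2\le n+2k$ and $\frac n4+k\ge\frac14(n+2k)$ mean that the factor $(n+2k)^{-1}(\frac n4+k)^{-3/2}$ still absorbs it exactly, which is how the paper concludes.
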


\begin{proof}
We first consider \eqref{EqF22}. For \(k=0,1\), the sum defining
\(F(n_1,n_2,k,2,2,\frac14)\) is empty, so there is nothing to prove.
Let therefore \(k\ge2\). Since
\[
4\le n_1\le10,
\qquad
n_1\le n_2,
\]
the assumptions of Lemma~\ref{LemKopperWangFactorialSum} are satisfied
with \(a=2\) and \(l=\frac14\). Using \(n_1+n_2=n+2\), we obtain
\begin{equation}\label{EqProofF22Intermediate}
F\left(
n_1,n_2,k,2,2,\frac14
\right)
\le
(2k!)^{3/4}
\frac{1}{\frac n4-\frac12}
\frac{
\left(
\frac n4+k-\frac52
\right)!
}{
(k-2)!
}.
\end{equation}
The sum over \(n_1\) contains at most four terms. Set
\[
x:=\frac n4+k.
\]
By the standard Gamma-ratio estimate, there exists a universal
constant \(C>0\) such that
\[
\frac{
\Gamma\!\left(x-\frac32\right)
}{
\Gamma(x)
}
\le
C x^{-3/2}.
\]
After dividing the right-hand side of
\eqref{EqProofF22Intermediate} by the desired weight in
\eqref{EqF22}, it therefore remains to bound, up to a universal
constant,
\[
\frac{n}{(n+2k)(n-2)}
\frac{
(k!)^{3/4}((k+2)!)^{1/4}
}{
(k-2)!
}
x^{-3/2}.
\]
Since
\[
\frac{
(k!)^{3/4}((k+2)!)^{1/4}
}{
(k-2)!
}
=
k(k-1)\bigl((k+1)(k+2)\bigr)^{1/4}
\le
(k+2)^{5/2},
\]
while
\[
\frac{n}{n-2}\le\frac32,
\qquad
k+2\le n+2k,
\qquad
x=\frac n4+k\ge\frac14(n+2k),
\]
the preceding quantity is bounded uniformly in \(n\) and \(k\).
This proves \eqref{EqF22}.

We next prove \eqref{EqF00}. The sum is empty unless \(n\ge22\).
For \(n\ge22\), the assumptions of
Lemma~\ref{LemKopperWangFactorialSum} are satisfied with
\(a=0\) and \(l=\frac14\). Hence
\begin{equation}\label{EqProofF00Intermediate}
F\left(
n_1,n_2,k,0,0,\frac14
\right)
\le
\frac{4}{n-18}
\frac{
\left(
\frac n4+k-\frac52
\right)!
}{
[(k+2)!]^{1/4}
}.
\end{equation}
The number \(N_n\) of admissible values of \(n_1\) satisfies
\[
N_n\le\frac n4.
\]
Using again the Gamma-ratio estimate with
\(x=\frac n4+k\), we obtain
\[
\frac{
\left(
\frac n4+k-\frac52
\right)!
}{
\left(
\frac n4+k-1
\right)!
}
=
\frac{
\Gamma\!\left(x-\frac32\right)
}{
\Gamma(x)
}
\le
C x^{-3/2}.
\]
Consequently,
\[
\frac{n}{n+2k}
\frac{4N_n}{n-18}
\frac{
\left(
\frac n4+k-\frac52
\right)!
}{
\left(
\frac n4+k-1
\right)!
}
\le
C
\frac{n}{n+2k}
\frac{n}{n-18}
x^{-3/2}.
\]
The right-hand side is bounded uniformly for \(n\ge22\) and
\(k\in\mathbb N_0\). This proves \eqref{EqF00}.
\end{proof}

\newpage
\section*{References}
\bibliographystyle{abbrv}
\bibliography{aipsamp}

@book{NISTHandbook,
  editor    = {Olver, Frank W. J. and Lozier, Daniel W. and
               Boisvert, Ronald F. and Clark, Charles W.},
  title     = {NIST Handbook of Mathematical Functions},
  publisher = {Cambridge University Press},
  address   = {Cambridge},
  year      = {2010}
}

@article{WilsonKogut1974,
author  = {Wilson, Kenneth G. and Kogut, John},
title   = {The Renormalization Group and the $\varepsilon$ Expansion},
journal = {Physics Reports},
volume  = {12},
number  = {2},
pages   = {75--199},
year    = {1974},
doi     = {10.1016/0370-1573(74)90023-4}
}

@article{Aizenman1981,
author  = {Aizenman, Michael},
title   = {Proof of the Triviality of $\phi_d^4$ Field Theory and
Some Mean-Field Features of Ising Models for $d>4$},
journal = {Physical Review Letters},
volume  = {47},
number  = {1},
pages   = {1--4},
year    = {1981},
doi     = {10.1103/PhysRevLett.47.1}
}

@article{Frohlich1982,
author  = {Fr{\"o}hlich, J{"u}rg},
title   = {On the Triviality of $\lambda\phi_d^4$ Theories and the
Approach to the Critical Point in $d\geq4$ Dimensions},
journal = {Nuclear Physics B},
volume  = {200},
number  = {2},
pages   = {281--296},
year    = {1982},
doi     = {10.1016/0550-3213(82)90088-8}
}

@article{Polchinski1984,
author  = {Polchinski, Joseph},
title   = {Renormalization and Effective Lagrangians},
journal = {Nuclear Physics B},
volume  = {231},
number  = {2},
pages   = {269--295},
year    = {1984},
doi     = {10.1016/0550-3213(84)90287-6}
}

@article{Gallavotti1985,
author  = {Gallavotti, Giovanni},
title   = {Renormalization Theory and Ultraviolet Stability for
Scalar Fields via Renormalization Group Methods},
journal = {Reviews of Modern Physics},
volume  = {57},
number  = {2},
pages   = {471--562},
year    = {1985},
doi     = {10.1103/RevModPhys.57.471}
}

@article{AizenmanDuminilCopin2021,
author  = {Aizenman, Michael and Duminil-Copin, Hugo},
title   = {Marginal Triviality of the Scaling Limits of Critical
Four-Dimensional Ising and $\phi_4^4$ Models},
journal = {Annals of Mathematics},
volume  = {194},
number  = {1},
pages   = {163--235},
year    = {2021},
doi     = {10.4007/annals.2021.194.1.3}
}

@article{AizenmanDuminilCopin2024,
author  = {Aizenman, Michael and Duminil-Copin, Hugo},
title   = {Corrigendum: Marginal Triviality of the Scaling Limits
of Critical Four-Dimensional Ising and $\phi_4^4$ Models},
journal = {Annals of Mathematics},
volume  = {199},
number  = {1},
pages   = {479},
year    = {2024},
doi     = {10.4007/annals.2024.199.1.7}
}

@article{Kopper2022,
author  = {Kopper, Christoph},
title   = {Asymptotically Free Solutions of the Scalar Mean Field
Flow Equations},
journal = {Annales Henri Poincar{'e}},
volume  = {23},
number  = {10},
pages   = {3453--3492},
year    = {2022},
doi     = {10.1007/s00023-022-01194-w},
eprint  = {1912.08183},
archivePrefix = {arXiv},
primaryClass  = {math-ph}
}

@article{KopperWang2025,
author  = {Kopper, Christoph and Wang, Pierre},
title   = {Triviality Proof for Mean-Field $\phi^4$ Theories in
Four Dimensions},
journal = {Journal of Mathematical Physics},
volume  = {66},
number  = {6},
pages   = {062302},
year    = {2025},
doi     = {10.1063/5.0231068},
eprint  = {2407.01309},
archivePrefix = {arXiv},
primaryClass  = {math-ph}
}

@misc{KopperWangPerturbation,
author        = {Kopper, Christoph and Wang, Pierre},
title         = {Triviality versus Perturbation Theory:
An Analysis for Mean-Field $\phi^4$ Theory in
Four Dimensions},
year          = {2025},
eprint        = {2511.04509},
archivePrefix = {arXiv},
primaryClass  = {math-ph}
}

\end{document}